\newcommand{\M}{\mu}
\newcommand{\C}{\chi}
\newcommand{\N}{{\mathbb N}}
\def\Z{{\cal Z}}
\newcommand{\E}{{\mathbb E}} 
\newcommand{\R}{{\mathbb R}} 
\newcommand{\Ncl}{\mathcal{N}}
\newcommand{\hQ}{\widehat{Q}}
\newcommand{\w}{\omega}
\def\Z{{\cal Z}}
\def\eps{{\epsilon}}
\def\w{{\omega}}
\def\di{{\partial i}}
\def\da{{\partial a}}
\newtheorem{theorem}{Theorem}
\newtheorem{lemma}{Lemma}
\newtheorem{assumption}{Assumption}
\DeclarePairedDelimiterX{\norm}[1]\lVert\rVert{#1}
\newcommand{\eqdef}{=\vcentcolon}   
\newcommand{\dd}{\mathop{}\!\mathrm{d}}
\newcommand{\agbrs}[1]{\left\langle #1 \right\rangle}	
\newcommand{\rdbrs}[1]{\left( #1 \right)}				
\newcommand{\sqbrs}[1]{\left\lbrack #1 \right\rbrack}	
\newcommand{\rdbrsv}[2]{\left( #1 \,\middle|\, #2 \right)}				
\begin{document}
\title{Sampling with flows, diffusion and autoregressive neural networks: \texorpdfstring{\\}{}A spin-glass perspective}
\author{Davide Ghio}
\affiliation{IdePHICS laboratory, \'Ecole Polytechnique Fédérale de Lausanne (EPFL), CH-1015 Switzerland}
\author{Yatin Dandi}
\affiliation{IdePHICS laboratory, \'Ecole Polytechnique Fédérale de Lausanne (EPFL), CH-1015 Switzerland}
\affiliation{SPOC laboratory, \'Ecole Polytechnique Fédérale de Lausanne (EPFL), CH-1015 Switzerland}
\author{Florent Krzakala}
\affiliation{IdePHICS laboratory, \'Ecole Polytechnique Fédérale de Lausanne (EPFL), CH-1015 Switzerland}
\author{Lenka Zdeborová}
\affiliation{SPOC laboratory, \'Ecole Polytechnique Fédérale de Lausanne (EPFL), CH-1015 Switzerland}
\begin{abstract}
Recent years witnessed the development of powerful generative models based on flows, diffusion or autoregressive neural networks, achieving remarkable success in generating data from examples with applications in a broad range of areas. 
A theoretical analysis of the performance and understanding of the limitations of these methods remain, however, challenging. In this paper, we undertake a step in this direction by analysing the efficiency of sampling by these methods on a class of problems with a known probability distribution and comparing it with the sampling performance of more  traditional methods such as the Monte Carlo Markov chain and Langevin dynamics.   
We focus on a class of probability distribution widely studied in the statistical physics of disordered systems that relate to spin glasses, statistical inference and constraint satisfaction problems. 

We leverage the fact that sampling via flow-based, diffusion-based or autoregressive networks methods can be equivalently mapped to the analysis of a Bayes optimal denoising of a modified probability measure. Our findings demonstrate that these methods encounter difficulties in sampling stemming from the presence of a first-order phase transition along the algorithm's denoising path. 
Our conclusions go both ways: we identify regions of parameters where these methods are unable to sample efficiently, while that is possible using standard Monte Carlo or Langevin approaches. We also identify regions where the opposite happens: standard approaches are inefficient while the discussed generative methods work well. 
\end{abstract}
\maketitle
\section{Introduction}
The field of machine learning recently witnessed the development of powerful generative models able to produce new data-samples based on learning on datasets of existing samples. Among the most prominent ones achieving recent successes are flow-based models \cite{rezende2015variational,dinh2016density,chen2018neural,albergo2023building,lipman2022flow}, diffusion-based models \cite{sohl2015deep, ho2020denoising,song2019generative,song2020score} and generative autoregressive neural networks \cite{larochelle2011neural,van2016pixel,vaswani2017attention,graves2013generating}. These approaches are achieving remarkable success in diverse areas such as image generation \cite{ramesh2021zero}, language modelling \cite{brown2020language}, generation of molecules \cite{xu2021geodiff,trinquier2021efficient} or theoretical physics \cite{kanwar2020equivariant}. A theoretical understanding of the capabilities of these models, of their limitations and performance, remains, however, a challenge. A major aspect of these techniques is in the learning of the probability measure or its representation from examples. In this paper, we focus instead on a restricted setting where the probability distribution we aim to sample from is known beforehand. Our main goal is to contribute to setting the theoretical understanding of the capabilities and limitations of these powerful generative models. 

When applied to parametric probability distributions, the generative models are indeed designed with the goal to provide samples uniform at random from the distribution. Here, we study whether they are able to {\it efficiently} sample from types of probability distributions that are encountered in the study of mean-field spin glasses and related statistical inference problems \cite{gross1984simplest,mezard1987spin,mezard2009information,krzakala2007gibbs,zdeborova2016statistical,bandeira2018notes}. 

There are several benefits to studying this class of probability distributions. On the one hand, due to numerous studies in the statistical physics of disordered systems, we possess a comparatively good grasp of parameter regions where traditional sampling methods—like Monte Carlo sampling or Langevin dynamics—are effective and where they are not \cite{cugliandolo1993analytical,bouchaud1998out,ben2006cugliandolo,berthier2011theoretical,montanari2006rigorous,antenucci2019glassy}. On the other hand, the tools available for outlining the phase diagrams of these problems \cite{mezard1987spin,mezard2009information,donoho2010message,barbier2019optimal} turn out to be highly effective in analytically describing the performance of generative techniques such as flow-based, diffusion-based, or autoregressive networks as samplers for the respective probability measures.

The above-mentioned tools, along with their mathematically rigorous counterparts, have recently been applied to the analysis and design of sampling algorithms in mean-field spin glass models in the context of stochastic localization \cite{eldan2013thin,eldan2020taming,chen2022localization,el2022sampling,montanari2023posterior,montanari2023sampling}. This was later found to have a close relationship with diffusion-based models \cite{montanari2023sampling,montanari2023posterior}. In particular, \cite{el2022sampling,montanari2023posterior} showed how one can turn the message-passing denoising algorithms into samplers, a technique we shall use as well in the present study.
While we could not find similar studies for flow-based models, closely related work on autoregressive networks exists on the decimation of message-passing algorithms used for finding solutions of the random K-satisfiability problem in \cite{montanari2007solving,ricci2009cavity}. Here, we build on these works and bring the following contributions:
\begin{itemize}
\item Using the formalism of stochastic interpolants \cite{albergo2023building,albergo2023stochastic}, we analyse sampling with flow-based methods, which leads to Bayesian denoising with an additive white  Gaussian noise (AWGN). This turns out to be equivalent to what arises in the analysis of diffusion-based sampling and stochastic localization derived in \cite{el2022information,el2022sampling,montanari2023sampling}. 
\item In the case of autoregressive generative networks the analysis leads instead to a Bayesian denoising problem now correcting erased variables (the Binary Erasure Channel \cite{richardson2008modern}) as in \cite{ricci2009cavity}. 
\item Focusing on prototypical exactly solvable models where one can perform asymptotic analysis, we then study the phase diagrams of the corresponding Bayesian denoising problems as a function of the strength of the signal-to-noise ratio. 
\item We investigate the feasibility of sampling using these techniques. We posit that these methods are capable of efficient sampling, provided there is no range in noise amplitude exhibiting the metastability characteristic of first-order phase transitions. If such metastability exists, the denoising becomes computationally hard. 
\item We locate these metastable regions in prototypical models, specifically the Ising and spherical $p$-spin models, the bicoloring of random hypergraphs problem, and a matrix estimation problem with a sparse spike.
\item We also provide a GitHub repository with our numerical experiments in \cite{GITHUB}, \\ see \href{https://github.com/IdePHICS/DiffSamp}{https://github.com/IdePHICS/DiffSamp}.
\end{itemize}

\begin{figure}[t]
\centering \includegraphics[width=0.6\textwidth]{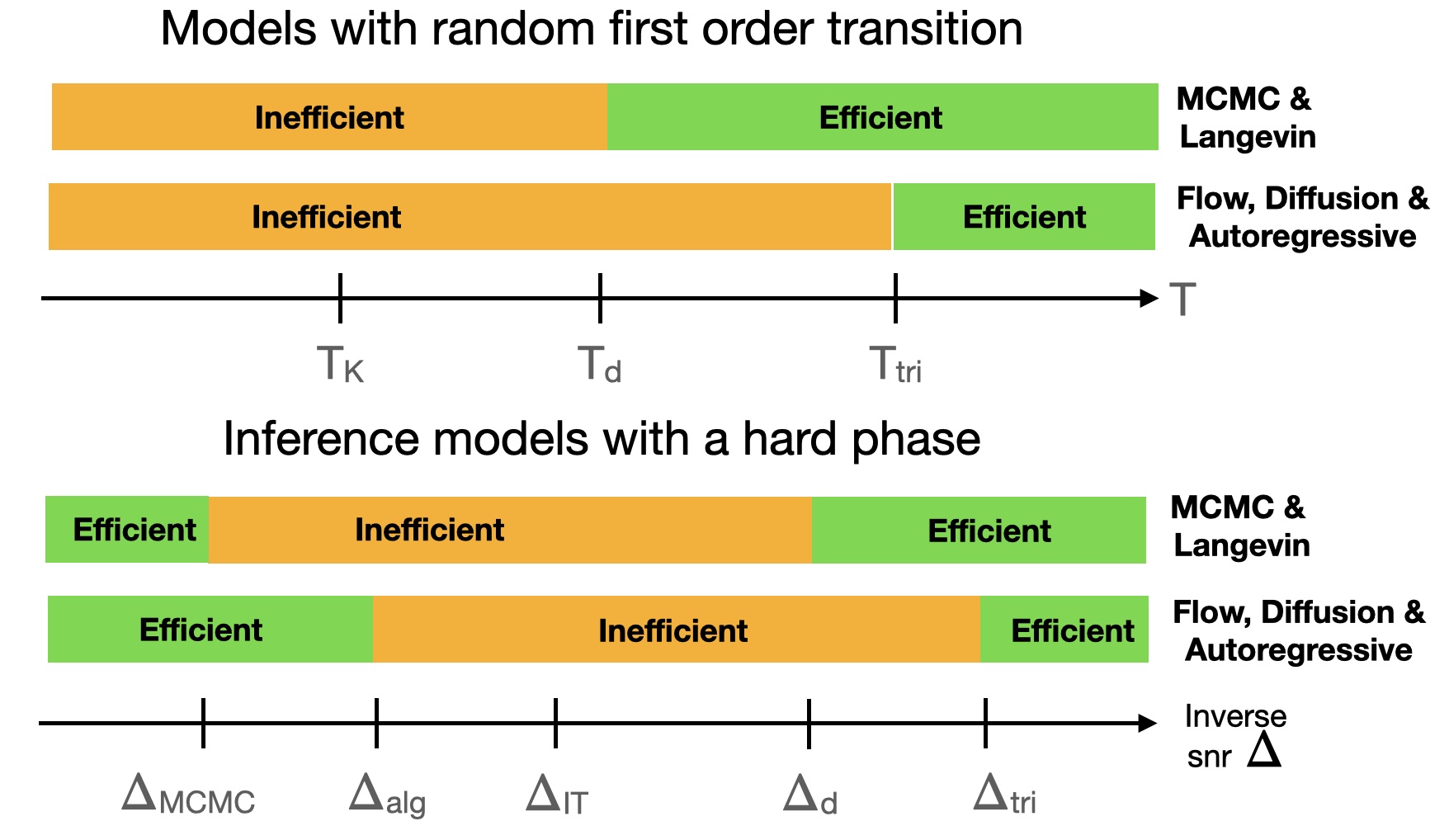}
    \caption{Schematic summary of the comparison of the efficiency of sampling with flow-based, diffusion-based and autoregressive methods versus Langevin or Monte-Carlo approaches in spin glass models with a random first-order transition (top), and in statistical inference models with a computationally hard phase (bottom). The computationally hard phase in inference problems appears for For $\Delta_{\rm alg} < \Delta < \Delta_{\rm IT}$ where efficient algorithms achieving close-to-optimal estimation error are not known and conjectured not to exist \cite{gamarnik2022disordered}. }
    \label{fig:mainfig}
\end{figure}
In terms of comparison between the flow-based, diffusion-based or autoregressive generative models with more traditional Monte Carlo Markov chains (MCMC) and Langevin sampling procedures, the following picture emerges (summarized in Fig.~\ref{fig:mainfig}) for two classes of models.

\begin{itemize}
\item Disordered models that exhibit a phase diagram of the random-first-order-theory (RFOT) type, also often called discontinuous one-step replica symmetry breaking \cite{gross1984simplest,biroli2012random}, are typical in the mean-field theory of glass transition \cite{berthier2011theoretical}, but they also appear in a variety of random constraint satisfaction problems \cite{krzakala2007gibbs}. For such models, there exists a so-called dynamical temperature $T_d$ such that Monte Carlo Markov chains or Langevin algorithms are predicted to sample efficiently for $T>T_d$ \cite{cugliandolo1993analytical,bouchaud1998out,ben2006cugliandolo,berthier2011theoretical,montanari2006rigorous} while the sampling problem is computationally hard for $T<T_d$. 
Recent work \cite{ciarella2023machine} showed empirically that the region $T<T_d$ is also hard when sampling with autoregressive neural networks.
Our analysis of the currently used flow-based, diffusion-based and autoregressive networks-based sampling procedures goes further and reveals analytically that they perform {\it worse} than the traditional techniques and that there is another temperature $T_{\rm tri}$, that depends on the detail of the method, such that for $T_d < T < T_{\rm tri}$ these generative methods do not sample efficiently while traditional procedures do. 
\item A second class of problems are the noisy random statistical inference problems with a hard phase and a statistical-to-computational gap \cite{zdeborova2016statistical,bandeira2018notes,bandeira2022franz,gamarnik2022disordered}. In such problems, a good statistical inference can only be performed below a critical value of the noise (or inverse signal-to-noise ratio) $\Delta_{\rm IT}$, at least with infinite computational power. It turns out, however, that the best algorithms we know for such problems (in particular message-passing ones) are only able to work below a second threshold $\Delta_{\rm alg}$ and fail to learn in the "hard" region $\Delta_{\rm alg} < \Delta < \Delta_{\rm IT}$. Existing literature predicts an even {\it more significant gap} to exist in those cases for MCMC or Langevin-based samplers \cite{antenucci2019glassy,mannelli2020marvels,angelini2023limits} down to a noise amplitude $\Delta_{\rm MCMC} < \Delta_{\rm alg}$. In this setting, however, it appears that flow-based, diffusion-based, and autoregressive network-based methods outperform the standard approaches and sample efficiently as soon as $\Delta<\Delta_{\rm alg}$.

\end{itemize}

\subsection{Further related works}
The present study of sampling algorithms owes a lot to recent lines of work. In particular, we shall follow the helpful approach of continuous-time flows based on stochastic interpolants \cite{albergo2023building,albergo2023stochastic} as the starting point. Additionally, our way of setting up the statistical physics equivalent problem for a continuous-time flow method closely follows the one pioneered recently in \cite{el2022information,el2022sampling} for stochastic localization and generalized in \cite{montanari2023posterior} for diffusion models. 

The difficulty in denoising we uncover turns out to be connected to an easy-hard transition arising in statistical inference problems \cite{zdeborova2016statistical} often called the statistical-to-computational gap \cite{bandeira2018notes,bandeira2022franz}. For these models, mean-field algorithms from spin glass theory, such as the Belief-Propagation (BP) equations \cite{mezard2009information} and the Approximate-Message-Passing (AMP) \cite{donoho2010message}, or the equivalent Thouless-Anderson-Palmer (TAP) \cite{thouless1977solution} approach, are believed to be among the most efficient one \cite{celentano2020estimation}. Using AMP for sampling via diffusion method (or via stochastic localization) was discussed recently in \cite{el2022sampling,montanari2023posterior}. The resulting algorithm turns out to have very close connections to the so-called reinforcement BP \cite{dall2008entropy}, and a direct connection exists between the analysis in the present paper and the reinforced and decimated version of message passing algorithms \cite{mezard2002analytic,chavas2005survey,biazzo2012performance,montanari2007solving}. 

Finally, we note that since we assume that the model is known, we are not discussing here the limitations of training denoisers models from a limited number of data points. This difficulty was discussed recently in \cite{biroli2023generative} in the context of statistical physics models.

\section{\label{sec:DampDiff}Sampling with flow and diffusion-based models}
We start by discussing continuous-time flow-based models \cite{chen2018neural,lipman2022flow,albergo2023building}. 
The goal is to sample a vector ${\bf x}_0 \in {\mathbb R}^N$ from a so-called target distribution $P_0({\bf x}_0)$. Continuous-time flow-based models \cite{chen2018neural, lipman2022flow,liu2022flow,albergo2023building} achieve this by constructing an ODE (flow) that continuously transforms Gaussian noise ${\bf z}\sim \mathcal{N}({\bf 0},\mathbb{I}_N)$ to a sample ${\bf x}_0\sim P_0$. One way to obtain such a flow is by inverting the time-evolving law of stochastic or deterministic processes bridging from a sample ${\bf x}_0\sim P_0$ to a Gaussian noise ${\bf z}\sim \mathcal{N}({\bf 0},\mathbb{I}_N)$. Such a continous-time reverse process underlies a wide range of flow-models \cite{lipman2022flow,albergo2023building} as well as the score-based diffusion models \cite{song2020score}. In practice, we would only observe samples from the distribution $P_0({\bf x}_0)$, whereas in this paper we aim to focus on the limitations of these procedures, we will thus assume that the distribution $P_0({\bf x}_0)$ is known as this can only render the task easier than when only samples from $P_0({\bf x}_0)$ are available. 

To present this method clearly, we find it convenient to use the formalism of stochastic linear interpolants introduced in \cite{albergo2023building, albergo2023stochastic}. Starting from a vector ${\bf x}_0$ sampled from $P_0$ and a Gaussian vector ${\bf z}$, we consider the process ---or \textit{one-sided stochastic interpolant}--- defined by
\begin{equation}\label{eq:x(t)}
    {\bf y}(t) = \alpha(t) {\bf x}_0 + \beta(t) {\bf z}\\, \quad t\in[0,1]\,,
\end{equation}
where the functions $\alpha$ and $\beta$ are generic, but constrained by the following relations:
\begin{eqnarray}
    &\alpha(0)=\beta(1)=1;\quad \alpha(1)=\beta(0)=0; \\
    &\forall t\in[0,1] : \alpha(t)\geq 0,\; \Dot{\alpha}(t)\leq 0, \;\beta(t)\geq 0,\; \Dot{\beta}(t)\geq 0
\end{eqnarray}
such that indeed we have ${\bf y}(0)={\bf x}_0$ and ${\bf y}(1)={\bf z}$. 
Suppose that $P_0({\bf x_0})$ admits a density $\rho_0$ w.r.t the Lebesgue measure.
It can be shown \citep{albergo2023stochastic}, see also Appendix~\ref{sec:AnalysisAlgo}, that the probability density $\rho({\bf y}(t))$ associated to the measure $P_t$ of the random variable ${\bf y}(t)$ satisfies the following transport equation:
\begin{equation}\label{eq:transport}
    \partial_t\rho({\bf y},t) + \nabla \cdot (b({\bf y},t)\rho({\bf y},t)) = 0\,,
\end{equation}
where we defined the \textit{velocity field}
\begin{equation}\label{eq:v_field}
    b({\bf y},t) = \mathbb{E}[\partial_t {\bf y}(t)|{\bf y}(t)={\bf y}] = \mathbb{E}[\dot{\alpha}(t) {\bf x_0 }+ \dot{\beta}(t)z|{\bf y}(t)= {\bf y}]\,.
\end{equation}
Indeed, $b({\bf y},t)$ is simply the expected velocity of ${\bf y}(t)$ conditioned on being at ${\bf y}$ at time $t$. In Appendix~\ref{sec:AnalysisAlgo}, we also provide a formal definition of Eq.~(\ref{eq:v_field}) based on the analysis in \cite{albergo2023building}, along with a discussion of the case when the measure $P_0({\bf x_0})$ is discrete.

Equation~(\ref{eq:x(t)}) defines a forward process interpolating from ${\bf x_0 }$ to~${\bf z}$. Equation~(\ref{eq:transport}) further reveals that, in law, ${\bf y}(t)$ can be obtained by applying the vector field (Eq.~(\ref{eq:v_field})) starting from ${\bf x_0}$ at time $t=0$. The algorithm proposed by \cite{albergo2023building} relies on applying the velocity field in the reverse direction of time. Concretely, it relies on approximating the unique solution to the following ordinary differential equation starting from a random Gaussian initial condition from $t=1$, back to $t=0$:
\begin{equation}\label{eq:ODE}
    \frac{d {\bf Y}(t)}{dt} = 
    b({\bf Y}(t),t)\,.
\end{equation}
Eq.~(\ref{eq:transport}) then implies that the random variable defined by Eq.~(\ref{eq:ODE}) solved backwards in time from the final value ${\bf Y}_{t=1}={\bf z}\sim\mathcal{N}({\bf 0},\mathbb{I}_N)$ is also distributed according to $\rho({\bf y}(t))$ (which is nothing but the continuity equation for the flow defined by Eq.~(\ref{eq:ODE})) and, in particular, is distributed as the desired target $P_0({\bf x}_0)$ at time $t=0$.  If we can numerically solve this ODE, then we have a sampling algorithm for~$P_0(x)$. 

In order to do that, we discretize Eq.~(\ref{eq:ODE}) using the forward Euler method (in reverse time) to write
\begin{equation}\label{eq:Euler}
    {\bf Y}_{t-\delta} = {\bf Y}_t - \delta\,b({\bf Y}_{t},t)\,.
\end{equation}
Noticing that we can rewrite the vector field using Eq.~(\ref{eq:x(t)}) as
\begin{eqnarray}\label{eq:vecField}
    b({\bf Y}_t,t) \!\!\!&=& \!\!\!\mathbb{E}\left[\Dot{\alpha}(t) {\bf x}_0 + \frac{\Dot{\beta}(t)}{\beta(t)}\beta(t) {\bf z}\bigg|{\bf y}(t)={\bf Y}_t\right] \nonumber \\
   \!\!\! &=& \!\!\!\mathbb{E}\left[\Dot{\alpha}(t){\bf x}_0 + \frac{\Dot{\beta}(t)}{\beta(t)}\left( {\bf y}(t) - \alpha(t){\bf x}_0\right)\bigg|{\bf y}(t)={\bf Y}_t\right] \nonumber\\
   \!\!\! &=& \!\!\! \frac{\Dot{\beta}(t)}{\beta(t)}{\bf Y}_t + \left( \Dot{\alpha}(t) - \frac{\Dot{\beta}(t)\alpha(t)}{\beta(t)}\right) \mathbb{E}\left[{\bf x}_0 |{\bf y}(t)={\bf Y}_t\right]  \, ,
\end{eqnarray}
we put back Eq.~(\ref{eq:vecField}) in Eq.~(\ref{eq:Euler}) to reach
\begin{equation}\label{discrODE}
    {\bf Y}_{t-\delta} \!=\! \left(\! 1-\delta \frac{\Dot{\beta}(t)}{\beta(t)}\!\right)\!{\bf Y}_t - \delta\left(\! \Dot{\alpha}(t) - \frac{\Dot{\beta}(t)\alpha(t)}{\beta(t)}\!\right) \mathbb{E}\left[{\bf x}_0 |{\bf y}(t)={\bf Y}_t\right]
\end{equation}
which, given the initial condition ${\bf Y}_{t=1}={\bf z}$, will evolve back to a sample from the target ${\bf Y}_{t=0}\sim P_0$, provided that we are able to estimate $\mathbb{E}\left[{\bf x}_0 |{\bf y}(t)={\bf Y}_t\right]$ \textit{well} at each time $t$. In Algorithm~\ref{Algo} we report a schematic implementation of the resulting flow-based sampling technique. 
In Appendix~\ref{sec:AnalysisAlgo}, we further discuss the effect of discretization and the associated sampling guarantees under access to a perfect denoiser.

\subsection{Diffusion-based models and SDEs}
The flow-based algorithm \citep{albergo2023building} (Alg.~\ref{Algo}) relies on replicating the law of the interpolant ${\bf y}(t)$ defined by Eq.~(\ref{eq:x(t)}) through the deterministic ordinary differential equation (ODE) defined by Eq.~(\ref{eq:ODE}). Interestingly, however, the {\it same} law for ${\bf y}(t)$ can be obtained via a stochastic differential equation (SDE). 
Similarly, when ${\bf y}(t)$ is generated through a forward diffusion process applied to the data, the law can be generated through an associated time-reversed SDE (and an ODE) \cite{anderson1982reverse,song2020score}. It turns out that both flow-based (ODE) methods and diffusion-based (SDE) ones associated to the above processes (as well as stochastic localization \cite{eldan2013thin,eldan2020taming,chen2022localization}) require estimating the posterior-mean $\mathbb{E}\left[{\bf x}_0 |{\bf y}(t)={\bf Y}_t\right]$, i.e. averages w.r.t a ``tilted" probability measure. 
Utilization of such an evolving tilted measure for sampling was done first 
in the stochastic-localization-based sampling algorithm in \cite{el2022sampling, montanari2023posterior} and related to generic diffusion models in \cite{montanari2023sampling}.
The reliance of both the ODE and SDE-based approaches on the same tilted measure is a consequence of the posterior mean (or equivalently the score) yielding both the velocity field for the ODE defined by Eq.~(\ref{eq:ODE}) and the drift term for the equivalent SDE as noted in \cite{albergo2023building,albergo2023stochastic,song2020score}. Since the law of ${\bf y}(t)$ remains the same with the ODE or SDE-based approaches, the deterministic evolution of the ``tilted" measure $P({\bf x}|{\bf y}(t)={\bf Y}_t)$ matches in law with that of the associated stochastic localization process \citep{montanari2023sampling}. This is convenient for our analysis since this means that {\it all} the phase transitions discussed in our work apply to flow-based methods, diffusion-based ones and stochastic localization approaches.

\begin{algorithm}[H]
\caption{Flow-based sampling algorithm }\label{Algo}

\begin{algorithmic}
\State \textbf{Input:} \text{Denoiser}, \text{parameters:} $\delta,\alpha(t),\beta(t)$
\State ${\bf Y}_{t=1} = {\bf Z} \sim \mathcal{N}({\bf 0},\mathbb{I}_N)$, $N_{\rm steps} \equiv \lfloor1/\delta\rfloor$
\For{$l=0, 1,\dots,N_{\rm steps}-1$ }

    $t = 1 - l\delta$\,;\;$\alpha=\alpha(t)$\,;\;$\beta=\beta(t)$\,;\;$\gamma=\alpha/\beta$
    
    Compute $\hat {\bf x}_0 = \mathbb{E}\left[{\bf x}_0 |{\bf y}(t)={\bf Y}_t\right]$ using the denoiser\, \\ \hspace*{\algorithmicindent} 
    (assuming a channel 
        ${\bf Y}_t = \alpha {\bf X}_0 + \beta {\bf Z}$)

    Update the field $\boldsymbol{Y}_{t-\delta}$ via Eq.~(\ref{discrODE})
\EndFor
\State \textbf{Return} $\hat{\boldsymbol{x}}_0$ 
\end{algorithmic}
\end{algorithm}

\subsection{The posterior average and Bayes-optimal denoising}

The key difficulty is thus to estimate the average $\mathbb{E}\left[{\bf x}_0 |{\bf y}(t)= {\bf Y}_t\right]$. This is nothing but the Bayes-optimal denoising where ${\bf Y}_t$ is a noisy observation of~${\bf x}_0$ corrupted by additive white Gaussian noise in the form of Eq.~(\ref{eq:x(t)}). 

The Bayes-optimal denoising estimator $\hat {\bf x}$ (optimal in the sense of minimizing the mean-squared-error between the estimator and ${\bf x}_0$) is to use the posterior mean $\hat {\bf x} = \mathbb{E}\left[{\bf x}_0 |{\bf y}(t)={\bf Y}_t\right]$~\cite{cover1991entropy}. Determining this average can be computationally hard, as it involves an integral in high-dimension. In practical machine learning setups, one learns the denoiser using a neural network trained on data coming from the distribution. Thus, the main narrative of the diffusion and flow-based methods is that if one can access to a good denoiser, then one can turn it into a sampler.

In this paper, we aim to focus on the limitations of this procedure and we shall hence assume that the task of obtaining a denoiser has been completed to near perfection by focusing on target distributions $P_0({\bf x}_0)$ stemming from problems studied in statistical physics for which the Bayes-optimal denoising problem including its algorithmic feasibility has been extensively investigated. Before turning to these models, we write the denoising problem in terms that are more familiar in the statistical physics but also in the information-theoretic context. 

Let us consider again the process defined by Eq.~(\ref{eq:x(t)}). Using the Bayes theorem, we can see that the posterior distribution of the sample conditioned on the observation ${\bf y}(t)={\bf Y}_t$ is given by
\begin{eqnarray}
&&P({\bf x}|{\bf y}(t)={\bf Y}_t) =\frac{P_0({\bf x})P({\bf y}(t)={\bf Y}_t|{\bf x})}{P({\bf y}(t)={\bf Y}_t)} \label{tilted-eq1}
\\ 
&=& \frac{1}{Z({\bf Y}_t)}\exp \left( \frac{\alpha(t)}{\beta(t)^2}\langle {\bf Y}_t,{\bf x} \rangle - \frac{\alpha(t)^2}{2\beta(t)^2}||{\bf x}||^2\right)P_0({\bf x})\,. \nonumber
\end{eqnarray}
where in the last line we put all the terms not depending on ${\bf x}$ inside the normalization aka partition function $Z$. 

We now recall that the law of ${\bf Y}_t$ can be obtained through an observation of a sample from the target distribution ${\bf x}_0 \sim P_0$ through the AWGN channel rescaled by the factors $\alpha(t)$ and $\beta(t)$ as in Eq.~(\ref{eq:x(t)}). 

Here we denote ${\bf x_0}$ as the ``truth'' signal and make the difference between the dummy variable ${\bf x}$ in the integral. We can hence rewrite the measure in Eq.~(\ref{tilted-eq1}) further as

\begin{eqnarray}\label{eq:P_gamma}
    P_\gamma({\bf x}|{\bf x}_0,{\bf z}) \!\!\!\! &\propto& \! \!\!\!  e^{\gamma(t)^2\langle {\bf x}, {\bf x}_0 \rangle + \gamma(t)\langle z,{\bf x} \rangle - \frac{\gamma(t)^2}{2}\|{\bf x}\||^2} P_0({\bf x})\,, 
    \label{tilted-measure}
\end{eqnarray}

where $z \sim \mathcal{N}({\bf 0},\mathbb{I}_N)$ and where we have defined the effective signal-to-noise ratio 
\begin{equation}   \gamma(t)=\frac{\alpha(t)}{\beta(t)}.
\end{equation}
We see that, by construction, $\gamma$ will be such that $\gamma(1)=0$ and $\gamma(0)=+\infty$. 
We will refer to $P_\gamma$ as the {\it tilted} measure, and it will be a central object in the remaining discussion. 
We have added the index $\gamma$ in the notation of the distribution $P_\gamma$ to distinguish it from other considered probability distributions from now on.  

\section{Autoregressive networks and ancestral sampling}
Let us now discuss the second type of sampling algorithm considered in this paper -- the autoregressive networks. A classical way of sampling a vector ${\bf x} \in {\mathbb R}^N$ from a target distribution $P_0({\bf x})$, in computer science and statistics is sequential sampling, or ancestral sampling: One starts by computing the marginal probability of the first coordinate $P_0(x_1)$, and samples $x_1$ accordingly. Subsequently, the algorithm generates all the coordinates for a sample, by sampling each coordinate conditioned on its ``parent" coordinates \cite{bishop2006pattern}. In the simplest case, the distribution of each coordinate is assumed to depend on all previous coordinates. Thus, after sampling $x_1$, for each subsequent node, one looks at the distribution $P_0({\bf x}|x_1)$, one considers the marginal distribution of $x_2$ etc. Of course marginalization of a high-dimensional probability distribution is in general hard, and the strategy used in an autoregressive networks \cite{gregor2014deep} is to directly {\it learn from data} a probability distribution written in the (autoregressive) form $P_0({\bf x}) = P_0(x_1)P_0(x_2|x_1) \dots P_0(x_N|x_1,\dots,x_{N-1})$ with each term being represented via a neural network.
While this decomposition works for any ordering of the components, in practical applications the order may be relevant (this is an important point in ancestral sampling). In the present paper, we will consider the order to be random. 

We showed in the previous section that sampling through diffusion in our formalism boils down to the performance of a sequence of denoising problems, where in information theory terms the signal is observed through an additive white Gaussian noise channel. Analogously to that, we can interpret the autoregressive networks, or its ideal version sequential sampling, as the estimation of the marginal when a fraction $\theta$ of entries of one configuration sampled uniformly at random from the target $P_0$ is revealed exactly. To come back to the sampling scheme, reversing this process is equivalent to fixing one variable at a time from the marginal conditioned to the variables previously fixed until all variables are fixed. In statistical physics, this procedure has been studied and analysed under the name decimation \cite{ricci2009cavity}. In Algorithm~\ref{Algo-2} we report a schematic implementation of the autoregressive-based sampling technique.

In information-theoretic terms, this corresponds to a denoising problem under the so-called binary erasure channel (BEC) in which a transmitter sends a bit and the receiver either receives the bit correctly or with some probability $1-\theta$ receives a message that the bit was not received but ``erased'' instead. 

Repeating the way of thinking we used for sampling with diffusion, we now want to analyse the denoising for such models and write the modified measure 
\begin{equation}
    P_{\theta}({\bf x}| {\bf x}_0, S_\theta) \propto  P_0({\bf x}) \prod_{i \in S_\theta}     \delta([{x}_i-[{\bf x}_0]_{i}) 
    \label{pinning-measure}
\end{equation}
where $S_\theta$ is the set of revealed variables of size $\theta N$. We can also think of these variables as pinned to their ground truth value ${\bf x}_0$ and consequently call the measure $P_{\theta}({\bf x}| {\bf x}_0, S_\theta)$ the pinning measure. From a statistical physics point of view, the new measure can again be thought of as the original one, but with an "infinite" magnetic field pointing to the direction of the components of ${\bf x}_0$ for an expected fraction $\theta$ of components. We are again interested in the marginals of $P_\theta$ that we will denote $\hat {\rm x}(\theta)$. The evolution of the pinning measure can also be interpreted as a coordinate-by-coordinate stochastic localization process \cite{chen2022localization}.
\begin{algorithm}[H]
\caption{Autoregressive-network-based sampling algorithm}\label{Algo-2}
\begin{algorithmic}
\State \textbf{Input:} \text{BEC denoiser}
\For{$l=1,\dots,N$ }

\State Compute the posterior marginal $P_l(x_l|x_1,\dots,x_{l-1})$ with BEC denoiser Eq.~(\ref{BEC:model})

\State Assign $x_l \sim P_l(x_l|x_1,\dots,x_{l-1})$
\EndFor
\State \textbf{Return} $\boldsymbol{x}$ 
\end{algorithmic}
\end{algorithm}
\section{Properties of Bayesian-optimal denoising}
In both the cases -- diffusion and autoregressive -- we see that one has to perform an optimal denoising based on an observation ${\bf y}$. In the case of diffusion, we have access to an observable where a sample from the target ${\bf x}_0 \sim P_0$ is polluted by an AWGN channel:
\begin{equation}
    {\bf y} = \alpha {\bf x}_0 + \beta {\bf z}\,,
\end{equation}
with $z \sim \mathcal{N}({\bf 0},\mathbb{I}_N)$, while for autoregressive, it is polluted by the BEC channel where for every $i = 1,\dots, N$ independently
\begin{equation}
    {y}_i =  
    \begin{cases}
        [{\bf x}_0]_{i} &\text{with probability}\, \,  \theta \\
        * &\text{otherwise}
    \end{cases}
    \label{BEC:model}
\end{equation}
Our goal is now to study the properties of such channels and the corresponding Bayes-optimal denoisers. A crucial point is that these Bayes-optimal estimation problems lead to the {\it Nishimori identities} and the {\it single state} (or replica symmetric in the spin glass jargon) properties of the measures $P_\gamma$, Eq.~(\ref{tilted-measure}), and $P_\theta$, Eq.~(\ref{pinning-measure}), see e.g. the review \cite{zdeborova2016statistical}. While these are classical properties, we remind their derivations in the appendix, and state informally their form and main consequences here.

Concretely, we shall be interested in the evolution in time (or equivalently in $\gamma \in [0,\infty[$ (AWGN) and $\theta \in [0,1]$ (BEC)) of the following {\it order parameters}, or {\it overlaps}:
\begin{eqnarray}
\label{eq:Mu}
\M(\gamma) &\equiv& \frac{1}{N}\mathbb{E}[ \hat {\bf x}(\gamma) \cdot {\bf x}_0]\,,\\ 
\label{eq:Chi}
\C(\gamma) &\equiv& \frac{1}{N}\mathbb{E}[\norm{ \hat {\bf x}(\gamma)}^2]\,.
\end{eqnarray}

The same definitions hold if we replace $\gamma$ by $\theta$, i.e. for the autoregressive process instead of the diffusive. The expectations are over the disorder ${\bf x}_0$ and ${\bf z}$. 
The single state property relates to the self-averaging of these order parameters: Almost anywhere in $\gamma$ (AWGN) or $\theta$ (BEC), the equilibrium measure $P_\gamma$ and $P_\theta$ is a single "phase" probability measure, where overlap and order parameter are self-averaging, i.e. concentrate to a single scalar quantity as $N \to \infty$. Note that the ``almost anywhere" is not a void statement: for instance, if one is sampling a complicated glassy problem such as, say, the Sherrington-Kirkpatrick model \cite{mezard1987spin}, strictly at $\gamma=0$, the Boltzmann measure is complex and overlaps are not self-averaging. For almost all values of $\gamma$ (that is, except at some critical threshold values) the measure then corresponds to a much simpler single-phase problem, and in particular, there is no phenomenon akin to a static glassy transition or replica-symmetry-breaking, thus the replica symmetric assumption is sufficient to describe the thermodynamic behaviour of these measures. This will be instrumental in the exact asymptotic analysis.

While such results have a long history \cite{nishimori2001statistical,iba1999nishimori}, their proof in the present context follows from the analysis of optimal Bayesian denoising, and in particular the I-MMSE theorem and the fluctuation-dissipation one \cite{guo2005mutual,korada2009exact,barbier2019optimal,barbier2021overlap}, for the AWGN channel, and by the study of the so-called pinning lemma \cite{abbe2013conditional,coja2017information} in the BEC channel, that also has roots in the study of the glass transition \cite{montanari2006rigorous,biroli2008thermodynamic}.

A second property, often called the Nishimori symmetry in physics \cite{iba1999nishimori,zdeborova2016statistical}, follows from Bayes theorem and states that for optimal Bayesian denoising one has $\M(\gamma) =\C(\gamma) $ in the diffusion setting and $\M(\theta)=\C(\theta)$ in the autoregressive one.

Note that while we cannot experimentally compute $\M(\gamma)$ in simulations, we can instead compute numerically $\C(\gamma)$ which is just the norm of the estimator. The study of phase transitions in such problems is thus reduced to the behaviour of a single scalar quantity. 
\section{Prototypical exactly analysable models}
We will now analyse the properties of the tilted measure $P_\gamma$ and the pinning measure $P_\theta$ for several concrete cases of the target measure $P_0$. This will be possible exactly in the thermodynamic limit $N\to \infty$.
 We shall focus on several classical problems from spin glass theory, statistical inference, and constraint optimization problems, but analogous analysis of the tilted and pinning measures can be done for many other problems for which the phase diagram was obtained via the replica or the cavity method \cite{mezard1987spin,mezard2009information}.

First, we shall start by studying models that present a so-called random first-order transition \cite{kirkpatrick2015colloquium,berthier2011theoretical}, or in replica parlance, a discontinuous one-step replica symmetry breaking phenomenology \cite{mezard1987spin}. There are two crucial temperatures in RFOT systems. The so-called Kauzmann temperature $T_K$ below which the system behaves as an ideal glass and the so-called dynamical temperature $T_d$ that is defined as the temperature at which the point-to-set correlation length diverges and consequently Monte Carlo or Langevin dynamics equilibration time diverges as well \cite{berthier2011theoretical,montanari2006rigorous}. 
It is a widely accepted conjecture that no efficient method 
can sample such models in their low temperature $T< T_d$ glassy phase (and in fact, one can prove such hardness for classes of algorithms \cite{el2022sampling,alaoui2023shattering}). We shall focus on the ``paramagnetic" phases of these models $T>T_d$ for which the Monte Carlo or Langevin equilibration time is known to be finite and hence efficient sampling with MCMC or Langevin is possible. Specifically, we shall consider the following models: 
\begin{itemize}
      \item {\bf The Ising and spherical $p$-spin glass}~\cite{gross1984simplest}, whose Hamiltonian reads (for $p=3$):
    \begin{align}
    \label{def:pspin}
    {\cal H}({\bf x}) = -\frac{\sqrt{3}}{N} \sum_{i<j<k} J_{ijk}x_ix_jx_k \,,
    \end{align}
    
    with $J_{ijk} \sim\Ncl(0,1)$. The Boltzmann distribution is then $P_0({\bf x}) \propto \exp(-\beta{\cal H}({\bf x}))$ with $\beta=1/T$. In the Ising case, we take $x_i=\pm 1$, for $i=1,\dots,N$, in the spherical case ${\bf x} \in {\cal S}^{N-1}$ (even though, since we are discussing the high-temperature phase, we shall use the equivalent Gaussian model). This is one of the most studied models in spin glass theory. 
\item {\bf NAE-SAT, or bicoloring}: Another class of popular models arises in the context of constraint satisfaction problems, e.g. random satisfiability problem or random graph coloring \cite{krzakala2007gibbs}. Here we shall focus on a prototypical case from this class, the problem of coloring random $k$-hypergraphs with two colours. This model was studied using statistical physics techniques in \cite{Castellani2003,Gabrié_2017,ricci2019typology}. Numerous rigorous results for this model were also established e.g. in \cite{Ding2016,Coja-Oghlan2012}. The probability distribution is the following:
    \begin{equation}
    P_0({\bf x}) \propto\prod_{a=1}^{M}\omega({\bf x}_{\partial a})\,, \quad {\bf x} \in \{-1,+1\}^N
    \end{equation}
    \begin{equation}
        \alpha = \frac{M}{N}\,,\quad \omega(x_1,\dots,x_k) = 
    \begin{cases}
    0 & \text{if} \;\sum_{i=1}^k x_i = \pm k\\
    1 & \text{otherwise}
    \end{cases}
    \nonumber
    \end{equation}
    where by ${\bf x}_{\partial a}$ we refer to the group of $k$ variables entering into the clause $a$. Again, this model presents a RFOT phenomenology, with a dynamical/clustering transition $\alpha_d$. The literature supports the property that MCMC is able to sample efficiently for $\alpha<\alpha_d$ and is not for $\alpha > \alpha_d$ \cite{krzakala2007gibbs,montanari2006rigorous,krzakala2009hiding}.
\end{itemize}

In these models, by studying the tilted/pinning measures, we shall see the limitations of flow-based, diffusion and autoregressive sampling methods that will {\it fail} at temperature $T_d < T< T_{\rm tri}$ and constraint density $\alpha_{\rm tri} < \alpha< \alpha_d$. 
These methods thus turn out to be not as performing as MCMC/Langevin approaches above the dynamical transition. We expect this to be the case for any model with RFOT phenomenology. Every cloud, however, has its silver lining, and we shall also note that there is a class of models where flow, diffusion or autoregressive approaches outperform MCMC/Langevin. This will be the case in statistical inference problems presenting a hard phase, i.e. presenting a sharply defined region of the noise $\Delta_{\rm IT} > \Delta >\Delta_{\rm alg}$ that is computationally hard for message-passing algorithms and conjectured hard for any other efficient algorithm \cite{gamarnik2022disordered}. A recent line of work \cite{antenucci2019glassy,mannelli2020marvels,sarao2020complex,angelini2023limits,chen2023almost}, argues that when it comes to sampling algorithms such as Langevin or other algorithms walking in the space of configurations, such as gradient descent, then the hardness actually extends to even beyond the threshold $\Delta_{\rm alg}$ up to some $\Delta_{\rm MCMC}$ that depends on the specific algorithm. Yet, diffusion models were shown to be able to sample down to $\Delta_{\rm alg}$ \cite{montanari2023posterior}. We will also illustrate this here by studying the tilted and pinning measures of the following prototypical model: 
\begin{itemize}
  \item {\bf Sparse rank-one matrix factorization}:
    The sparse \textit{spiked Wigner} model and its phase diagram were discussed e.g. in \cite{deshpande2014information,lesieur2015phase,lesieur2017constrained,dia2016mutual}. 
    This model is a variation of the ``planted" Sherrington-Kirkpatrick model in spin glass physics. In such models, one is given a ``spiked" version of a random symmetric matrix with a rank-one perturbation with a ``planted" vector ${\bf x^*}$
    and aims at finding back ${\bf x^*}$ from the matrix. This is done by sampling from the posterior probability, and therefore one considers the probability distribution:
    \begin{align}
    P_0({\bf x)} &\propto \prod_i P_X(x_i)\prod_{i<j}\exp\left(\frac{1}{\Delta\sqrt{N}}J_{ij}x_ix_j - \frac{1}{2\Delta N}x_i^2x_j^2 \right)\,,\\
   & J_{ij} = \frac{x^*_ix^*_j}{\sqrt{N}} + z_{ij}\,, \quad z_{ij}= z_{ji}\sim\Ncl(0,\Delta)\,; \nonumber\\ & P_X(x) = (1-\rho)\delta_{x,0} + \frac{\rho}{2}\left( \delta_{x,+1} + \delta_{x,-1} \right)\,, \quad x^*_i \sim P_X\;\forall\,i.
    \nonumber
    \end{align}
\end{itemize}

\begin{figure*}[t!]
    \centering
      \vspace*{-0.45cm}
    \textbf{Flow-based and diffusion-based}\par\medskip
    \vspace*{-0.15cm}
\includegraphics[width=0.243\textwidth]{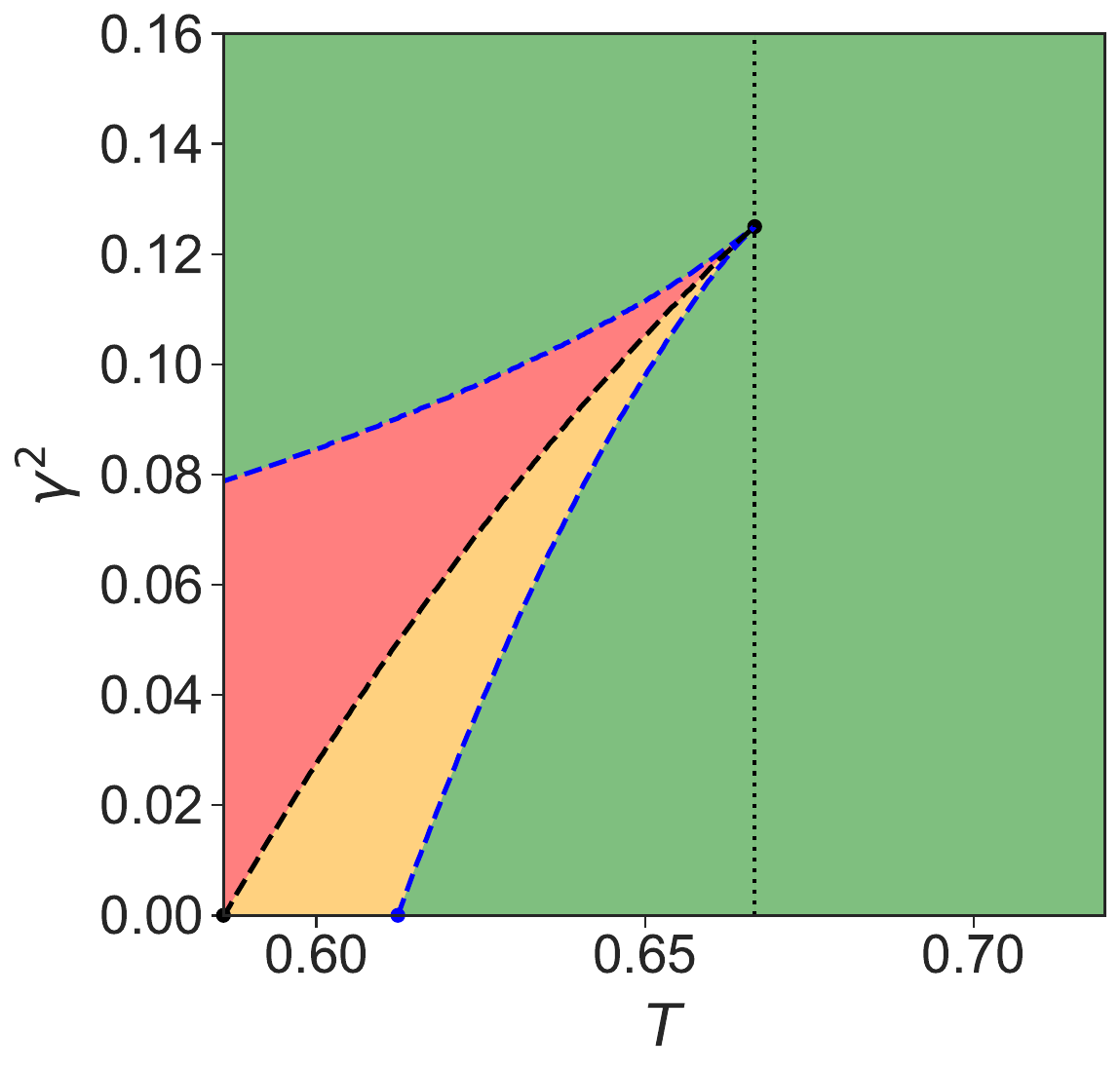}    
    \includegraphics[width=0.243\textwidth]{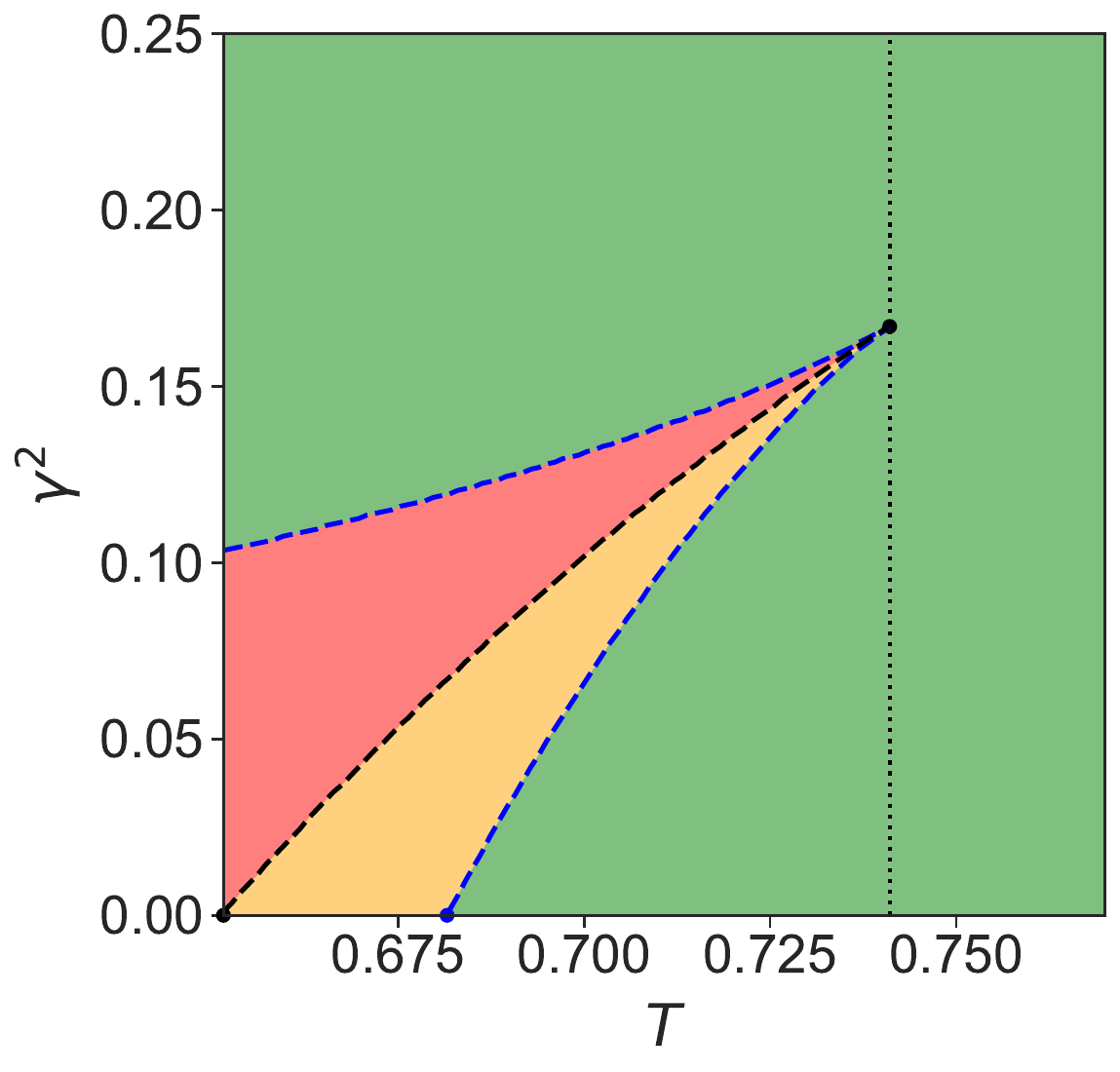}
    \includegraphics[width=0.252\textwidth]{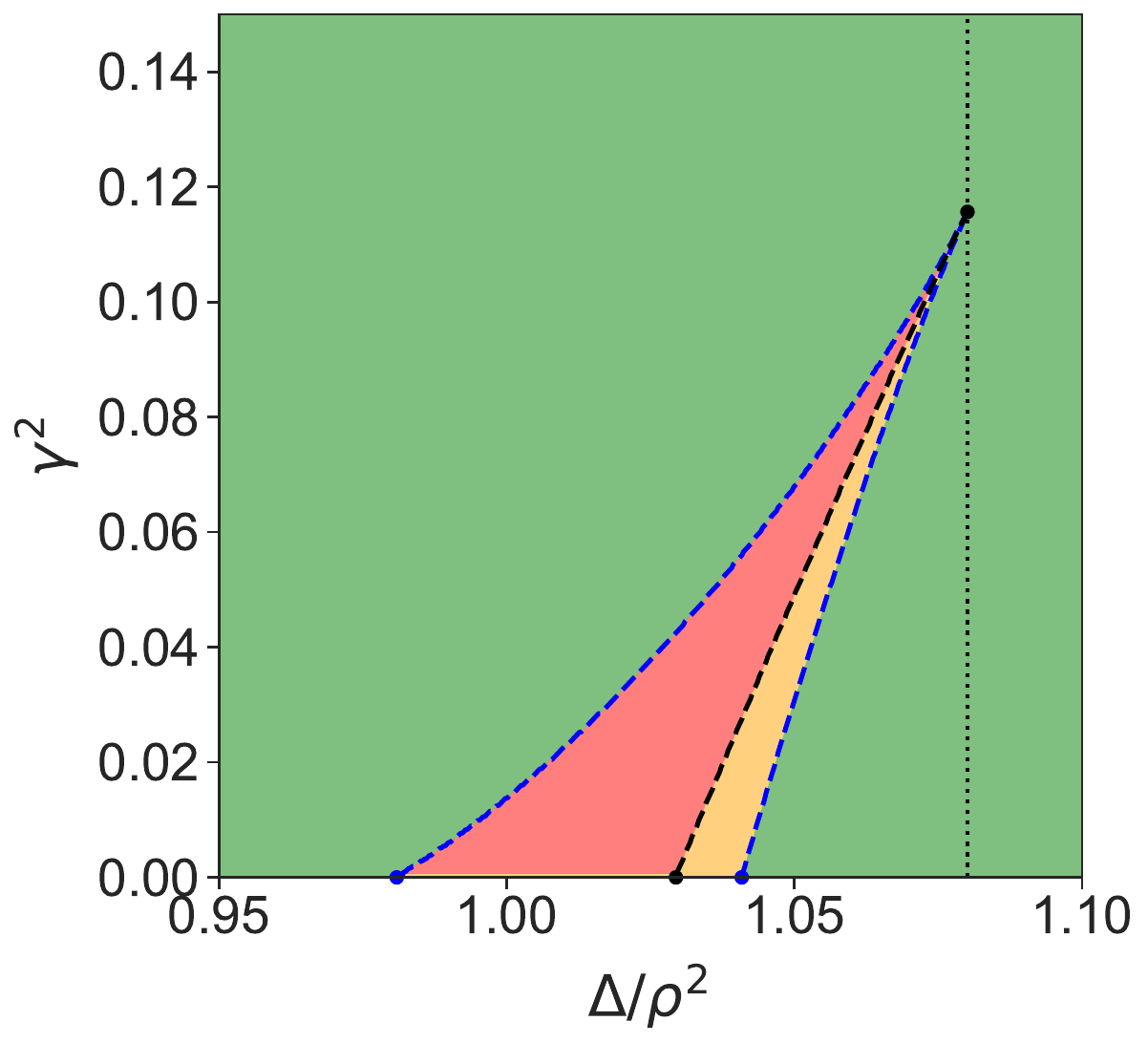}
    \includegraphics[width=0.243\textwidth]{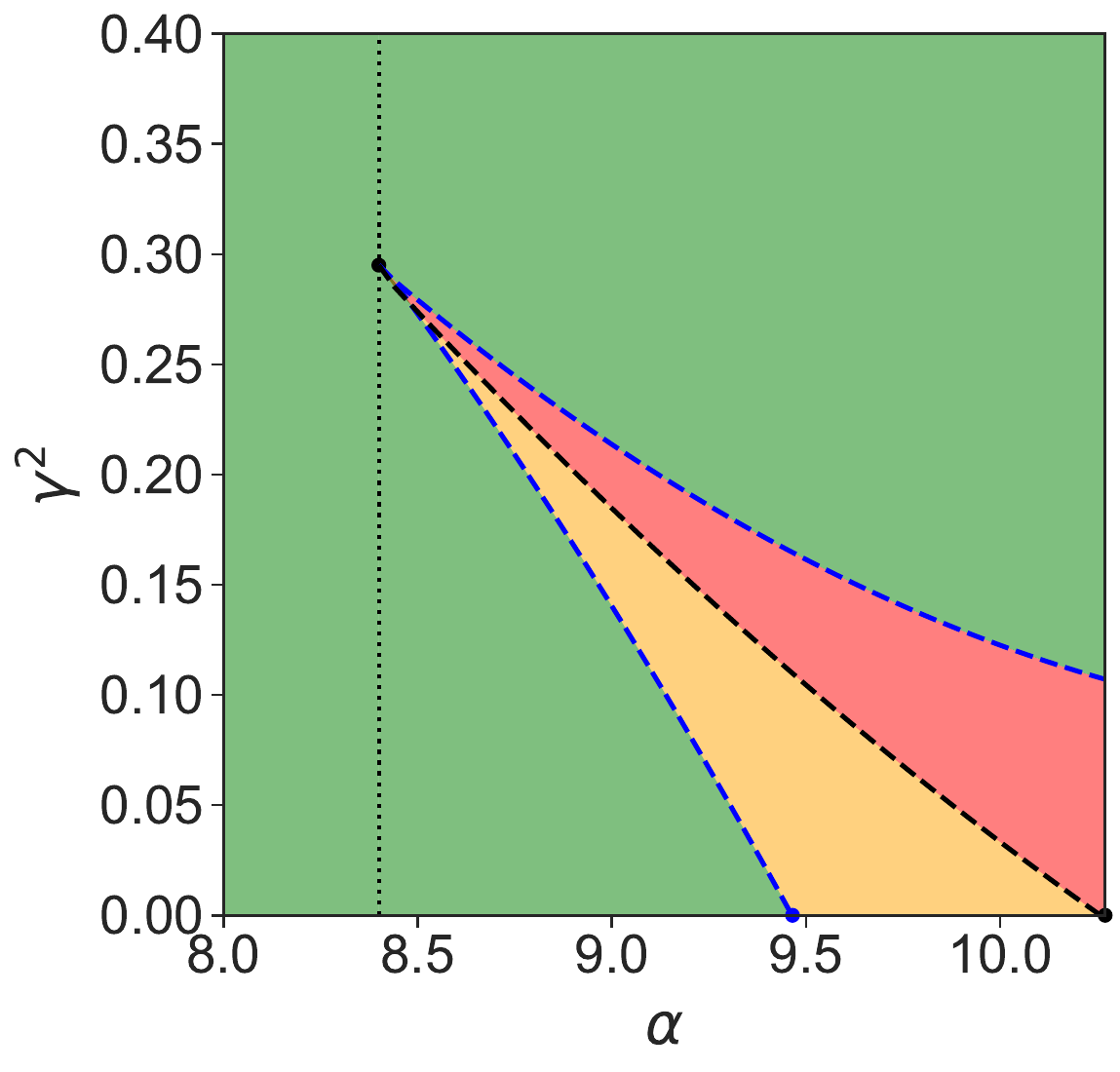}
  \par\medskip
       \vspace*{-0.25cm}
       \textbf{Autoregressive-based}
       \vspace*{-0.25cm}
      \par\medskip
    \includegraphics[width=0.241\textwidth]{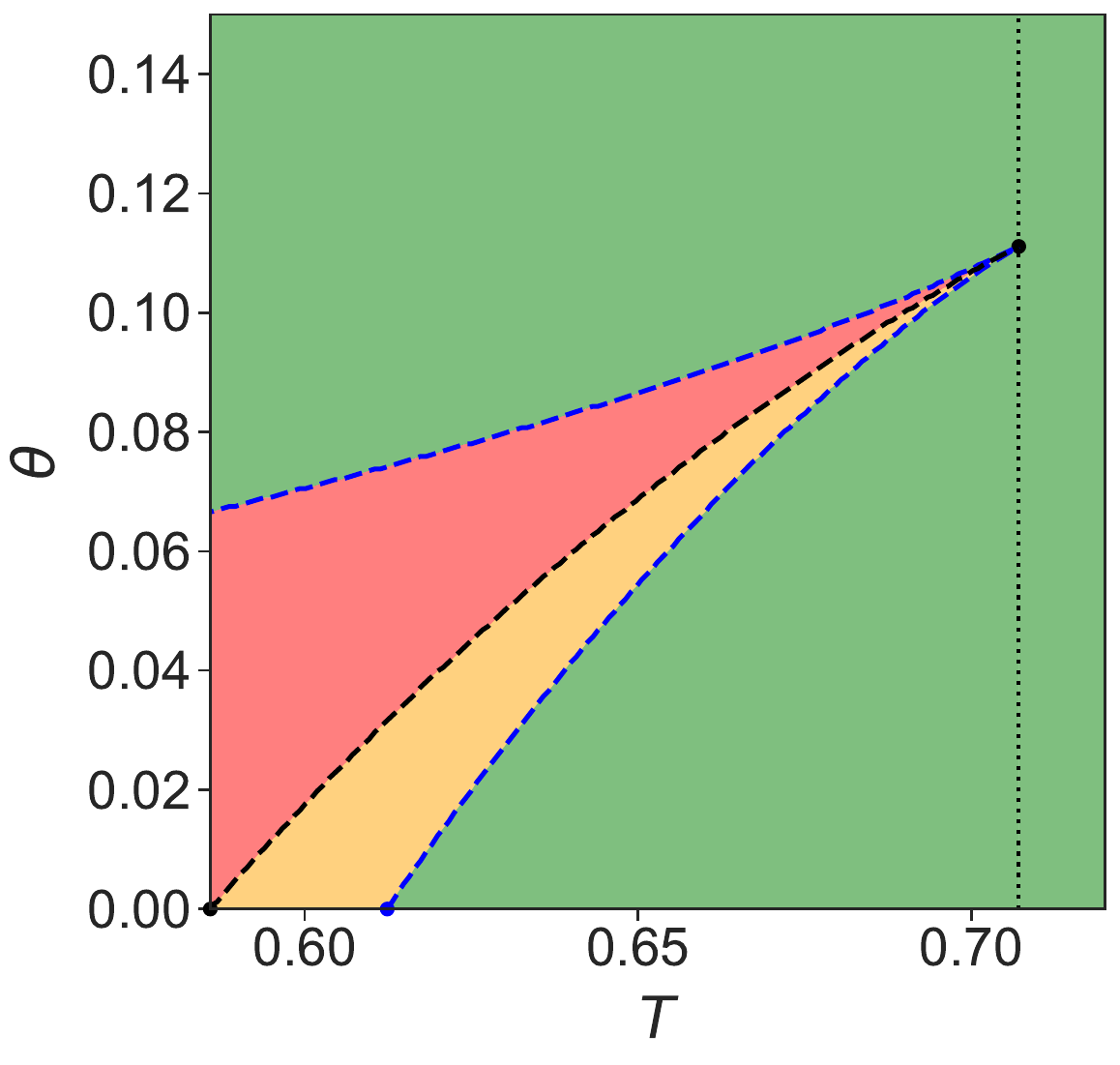}
    \includegraphics[width=0.246\textwidth]{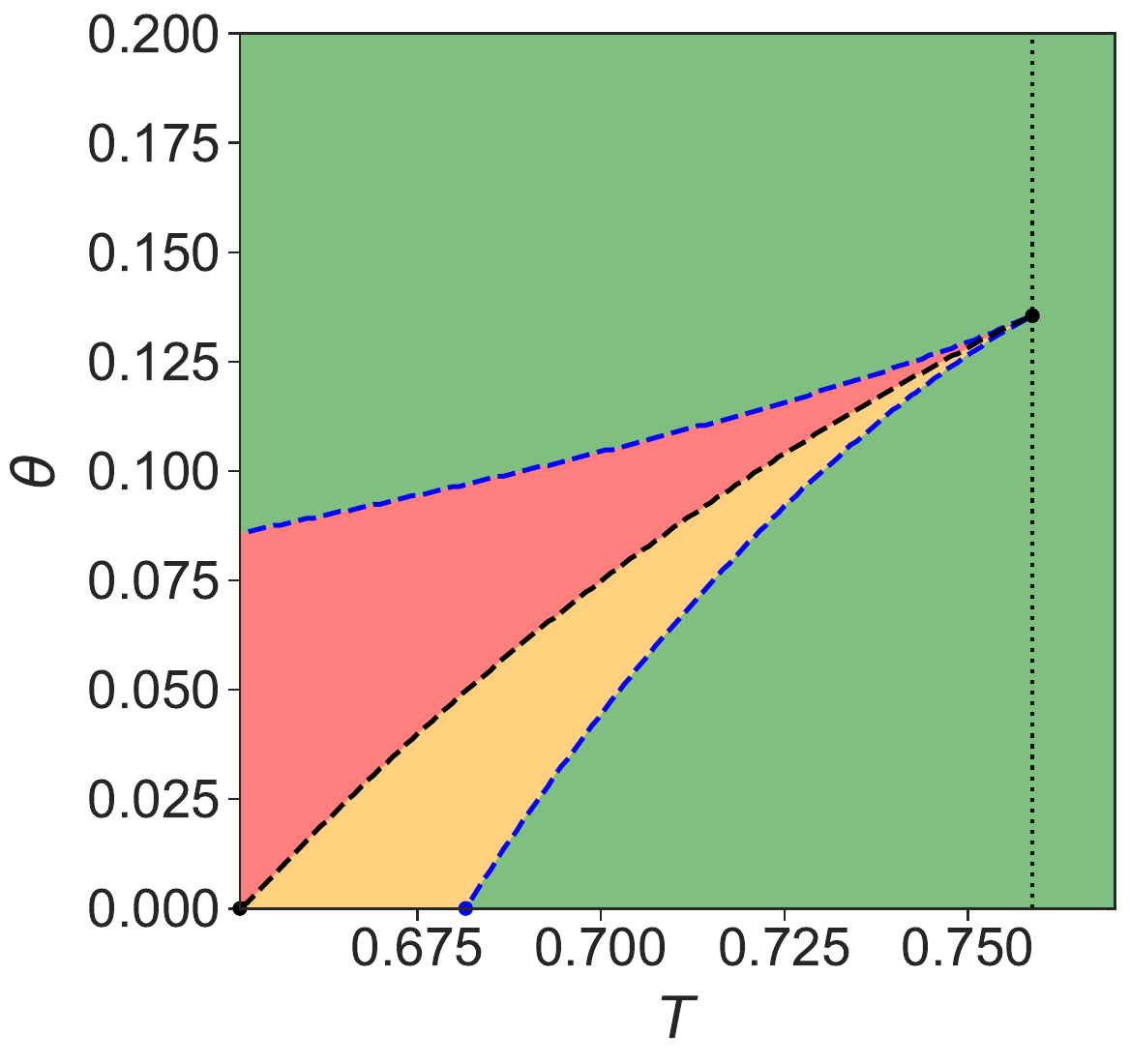}
    \includegraphics[width=0.253\textwidth]{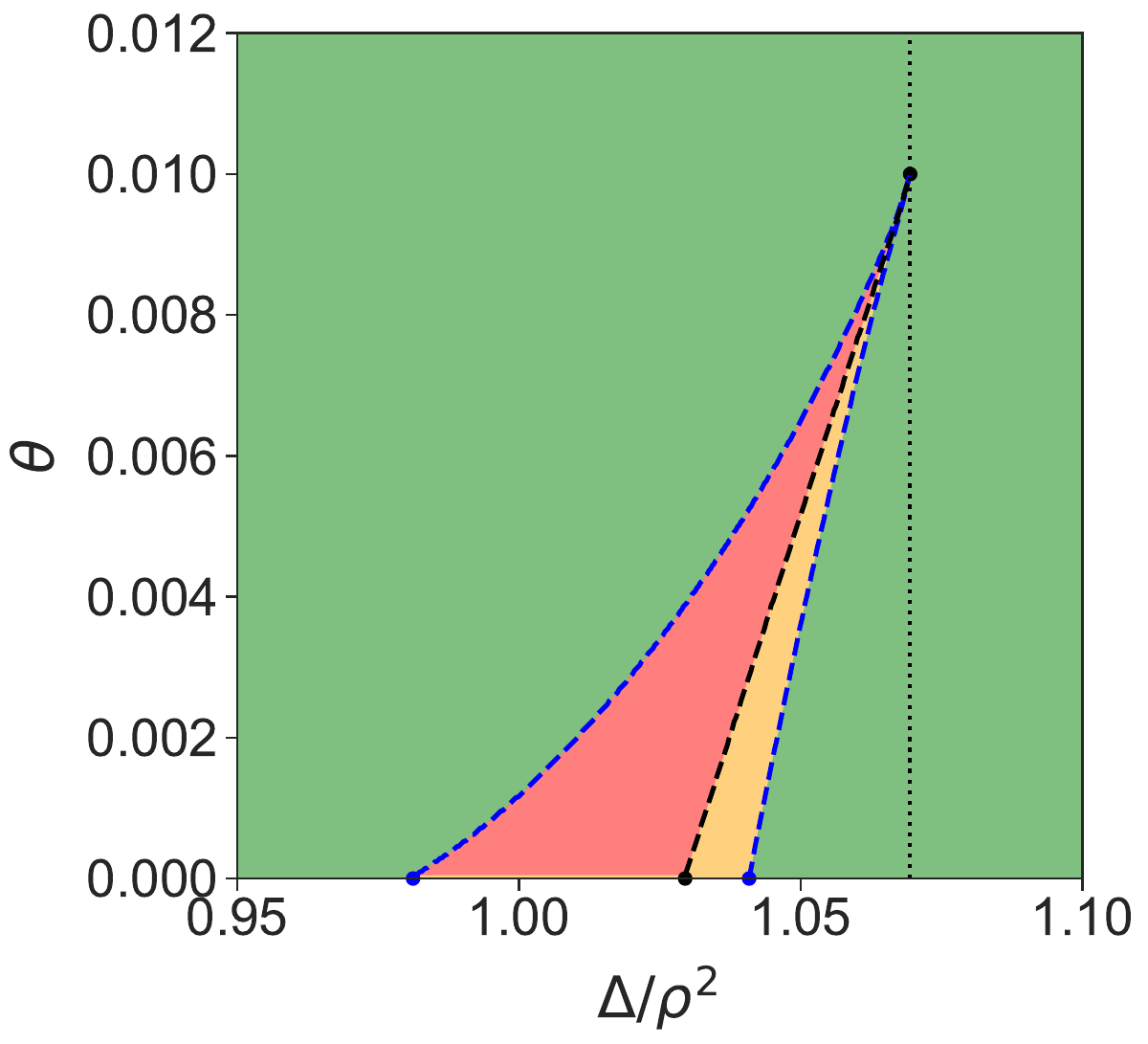}
    \includegraphics[width=0.241\textwidth]{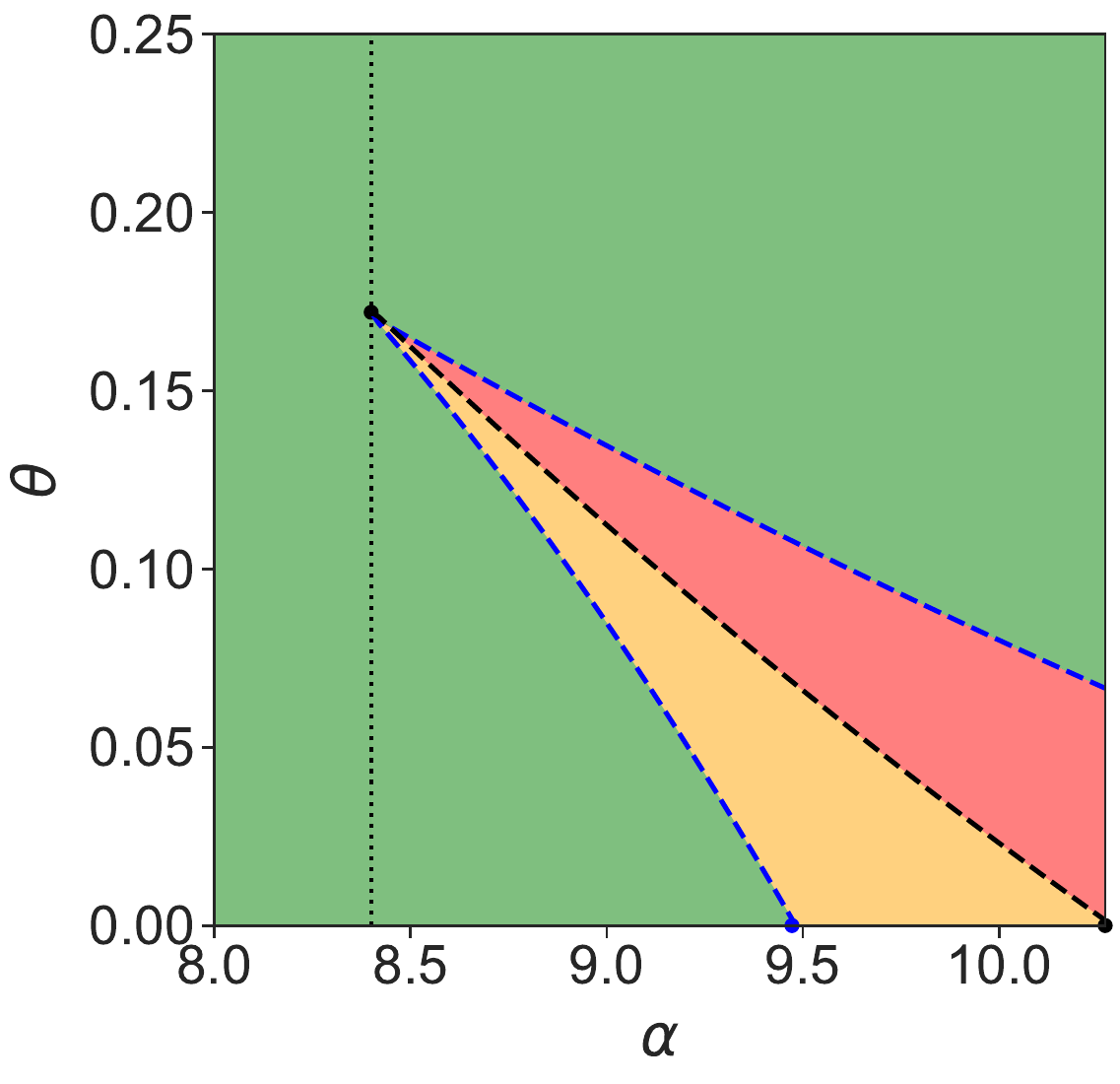}  
        \text{  (a) Spherical $p$-spin}\hspace{1.7cm} \text{(b) Ising $p$-spin} \hspace{1.65cm}\text{(c) Rank-one estimation} \hspace{1.4cm} \text{(d) Bicoloring}  
        \vspace{0.5cm}
    \caption{Phase diagrams for the tilted measure $P_\gamma$ (top), and the pinning measure $P_\theta$ (bottom): (a) The spherical $p$-spin model, b) the Ising $p$-spin model, c) the sparse rank-one matrix estimation d) the bicoloring problem on hypergraphs (NAE-SAT). The x-axis is the temperature $T=1/\beta$ in (a) and (b), the inverse-SNR $\Delta/\rho^2$ in (c) and the clauses-to-variables ratio $\alpha$ in (d), while the y-axis shows the SNR ratio $\gamma^2 = \alpha^2/\beta^2$ (top) and the decimated ratio $\theta$ (bottom). 
    In the green phase there is a single maxima to the free entropy. The red and orange regions display a phase coexistence with two maxima. In the red region efficient denoising is predicted to be algorithmically hard. 
    In (a) the spherical $p$-spin at $\gamma=\theta=0$ the dynamical threshold $T_d=\sqrt{3/8}$, the Kauzmann transition $T_{\rm K}\approx 0.58$, while the tri-critical point is at
    $T_{\rm tri}=2/3$ for diffusion and $T_{\rm tri}=\sqrt{1/2}$ for autoregressive.  
    In (b) the Ising $p$-spin the values are: $T_d\approx0.682$, $T_{\rm K}\approx 0.652$, $T_{\rm tri}\approx0.741$ for diffusion, and $T_{\rm tri}\approx0.759$ for autoregressive. 
    In (c) for the sparse rank-one matrix estimation at $\rho=0.08$ we have $\Delta_d/\rho^2 \approx 1.041$, $\Delta_{\rm K}/\rho^2 \approx 1.029$, and $\Delta_{\rm alg}/\rho^2 \approx 0.981$. The tri-critical points are at $\Delta_{\rm tri}/\rho^2\approx 1.08$ for diffusion, and $\Delta_{\rm tri}/\rho^2\approx 1.069$ for autoregressive. 
    In (d) the bicoloring the values are $\alpha_d \approx 9.465$, $\alpha_{\rm K} \approx 10.3$. The tri-critical points are $\alpha_{\rm tri}\approx 8.4$ for both diffusion and autoregressive. The curves for bicoloring were obtained by a polynomial fit, while in all the other cases we represent directly the data points.}
    \label{fig:ALL}
\end{figure*}

\section{Phase diagrams of the tilted and pinning measures}

We first discuss sampling in the spherical $p$-spin model, Eq.~(\ref{def:pspin}). We focus in particular on sampling in its paramagnetic phase, i.e. $T>T_d$, where MCMC and Langevin algorithms are predicted to work efficiently. In order to analyse flow-based, diffusion-based and autoregressive sampling, we need to consider the corresponding denoising problems. For flow-based and diffusion-based sampling, this leads to the tilted measure Eq.~(\ref{tilted-measure}), with $P_0$ now given by Eq.~(\ref{def:pspin}). Taken together, this defines a variant of the $p$-spin model with a particular random field. 

While the tilted measure may look complicated at first sight, because of the random field in the direction of a particular ``equilibrium" direction ${\bf x}_0$, it turns out that measures of this type were already studied in the literature. We refer the reader to Appendix~\ref{sec:Plant-app} for details and only briefly sketch the reasoning here. The trick (also used in e.g.~\cite{zdeborova2010generalization,el2022sampling}) is to notice that $\forall T>T_K$ the original $p$-spin model is contiguous to its ``planted" version, where a vector ${\bf x}_0$ has been hidden beforehand as an equilibrium configuration (the spike tensor model \cite{Lesieur2017Statistical}). The model is thus equivalent for all practical purposes to its planted version, with the tilted field now acting in the direction of the planted configuration. 

Computing the phase diagram of the model associated with the tilted measure thus requires computing the free entropy of the planted model with an additional side Gaussian information. This is the same computation as needed in various mathematical works based on Guerra's interpolation technique where the planted model is observed together with an additional Gaussian channel, e.g. in \cite{barbier2019optimal}. This, and the single-state property of Bayes-optimal denoising discussed in the former section (that ensures replica symmetry), allows us to obtain the phase diagram of the tilted measure.

For the present spherical $p$-spin model, we show in Appendix~\ref{sec:PhD-app} that the equilibrium properties for $T>T_{\rm K}$ are given by
\begin{align}
    \C^* &= \underset{\chi}{\rm argmax} \, \Phi_{\rm RS}(\C) \nonumber\\
 \Phi_{\rm RS}(\C)    &= \frac{\widetilde{\C}}{2} + \frac{1}{2}\log\left( \frac{2\pi}{\widetilde{\C} + 1} \right) - \frac{1}{2T^2}\C^3\,,\label{fnrg:p-spin-spherical}\\
   \,
    \widetilde{\C} &= \frac{3}{2T^2}\C^2 + \gamma^2\,. \nonumber
\end{align}
Solving the above maximization problem is easily done. One observes that depending on the range of parameters $T$ and $\gamma$, up to two local maxima of $\Phi_{\rm RS}$ can be found for this model. In Fig.~\ref{fig:ALL} (a, top) we depict in green the regions of $T,\gamma$ where $\Phi_{\rm RS}$ has a unique maximizer. 
The orange region is where two maximizers co-exist with the global one having a smaller value of the order parameter $\C$, and in the red region, the global maximizer has a larger value of $\C$. Such a phenomenology is familiar in first-order phase transitions, where the red and orange phases in Fig.~\ref{fig:ALL} correspond to the phase coexistence region. 

Now, a crucial point for the follow-up discussion of the flow- and diffusion-based sampling is that it is widely believed to be {\it algorithmically hard} to obtain the Bayes-optimal denoiser in the red region for all efficient algorithms (even if $P_0$ is known) \cite{zdeborova2016statistical,bandeira2022franz,celentano2020estimation,bandeira2022franz,gamarnik2022disordered}. This is nothing else than the well-known metastability problem in thermodynamics, as one is "trapped" in the wrong maxima of \eqref{fnrg:p-spin-spherical}. 

The evidence for denoising being algorithmically hard in the red region goes beyond mere physical analogies and is an intense subject of studies in computer science, with the study of a variety of techniques such as low degree polynomial or message passing algorithms (see e.g.~\cite{bandeira2022franz}). Note that while denoising is hard in the red region, direct sampling with MCMC is hard already in the orange one.

To explain this further, let us discuss a concrete implementation of a denoiser. Computing the marginals for the tilted measure is a classical topic in spin glass and estimation theory, and the best-known algorithm to do it efficiently is the so-called mean-field Thouless-Anderson-Palmer equations \cite{thouless1977solution}, or ---to use their modern counterpart--- the iterative approximate message passing (AMP) \cite{donoho2010message,Richard2014}. The AMP algorithm is an iterative update procedure on the estimates of the posterior means $\hat{x}_i$ and the co-variances $\sigma_i$, and for the spherical 3-spin model it reads
\begin{equation}
\begin{cases}
    B_i^t = \frac{\sqrt{3}\beta}{ N}\sum_{j<k} J_{ijk} \widehat{x}_j^{t}\widehat{x}_k^{t} - \frac{3}{N}\beta^2\widehat{x}_i^{t-1} \sigma^t {\bf \widehat{x}}^{t} \cdot {\bf \widehat{x}}^{t-1} \\
     \widehat{x}_i^{t+1} = \frac{B_i^t + \frac{\alpha(t)}{\beta(t)^2}[{\bf Y}_t]_i}{3\beta^2\|\hat {\bf x^t}\|_2^2/(2 N) + \gamma^2+1} \,,     \sigma^{t+1} = \frac{1}{3\beta^2\|\hat {\bf x^t}\|_2^2/(2 N) + \gamma^2+1}\nonumber\,. \\
\end{cases}
\end{equation}
The virtue of AMP is that its performance can be tracked rigorously over iteration time, and in fact, one can show that the overlap $\C^t$ defined by the AMP estimates obeys the following {\it state evolution}:
\begin{equation}
    \C^{t+1} = \frac{\widetilde{\C^t}}{1 + \widetilde{\C^t}}\,,\quad \widetilde{\C}^{t} \equiv \frac{3}{2T^2} (\C^{t})^2+ \gamma^2\,,
\end{equation}
which is nothing but the fixed point equation of Eq.~(\ref{fnrg:p-spin-spherical}). We see now how the presence of multiple fixed points can trap the algorithm in the wrong maximizer.

Turning an AMP-denoiser into a sampler was precisely the idea introduced in \cite{el2022information,montanari2023posterior} under the framework of stochastic localization \cite{eldan2013thin,eldan2020taming,chen2022localization}. While we derived the equation for a flow-based approach, the conclusion of \cite{el2022information,montanari2023posterior} remains. In particular, leveraging the rigorous analysis of the asymptotic error obtained by AMP \cite{bayati2011dynamics,montanari2021estimation}, they prove that AMP can approximate optimal denoising throughout the interpolation path in the regime where  the global maximum for $\Phi_{\rm RS}$ is the first one reached by the state evolution \cite{montanari2023posterior}. Furthermore, using the local convexity of the TAP free energy \cite{el2022sampling,celentano2023local}, they show that the AMP iterates satisfy Lipschitz-continuity w.r.t the observation ${\bf Y}_t$. This is crucial for the SDE-based sampling in \cite{el2022sampling,montanari2023posterior} as well as the continuous flow-based sampler in our case, since both require control of the discretization errors. We discuss this further in Appendix~\ref{sec:AnalysisAlgo}. 
However, AMP and, conjecturally \cite{gamarnik2022disordered}, any other polynomial-time denoiser (and in particular any neural-network denoiser learned from data) fail to return the correct marginal whenever the global maxima of $\Phi_{\rm RS}$ is not the first one encountered by the state evolution, i.e. in the red region in Fig.~\ref{fig:ALL}.

\subsection{Failure of generative models while MCMC succeeds}
Recall now how the flow- or diffusion-based methods use the denoiser (the marginal of the tilted measure) to produce samples from $P_0$. They start with a Gaussian noise at $\gamma=0$ and increase $\gamma$ gradually while transforming the Gaussian noise towards the direction of the marginal of the tilted measure. 
To do this for a specific temperature~$T$, one needs to be able to denoise optimally on a vertical line at {\it all} values of $\gamma \in [0,\infty[$. However, for temperatures below the tri-critical points $T_{\rm tri}=2/3$ (this value is for diffusion, in the spherical $p$-spin model) we see that we encounter the first-order phase transition, and the metastable region (red in Fig.~\ref{fig:ALL}) where optimal denoising is computationally hard and hence the uniform sampling using this strategy is as well. Another, less critical, problem is the fact that the measure will change drastically at the phase transition (as the value $\C^*$ jumps discontinuously) which means one has to be very careful with the discretization of the diffusion process. 

Based on this reasoning, we interpret the phase diagram of the tilted measure as a representation of the presence of a fundamental barrier for sampling by flow- and diffusion-based methods: The sampling scheme corresponds to starting from $\gamma=0$ and going upwards. If this path intersects the hard phase (red) at any time, it means that the Bayesian denoising cannot be performed optimally in an efficient way, thus preventing the sampling scheme from working efficiently in consequence. In particular, for the spherical $p$-spin model, we computed the curves analytically, see Appendix~\ref{sec:PhD-app}, and found that this hurdle is present at temperatures up to the tri-critical point $T_{\rm tri} = 2/3$, strictly larger than the dynamical temperature $T_d=\sqrt{3/8}$ (threshold between the orange and red region at $\gamma=0$) down to which MCMC and Langevin algorithms are predicted to sample efficiently in the literature. 

Indeed, one can write exact equations describing the Langevin dynamics \cite{cugliandolo1993analytical} (and prove them \cite{ben2006cugliandolo}). The analysis shows that Langevin is efficient at sampling for all $T>T_d$ \cite{crisanti1993spherical_dyn}. On the other hand, for $T<T_d$, we are in the so-called "dynamical" spin glass phase and Langevin fails to sample in linear time, this is the ageing regime \cite{cugliandolo1993analytical,bouchaud1998out}. In fact, all polynomial algorithms are conjectured to fail to sample efficiently, as can be proven for any "stable" algorithm~\cite{alaoui2023shattering}. 

The situation for sampling with autoregressive network, analysed via the phase diagram of the pinning measure, is very analogous, see the lower row of Fig.~\ref{fig:ALL}. 
The pinning measure again defines a variation of the original $p$-spin model, and as shown in the appendix we can compute properties at equilibrium by solving $\C^* = {\rm argmax}_\C \, \Phi_{\rm RS}(\C)$ where the RS free entropy reads
\begin{align}
    \Phi_{\rm RS}(\C) &= \frac{\widetilde{\C}}{2} + \frac{1-\theta}{2}\log\left( \frac{2\pi}{\widetilde{\C} + 1}\right)-\theta\log(\sqrt{2\pi }e) -\frac{\C^3}{2T^2}\,, \nonumber \\ \widetilde{\C} &= \frac{3\C^2}{2T^2}\,.
\end{align}
The same reasoning can be applied to this phase diagram (and in fact the so-called decimated versions of message-passing algorithms were proposed as early as in \cite{mezard2002analytic,chavas2005survey}). Concerning the efficiency of sampling, the same phenomenology appears again when interpreting this phase diagram. In fact, for the spherical $p$-spin model, it turns out to be even worse because the temperature where the difficulty arises for the autoregressive method is $T_{\rm tri}=\sqrt{1/2}$, which is larger than the one for flow-based and diffusion models $T_{\rm tri}=2/3$. So in this case, autoregressive-based sampling algorithms perform worse than flow- or diffusion-based. 

\subsection{Other models} Fig.~\ref{fig:ALL} then evaluates the phase diagrams of the tilted and pinning measures for three other models -- the Ising $p$-spin (b), the rank-one matrix estimation (c), and the bicoloring problem on sparse hypergraphs (d). 
For the bicoloring problem, defined on random sparse hypergraphs, we can use the belief propagation equations as Bayesian denoiser \cite{Castellani2003, Gabrié_2017, ricci2019typology}. The resulting equations are quite long, and we defer their presentation in the appendix, along with their derivation using the cavity method. 

We observe the same phenomenology with tri-critical points causing hurdles for flow-, diffusion- and autoregressive methods reaching out to the phase where traditional approaches based on MCMC or Langevin work efficiently. 
In fact, we expect this picture to always appear for any model with the RFOT phenomenology where the dynamical temperature is distinct from the ideal glass or Kauzmann temperature. Such a phenomenology was described in many problems far beyond those we picked to study, and we can hence anticipate follow-up studies identifying analogous phase diagrams in many other problems of interest. 

Finally, we notice that depending on the model, the position of the tri-critical point for flow- and diffusion-based methods is better (e.g. for the spherical $3$-spin, Ising 3-spin) or worse (e.g. for the sparse rank-one matrix estimation) than for the autoregressive methods. In any case, the position of the tri-critical point does depend on the noise channel to which the generative model maps. This leaves open the question of which channel one should use, for each model, to minimize the range of values for which generative model-based sampling is suboptimal compared to MCMC techniques that fail at the dynamical threshold $T_d$. It is not inconceivable that it is possible to reduce $T_{\rm tri}$ very close (or maybe even up to) $T_d$ by optimizing over different distributions in linear interpolant \cite{albergo2023stochastic}, or using non-linear maps \cite{montanari2023posterior}. This is left for future studies.

\subsection{Outperforming MCMC in inference models with a hard phase}
We now discuss a situation more advantageous for modern techniques. Column (c) of Fig.~\ref{fig:ALL} depicts the phase diagram of the sparse rank-one matrix estimation problem that presents an additional interesting feature: here there is a planted ``hidden" signal that we seek to recover. At large values of the noise the signal is hidden and the RFOT-type phenomenology reappears so that the high noise behaviour is identical to the high-temperature models of the other models. 

At low noise values, however, there is another phase transition at $\gamma=\theta=0$ denoted as $\Delta_{\rm alg}$ below which the AMP algorithm solves the estimation problems optimally and above which it does not up to the value $\Delta_{\rm IT}$ (the hard region, in red, is delimited by these two values). Going vertically up in the phase diagram in $\gamma$ or $\theta$ for $\Delta<\Delta_{\rm alg}$ does not cause any encounter of the hard (red) phase, and thus sampling based on flow or diffusion or autoregressive networks works. This has been proven rigorously recently in \cite{wu2019solving} (with some technical regularity assumptions on the denoiser \cite{celentano2023local}).

Yet existing literature collects evidence that in inference problems that present such a hard phase, local dynamics algorithms such as MCMC and Langevin are {\it not able} to sample efficiently until some yet lower values of noise $\Delta_{\rm MCMC}$. In particular, this suggestion was put forward indirectly in \cite{antenucci2019glassy} by arguing that the metastable phase in the hard region is glassy, and this glassy nature extends well beyond the region that is hard for message passing algorithms such as AMP. This was then shown explicitly in follow-up works starting with an analysis of the dynamics in a mixed spiked matrix-tensor model in \cite{mannelli2020marvels}, in the phase retrieval problem \cite{sarao2020complex}, the planted coloring \cite{angelini2023limits} and on a rigorous basis in the planted clique problem \cite{chen2023almost}. Works such as \cite{sarao2020optimization} suggest that over-parametrization observed in modern neural networks mitigates those hurdles, and this may be one of the reasons why over-parametrization is beneficial. 

In light of these works, it is interesting to note that sampling based on flow, diffusion or autoregressive networks also avoids hurdles stemming from the glassiness of the hard phase and rather effortlessly so by working in the space of marginals rather than configurations directly. The phase diagram presented in Fig.~\ref{fig:ALL} indicates that both diffusion and autoregressive networks sample the $P_0$ efficiently for any $\Delta < \Delta_{\rm alg}$. This poses an intriguing question for future work of whether over-parametrization of neural networks that are learning the denoisers from data would still be so beneficial in these methods. 

Finally, we comment on the relevance of these findings beyond the specific model discussed here, and in particular for the study of physical objects in finite dimension with short-range interaction. Do we expect a problem embedded in finite dimension to suffer the same fate as the one discussed here? While similar phase diagrams as Fig.~\ref{fig:ALL} have been observed in finite dimension in e.g.~\cite{biroli2008thermodynamic,cammarota2013random}, the phenomenology of first-order transition is different. From nucleation arguments, the exponentially hard denoising phase is not expected to exist in finite dimension~\cite{krzakala2011melting}. Indeed, it can be proven rigorously that for any graph that can be embedded in a finite-dimensional lattice, an efficient algorithm exists~\cite{el2021computational}. In this case, the analysis of whether a good denoiser can be learned with a neural network will require a finer study, depending on the discretization of the ODE close to the transition, and on the number of points in the data set (perhaps in the vein of~\cite{biroli2023generative}). These are interesting potential new directions of research.

\section*{Conclusion}
Our investigation into the efficiency of sampling with modern generative models, in comparison with traditional methods, reveals distinct strengths and weaknesses for each. By examining a specific class of probability distributions from statistical physics, we identified parameters where either method excels or falls short. Significantly, our approach highlighted challenges stemming from a first-order discontinuous transition for generative models-based sampling techniques even in regions of parameters where traditional samplers work efficiently. While generative models have shown promise across various applications, it is crucial to understand their potential pitfalls and advantages in specific contexts, and our paper makes a key step in this direction. 

\section*{Acknowledgments}
We acknowledge funding from the Swiss National Science Foundation grant SNFS OperaGOST and SMArtNet. We also thank Ahmed El Alaoui, Hugo Cui, and Eric Vanden-Eijnden for enlightening discussions on these problems.

\newpage
\appendix

\section*{Appendices}

\section{\label{sec:Plant-app}Computing free entropies: the planting trick}

The main technical difficulty is the study of the tilted or pinned measure and its correlation with an equilibrium configuration. Here, we explain how this difficulty is avoided. Given a probability measure $P_0$, the tilted measure reads
\begin{equation}
    P_\gamma({\bf x}|{\bf x}_0,{\bf z}) \propto  e^{\gamma(t)^2\langle {\bf x}, {\bf x}_0 \rangle + \gamma(t)\langle z,{\bf x} \rangle - \frac{\gamma(t)^2}{2}\|{\bf x}\|^2} P_0({\bf x})\,.
    \label{tilted-measure-app}
\end{equation}
We shall restrict ourselves to ``planted" models with a hidden assignment in the following. While this is the case for actual inference statistical models (such as the sparse Wigner model we consider), it turns out that the $p$-spin \cite{lesieur2015phase} and the NAESAT model~\cite{ding2014satisfiability} are contiguous to their planted version as long as $T>T_K$ (for $p$-spin) or $\alpha<\alpha_K$ for NAESAT. For the $p$-spin models, such contiguity has also been rigorously established in certain setups \cite{perry2016statistical,Lesieur2017Statistical,jagannath2020statistical}.

For all practical purposes, we shall thus study planted models. In this case, the difficulty is greatly simplified, as the joint distribution of the planted vector and the disorder equals the joint distribution of the disorder and an equilibrium configuration. 
This equivalence, along with the contiguity with the planted models for non-inference problems, allows us to replace the equilibrium configuration ${\bf x_0}$ in the tilted measure with the planted vector.

For the case of the Sherrington-Kirkpatrick (SK) model, such an equivalence was rigorously proven and utilized in \cite{el2022sampling} using the contiguity between the planted and unplanted models. Concretely, Proposition 4.2 in \cite{el2022sampling} shows the equivalence between the following two methods for generating the tilted measure:
\begin{enumerate}
    \item Sample ${\bf x_0}$ uniformly. Then sample the interaction matrix $J$ as $J=\frac{\beta}{n}{\bf x_0}{\bf x_0}^\top+W$ where $W \sim \text{GOE}$.
    \item Sample $J \sim \text{GOE}$. Then sample ${\bf x_0} \sim P_{\rm SK}(J)$,
\end{enumerate}
where $P_{\rm SK}(J)$ denotes the Boltzmann Sherrington-Kirkpatrick measure with interaction matrix $J$.
Similarly, based on the contiguity between planted and unplanted models, we assume that the above equivalence holds for $T>T_K$ (for $p$-spin) and for $\alpha<\alpha_K$ (for NAESAT).
We note that for inference problems, i.e. when $J$ is planted for both points (1), (2) above, the equivalence holds directly based on the definition of the posterior measure.

Additionally, the tilted measure can be then seen, in terms of free entropy, as the measure associated now to an inference problem. Consider for concreteness the planted $p$-spin, also called the spike-tensor model \cite{Richard2014,Lesieur2017Statistical}: One extracts a random vector $X^0 \in \mathbb R^N$ from a Gaussian or a Rademacher distribution, and then one aims at recovering $X^0$ from the measure of a) a Gaussian measurement $Y=\alpha X^0 + \beta Z$ and b) a noisy tensor measurement $J_{ijk} = X^0_iX^0_jX^0_k + \frac{1}{T} \xi_{ijk} \forall i<j<k$. The Bayesian posterior of this model is nothing but the tilted measure~(\ref{tilted-measure-app}) applied to the tensor spike model.

Interestingly, such considerations are not new. Nishimori \cite{nishimori2001statistical} and later Iba \cite{iba1999nishimori} already used a similar trick, and \cite{montanari2006rigorous,krzakala2009hiding,zdeborova2010generalization} used it to discuss the dynamics starting from equilibrium conditions, while \cite{wong2000error} used it in the context of error correction, all for the $p$-spin model. Recently, the same technique was used to prove the clustering property in the $p$-spin model \cite{alaoui2023shattering}.

Adding a Gaussian measurement to an inference problem is also a classical trick used when proving free energies in the mathematical physics literature, especially in the context of Guerra interpolation \cite{guerra2014interpolation,panchenko2013parisi} for Bayes optimal models, see e.g. \cite{korada2009exact,barbier2019adaptive,barbier2019optimal,barbier2021overlap}, and thus such free energies have been solved rigorously as well.

How does this change for the decimated problem? In this case, the additional Gaussian channel is replaced by an erasure channel. This is precisely one of the alternatives used as well in the mathematical physics literature! Indeed, the pinning lemma \cite{abbe2013conditional} (see also \cite{coja2017information}) is often used instead of the Gaussian channel.

These considerations are really helpful, as we can now solve these problems as a simple variant of problems already solved in the literature, often rigorously (in the case of the $p$-spin we refer to \cite{lesieur2017constrained} and \cite{mannelli2020marvels}, and for the spike Wigner model to \cite{deshpande2014information,dia2016mutual,lelarge2019fundamental}. 

For the sparse case, a rigorous control is harder, and thus we shall simply stay at the level of rigour of the cavity method \cite{mezard1987spin} and use the results of \cite{Gabrié_2017}. Note however that the method is trustworthy \cite{ding2014satisfiability}.

Finally, since these are variations of known inference problems, we can leverage on the existing work on approximate message passing \cite{donoho2010message} for the $p$-spin model \cite{Richard2014,lesieur2017constrained} and the spike model \cite{deshpande2014information}, see in particular \cite{Lesieur2017Statistical} for a detailed presentation.

\section{\label{sec:PhD-app}Asymptotic solutions and Phase Diagrams}
In this section, we present how each of the phase diagrams we presented in the main text can be produced. Specifically, we first remind our definition of tilted and pinning measures and of the order parameters we are going to use for the analysis. 
After defining the probability distribution associated to each problem, 
we give the expression of the RS free entropy, from which one can compute the values of the parameters of the problem at which the \textit{spinodal points} are located, i.e. the points at which the potential develops a second maxima by continuous deformation, and also the point corresponding to the \textit{IT transition}, i.e. where the two maxima exchange the role of global and local maxima.

We then report the expression of the denoisers (AMP/BP) used in the sampling schemes illustrated in the main text, with their associated self-consistent asymptotic equations (state evolution for AMP and Cavity equations for BP). Looking at the fixed points of these equations, starting from an uninformed and informed initialization, we can plot the difference between the values of the order parameter reached at the fixed point in these two cases, which allows detecting the phases in which multiple fixed points are present.

In Eq.~(\ref{tilted-measure-app}) we have recalled the expression for the tilted measure characterizing the flow-based sampling scheme. As we already mentioned, $\gamma$ is nothing but a rescaled sampling time, such that studying the properties of the tilted measure varying $\gamma$ allows us to characterize the properties of the Bayesian denoising problem at all times during sampling.

In the same way, let us remind the pinning measure, which we use to analyse the autoregressive-based sampling procedure:
\begin{equation}
    P_{\theta}({\bf x}| {\bf x}_0, S_\theta) \propto  P_0({\bf x}) \prod_{i \in S_\theta}     \delta([{x}_i-[{\bf x}_0]_{i}) 
    \label{pinning-measure-app}
\end{equation}
where $\theta$ is the fraction of pinned variables.

Finally, let us remind that we shall study the evolution in time (or equivalently in $\gamma \in [0,\infty[$ (AWGN) and $\theta \in [0,1]$ (BEC)) of the following order parameters:
\begin{eqnarray}
\label{eq:Mu-app}
\M(\gamma) &\equiv& \frac{1}{N}\mathbb{E}[ \hat {\bf x}(\gamma) \cdot {\bf x}_0]\,,\\ 
\label{eq:Chi-app}
\C(\gamma) &\equiv& \frac{1}{N}\mathbb{E}[\norm{ \hat {\bf x}(\gamma)}^2]\,,
\end{eqnarray}
and analogously we can define $\M(\theta)$ and $\C(\theta)$. Concretely, we will consider only cases in which the Nishimori identities hold, such that these two quantities always coincide, and thus we will restrict our analysis to $\chi$. 

Let us now go through each one of the models mentioned in the main.

\subsection{\label{sec:SpSK}Sparse rank-one matrix factorization}
We consider the Bayes-Optimal rank-one matrix estimation (or rank-one matrix factorization) problem: 

Given a hidden vector ${\bf x^*}$, sampled from the so-called \textit{Rademacher-Bernoulli} prior distribution
    \begin{equation}
        P_X(x) = (1-\rho)\delta_{x,0} + \frac{\rho}{2}\left( \delta_{x,+1} + \delta_{x,-1} \right)\,, \quad x^*_i \sim P_X\;\forall\,i\,. \nonumber
    \end{equation}
one has access to noisy observations, that is a matrix $J_{ij}$ is composed by a rank-one spike plus i.i.d. Gaussian noise:
    \begin{align}
   & J_{ij} = \frac{x^*_ix^*_j}{\sqrt{N}} + \widetilde{z}_{ij}\,, \quad \widetilde{z}_{ij}= \widetilde{z}_{ji}\sim\Ncl(0,\Delta)\,;
    \nonumber
    \end{align}
and the goal is to infer ${\bf x^*}$ in the best way possible. There are many important problems in statistics and machine learning that can be expressed in this way \cite{donoho2018optimal}, and this model has been the subject of many works both from the statistics \cite{baik2006eigenvalues,rangan2012iterative,deshpande2014information,krzakala2016mutual,lelarge2019fundamental} and the statistical physics communities \cite{lesieur2015mmse,lesieur2017constrained}. The presentation of this problem closely follows the study presented in \cite{lesieur2017constrained}.

From the Bayesian point of view, the problem amounts to sampling from the posterior.
One way to introduce the model is  through the following probability distribution:
\begin{equation}
    P_0({\bf x)} \propto \prod_i P_X(x_i)\prod_{i<j}\exp\left(\frac{1}{\Delta\sqrt{N}}J_{ij}x_ix_j - \frac{1}{2\Delta N}x_i^2x_j^2 \right)\,.
    \end{equation}
With this distribution, ${\bf x}$ will be a random vector with, on average, a fraction $\rho$ of components that are Ising spin variables (i.e. each $x_i$ takes values $\pm1$) and the rest of the entries that are put to zero, in such a way that the parameter $\rho$ controls the sparsity of the vector we want to retrieve.

As discussed in Appendix~\ref{sec:Plant-app}, the \textbf{tilted measure} with an equilibrium vector ${\bf x}_0$ is equivalent to the one where the vector ${\bf x}_0$ is planted. Therefore, in what follows, we shall assume that ${\bf x}_0$ corresponds to the planted configuration.
We thus obtain the following tilted measure for diffusion and flow-based models:
\begin{eqnarray}\label{eq:tilted_plant}
    P_{\gamma}({\bf x)} &=& \frac 1{Z_{\gamma}}  \left(\prod_i P_X(x_i)\right) e^{\gamma(t)^2\langle {\bf x}, {\bf x}_0 \rangle + \gamma(t)\langle {\bf z},{\bf x} \rangle - \frac{\gamma(t)^2}{2}\|{\bf x}\||^2} 
  \left( \prod_{i<j}e^{\frac{1}{\Delta\sqrt{N}} \tilde z_{ij}x_ix_j + 
 \frac 1{\Delta\N}x_ix_j[{\bf x}_0]_i[{\bf x}_0]_j
  - \frac{1}{2\Delta N}x_i^2x_j^2 }\right)
  \end{eqnarray}

 With respect to the original model, the tilted measure simply includes in addition, a field in the planted direction, a random field and a renormalization of the constant in front of the quadratic part. As mentioned in Appendix~\ref{sec:Plant-app}, this corresponds to an equivalent inference problem with an additional Gaussian measurement.

The \textbf{pinned measure} is slightly different. In this case, it modifies the original problems as (denoting the pinned list as $S_\theta$):
\begin{eqnarray}
    P_{\theta}({\bf x)} &=& \frac 1{Z_{\theta}}  
    \left(\prod_{i \notin S_\theta} P_X(x_i)\right) 
        \left(\prod_{i \in S_\theta}  \delta(x_i-x_i^*)\right) 
  \left( \prod_{i<j}e^{\frac{1}{\Delta\sqrt{N}} \tilde z_{ij}x_ix_j + 
 \frac 1{\Delta\N}x_ix_j[{\bf x}_0]_i[{\bf x}_0]_j
  - \frac{1}{2\Delta N}x_i^2x_j^2 }\right)
  \end{eqnarray}

\subsubsection{Replica free entropy} The replica formula for such problems can be found in many places and we refer to the mathematical literature for the detailed rigorous statements: \cite{deshpande2014information,krzakala2016mutual,dia2016mutual,lelarge2019fundamental,barbier2019adaptive,el2018estimation}. In particular, \cite{barbier2019adaptive} gives a generic proof using the adaptive interpolation methods in the presence of a Gaussian channel is added, which turns out to give the same measure as the tilted one, while \cite{lelarge2019fundamental} uses instead a pinned measure. In both case, we can thus adapt the results in the literature:
The asymptotic free entropy is given by the maximum of the so-called replica symmetric potential\cite{mezard1987spin} :
\begin{eqnarray}
    \frac 1N {\mathbb E}_{{\bf x}_0,{\bf z},\tilde {\bf z}} \log Z_{\gamma} &\xrightarrow[N \to \infty]{}& {\rm argmax  }\;\Phi_{RS}(m)\\
    \Phi_{RS}(m) &=& \mathbb{E}_{w,x_0}\left[ \log Z_x\left(\frac{m}{\Delta},\frac{m}{\Delta} x_0 + \sqrt{\frac{m}{\Delta}}w \right)\right] - \frac{m^2}{4\Delta}
    \label{eq:PhiRS_SK}
\end{eqnarray}
where $m$ is the order parameter of the problem, $x_0\sim P_X$, $w\sim \mathcal{N}(0,1)$ while $Z_x$ depends on the specific measure considered. The same is valid for the pinned measure, as long as one substitutes $Z_{\gamma}$ with $Z_{\theta}$.

\paragraph*{Tilted measure:}
In the case of the {\bf tilted measure}~(\ref{tilted-measure-app}), we have
\begin{equation}
\begin{split}
    Z_x(A,B;x_0) &= \int \dd x P_X(x)\exp(\gamma^2xx_0 + \gamma wx - \gamma^2x^2/2) \exp(Bx - Ax^2/2)\\  
    &=\rho e^{ -(A+\gamma^2)/2}\cosh(B + \gamma w + \gamma^2 x_0) + (1-\rho)
\end{split}
\end{equation}
that we can put into Eq.~(\ref{eq:PhiRS_SK}) to get
\begin{equation}
\begin{aligned}
    \Phi_{\rm RS}(\C) = &\rho \E_w \left[ \log\left( (1-\rho) + \rho e^{-\widetilde{\C}/2}\cosh\left(\widetilde{\C} +\sqrt{\widetilde{\C}}w\right)\right) \right] \\ &+ (1-\rho) \E_w \left[ \log\left( (1-\rho) + \rho e^{-\widetilde{\C}/2}\cosh\left(\sqrt{\widetilde{\C}}w\right)\right) \right] - \frac{\C^2}{4\Delta}\,, \quad \widetilde{\C} = \frac{\C}{\Delta} + \gamma^2\,.
\end{aligned}
\end{equation}

\paragraph*{Pinning measure:}
Meanwhile, considering the {\bf pinning measure}~(\ref{pinning-measure-app}) leads to
\begin{equation}
\begin{split}
    Z_x(A,B;x_0) &= 
    \begin{cases}
        \int \dd x P_X(x)\delta_{x,x_0} \exp(Bx - Ax^2/2) & \text{with probability } \theta \\
        \int \dd x P_X(x) \exp(Bx - Ax^2/2) & \text{with probability } 1 - \theta
    \end{cases}
    \\  
    &=
    \begin{cases}
        P_X(x_0)\exp\left( Bx_0 - Ax_0^2/2\right) & \text{with probability } \theta \\
        \rho \exp(-A/2)\cosh(B) + (1-\rho) & \text{with probability } 1 - \theta
    \end{cases}
\end{split}
\end{equation}
which in turn gives
\begin{equation}
\begin{aligned}
    \Phi_{\rm RS}(\C) = &\theta \left( \rho \log \rho + (1-\rho)\log(1-\rho) + \frac{\rho}{2} \widetilde{\C}\right) \\ &+(1-\theta)\bigg[\rho \E_w \left[ \log\left( (1-\rho) + \rho e^{-\widetilde{\C}/2}\cosh\left(\widetilde{\C} +\sqrt{\widetilde{\C}}w\right)\right) \right] \\ &+ (1-\rho) \E_w \left[ \log\left( (1-\rho) + \rho e^{-\widetilde{\C}/2}\cosh\left(\sqrt{\widetilde{\C}}w\right)\right) \right]\bigg] - \frac{\C^2}{4\Delta}\,, \quad \widetilde{\C} = \frac{\C}{\Delta}\,.
\end{aligned}
\end{equation}

\subsubsection{Message-Passing algorithm} The derivation of the AMP algorithm for this problem has a long history, and is connected to the Thouless-Anderson-Palmer (TAP) equations \cite{thouless1977solution}. For this problem, the introduction of TAP as an iterative algorithm is due to Bolthausen \cite{bolthausen2014iterative} and has been adapted to the present situation in \cite{rangan2012iterative,deshpande2014information}. 

\paragraph*{Tilted measure:}
The equivalence between the tilted and the planted measure with external field allows us to reduce the AMP iterations for the tilted measure to the ones for an associated inference problem. In \cite{lesieur2017constrained}, the authors provided a framework for deriving the AMP iterates for such an inference problem involving pair-wise interactions between spins. While the tilted measure defined by Eq.~(\ref{eq:tilted_plant}) involves additional random and planted fields, the generality of the derivation in \cite{lesieur2017constrained} allows them to be straightforwardly incorporated into the single-site factors $P_X(x_i)$ in \cite{lesieur2017constrained}. Through an adaptation of the derivation of Equations 66, 67 in  \cite{lesieur2017constrained}, we obtain:

\begin{equation}\label{eq:AMP_SK_diff}
\begin{cases}
    \widehat{x}_i^{t+1} = \frac{\rho \tanh(B_i^t + \frac{\alpha(t)}{\beta(t)^2}[{\bf Y}_t]_i)}{\rho + \frac{(1-\rho)\exp((A^t + \gamma^2)/2)}{\cosh(B_i^t + \frac{\alpha(t)}{\beta(t)^2}[{\bf Y}_t]_i)}}, \quad
    \sigma_i^{t+1} = \rho \frac{\rho + (1-\rho)e^{(A^t+\gamma^2)/2}\cosh(B_i^t +\frac{\alpha(t)}{\beta(t)^2}[{\bf Y}_t]_i)}{\left( \rho \cosh(B_i^t + \frac{\alpha(t)}{\beta(t)^2}[{\bf Y}_t]_i) + (1-\rho) \exp((A+\gamma^2)/2)\right)^2}\\
    A^t = \frac{  \|{\bf\widehat{ x}^t}\|_2^2}{\Delta N}\,; \quad
    B_i^t = \frac{1}{\Delta\sqrt{N}}{\bf J_i}\cdot {\bf \widehat{x}^{t}} - \frac{1}{N\Delta} \widehat{x}_i^{t-1}\sum_k \sigma_k^t
\end{cases}
\end{equation}
where $\alpha(t)$ and $\beta(t)$ are the functions defining the interpolant process, fixed at the start, and ${\bf Y}_t$ is the value of the noisy observation at time $t$.

\paragraph*{Pinning measure:}
When considering the pinning measure~(\ref{pinning-measure-app}), the AMP equations are only a slight variation of the ones presented for the flow-based case.

Specifically, in autoregressive-based sampling we choose a fraction $\theta$ of the variables, for which we fix $\widehat{x}_i^{t} = [{\bf x}_0]_i,\; \sigma_i^{t} = 0$; this is due to the fact that their posterior means are completely polarized on the solution. For the rest of the variables, a fraction $1-\theta$, the AMP equations are exactly the ones for diffusion in Eq.~(\ref{eq:AMP_SK_diff}), at $\gamma=0$.

The resulting algorithm is thus
\begin{equation}\label{eq:AMP_SK_deci}
\begin{cases}
    &\widehat{x}_i^{t+1} = 
    \begin{cases}
        [{\bf x}_0]_i & \text{if } i \in {S_{\theta}} \\
        \frac{\rho \tanh(B_i^t)}{\rho + \frac{(1-\rho)\exp(A^t/2)}{\cosh(B_i^t)}}, & \text{otherwise}
    \end{cases}\,,\quad\sigma_i^{t+1} = 
    \begin{cases}
        0 & \text{if } i \in {S_{\theta}} \\
        \rho \frac{\rho + (1-\rho)e^{A^t/2}\cosh(B_i^t)}{\left( \rho \cosh(B_i^t) + (1-\rho) \exp(A/2)\right)^2} & \text{otherwise}\\
    \end{cases}\\
    &A^t = \frac{  \|{\bf\widehat{ x}^t}\|_2^2}{\Delta N}\,; \quad
    B_i^t = \frac{1}{\Delta\sqrt{N}}{\bf J_i}\cdot {\bf \widehat{x}^{t}} - \frac{1}{N\Delta} \widehat{x}_i^{t-1}\sum_k \sigma_k^t
\end{cases}
\end{equation}
\subsubsection{State Evolution equations}
The advantage of AMP is that it can be rigorously tracked by the State Evolution equations \cite{bolthausen2014iterative,rangan2012iterative,donoho2010message} that turn out to be nothing but the fixed point equations of the associated replica free entropy. Again, we can use the generic results reported in \cite{lesieur2017constrained} to get
\begin{eqnarray}
    m^{t+1} = \mathbb{E}_{x_0,w}\left[ f_{\rm in}\left(\frac{m^t}{\Delta},\frac{m^t}{\Delta} x_0 + \sqrt{\frac{m^t}{\Delta}}w \right) x_0 \right]
\end{eqnarray}

where $x_0\sim P_X$, $w\sim\mathcal{N}(0,1)$ and $f_{\rm in}$ is the input channel and depends on the specific problem.

\paragraph*{Tilted measure:} For the \textbf{tilted measure}~(\ref{tilted-measure-app}) we get

\begin{equation}
\begin{split}
    f_{\rm in}(A,B;x_0) &= \frac{\int \dd x x P_X(x)\exp(\gamma^2xx_0 + \gamma wx - \gamma^2x^2/2) \exp(Bx - Ax^2/2)}{\int \dd x P_X(x)\exp(\gamma^2xx_0 + \gamma wx - \gamma^2x^2/2) \exp(Bx - Ax^2/2)}
    \\
    &=\frac{\rho\tanh(B + \gamma w + \gamma^2 x_0)}{\rho + \frac{(1-\rho)\exp\left((A+\gamma^2)/2\right)}{\cosh(B + \gamma w + \gamma^2 x_0)}}
\end{split}
\end{equation}

which leads to the state evolution equations:
\begin{equation}
    \C^{t+1} = \rho^2 \E_w\left[\frac{\tanh(\widetilde{\C}^{t} + \sqrt{\widetilde{\C}^{t}} w)}{\rho + \frac{(1-\rho)\exp(\widetilde{\C}^{t}/2)}{\cosh(\widetilde{\C}^{t} + \sqrt{\widetilde{\C}^{t}} w)}}\right]\,,\quad \widetilde{\C}^{t} \equiv \frac{\C^t}{\Delta} + \gamma^2\,,\quad w\sim \Ncl(0,1)
\end{equation}
\paragraph*{Pinning measure:} In the same way, the \textbf{pinning measure}~(\ref{pinning-measure-app}) is associated to
\begin{equation}
\begin{split}
    f_{\rm in}(A,B;x_0) &= 
    \begin{cases}
        \frac{P_X(x_0)x_0\exp\left( Bx_0 - Ax_0^2/2\right)}{P_X(x_0)\exp\left( Bx_0 - Ax_0^2/2\right)} & \text{with probability } \theta \\
        \frac{\int \dd x x P_X(x) \exp(Bx - Ax^2/2)}{\int \dd x P_X(x) \exp(Bx - Ax^2/2)} & \text{with probability } 1-\theta
    \end{cases} \\
    &= 
    \begin{cases}
        x_0 & \text{with probability } \theta \\
        \rho\tanh(B)/(\rho + (1-\rho)\exp\left(A/2\right)/\cosh(B)) & \text{with probability } 1-\theta
    \end{cases}
\end{split}
\end{equation}
and consequently to the fixed point equations
\begin{equation}
    \frac{\C^{t+1}}{\rho} = \theta + (1-\theta) \E_z\left[\frac{\rho\tanh(\widetilde{\C}^{t} + \sqrt{\widetilde{\C}^{t}} w)}{\rho + \frac{(1-\rho)\exp(\widetilde{\C}^{t}/2)}{\cosh(\widetilde{\C}^{t} + \sqrt{\widetilde{\C}^{t}} w)}}\right]\,,\quad \widetilde{\C}^{t} \equiv \frac{\C^t}{\Delta}\,,\quad w\sim \Ncl(0,1)\,.
\end{equation}

\subsubsection{Phase diagrams}
\begin{figure}[t!]
    \centering
    \includegraphics[width=0.49\textwidth]{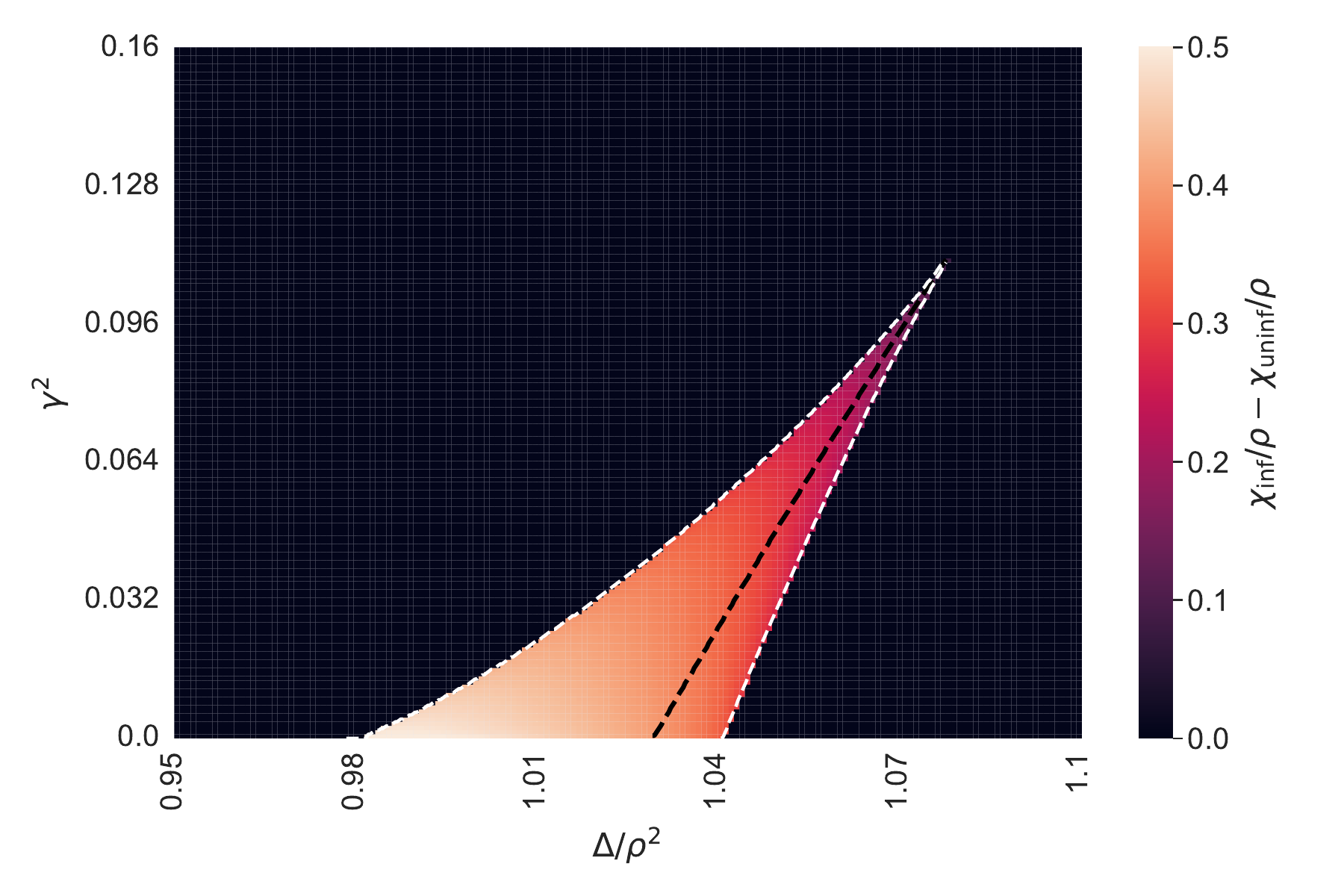}
    \includegraphics[width=0.49\textwidth]{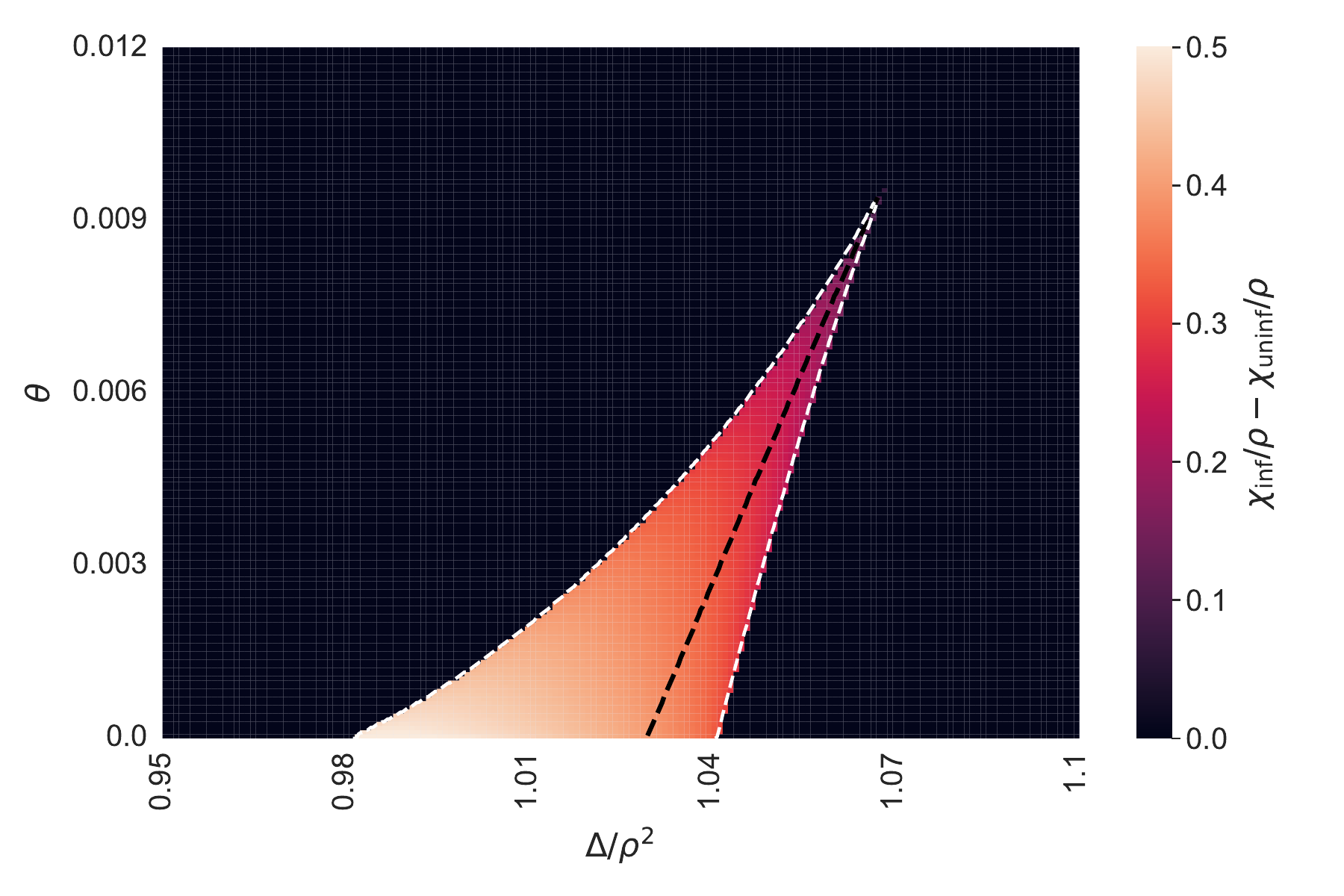}
    \caption{Phase diagrams for flow-based sampling (left) and autoregressive-based sampling (right) for the \textit{sparse rank-one} model, with Rademacher-Bernoulli prior and sparsity $\rho=0.08$. On the x-axis we put the rescaled signal-to-noise-ratio $\Delta/\rho^2$ and on the y-axis the ratio $\gamma^2 = \alpha^2/\beta^2$ (left) and the decimated ratio $\theta$ (right). 
    We compute the order parameter $\chi/\rho$, defined in~(\ref{eq:Chi-app}), both from an uninformed and an informed initialization, and we plot the difference between the two. The dashed white lines are the \textit{spinodal lines}, while the dashed black one is the \textit{IT threshold}. Note that for the flow-based case (left panel), we show explicitly in Fig.~\ref{fig:CUT-1.05} and Fig.~\ref{fig:CUT-0.98} the behaviour of the free entropy functional for $\Delta/\rho^2=1.05$ and $\Delta/\rho^2=0.98$.     
    Here in both plots (left and right) we have that the dynamical transition is at $\Delta_d/\rho^2\approx 1.041$, the IT/Kauzmann transition is at $\Delta_{IT}/\rho^2 \approx 1.029$, while the tri-critical points are at $\Delta_{\rm tri}/\rho^2\approx1.08$ for flow-based and $\Delta_{\rm tri}/\rho^2\approx1.069$ for autoregressive based sampling.}

    \label{fig:RB_SK}
\end{figure}
In Fig.~\ref{fig:RB_SK} we present the phase diagrams for the sparse rank-one matrix factorization problem, choosing $\rho=0.08$ as value for sparsity. We remind that $\rho$ must be small enough to observe a first-order phenomenology \cite{krzakala2016mutual}.

The section of the parameter space displayed is the same as in the plots presented in the main text, but here we display directly the difference $\C_{\rm inf} - \C_{\rm uninf}$, so that the coloured zones of the plots are the ones displaying multiple fixed points, as opposed to the black ones. We furthermore draw as white dashed lines the spinodal points, and as a black dashed line the IT threshold, both defined at the beginning of Appendix~\ref{sec:PhD-app}.
For the flow-based plot, we also show explicitly in Fig.~\ref{fig:CUT-1.05} and Fig.~\ref{fig:CUT-0.98} the behaviour of the free entropy functional for $\Delta/\rho^2=1.05$ and $\Delta/\rho^2=0.98$. For Fig.~\ref{fig:CUT-1.05} the first order transition is apparent, while for Fig.~\ref{fig:CUT-0.98} the transition is found to be continuous.

\begin{figure}[ht!]
    \centering
   
    \textbf{Spike Wigner model: $\Phi_{\rm RS}(\chi)$ for $ \Delta_d/\rho^2<\Delta/\rho^2 = 1.05 <{\Delta}_{\rm tri}/\rho^2$ }\par\medskip
    \vspace*{-0.25cm}
\includegraphics[width=0.24\textwidth]{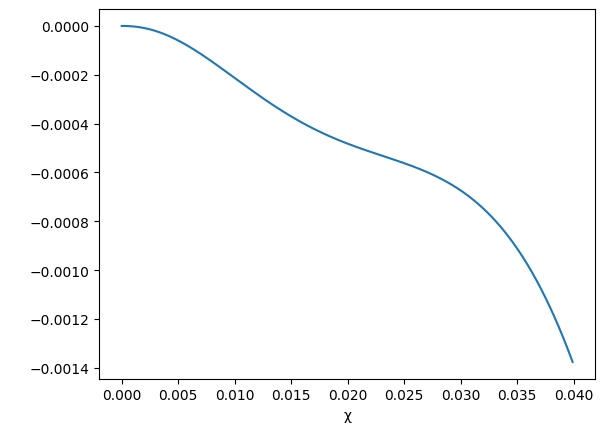}    
    \includegraphics[width=0.24\textwidth]{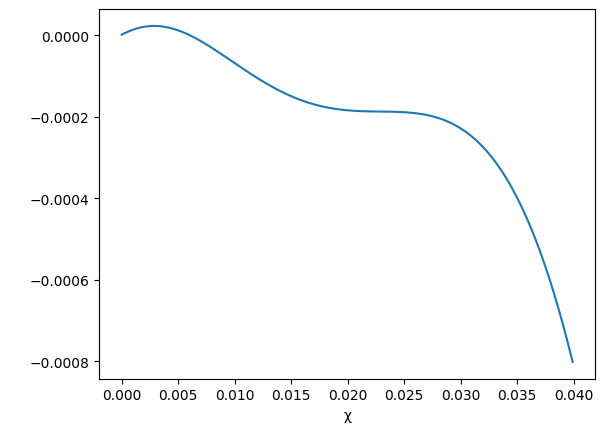}  
    \includegraphics[width=0.24\textwidth]{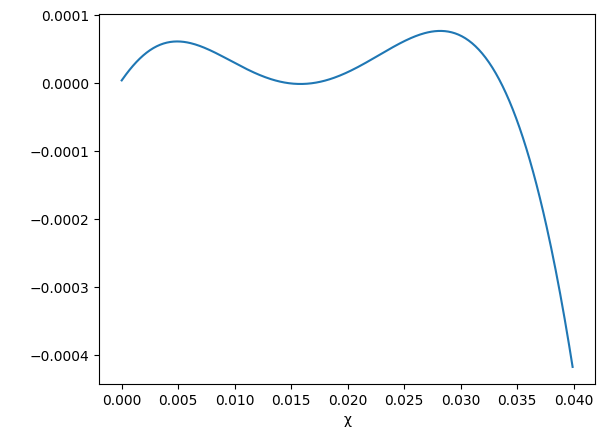}
    \includegraphics[width=0.24\textwidth]{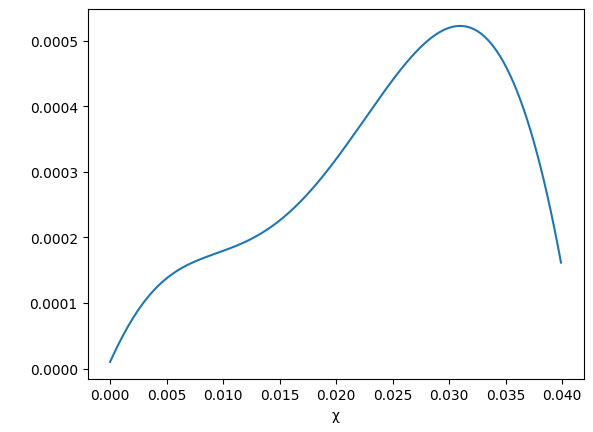}
      \text{(a) $\gamma^2=0$}\hspace{2.cm} \text{(b) $\gamma^2=0.03$} \hspace{2.cm}\text{(c) $\gamma^2=0.05$} \hspace{1.5cm} \text{(d) $\gamma^2=0.08$}  
        \vspace{0.5cm}
          \caption{The free entropy function $\Phi_{\rm RS}(\chi)$ in the region where the flow-based model fails because of the jump around panel(c). }
 \label{fig:CUT-1.05}
\end{figure}
\begin{figure}[ht!]
    \centering   
    \textbf{Spike Wigner model: $\Phi_{\rm RS}(\chi)$ for $ \Delta/\rho^2 = 0.98 <\Delta_{\rm alg}/\rho^2$ }\par\medskip
    \vspace*{-0.25cm}
\includegraphics[width=0.24\textwidth]{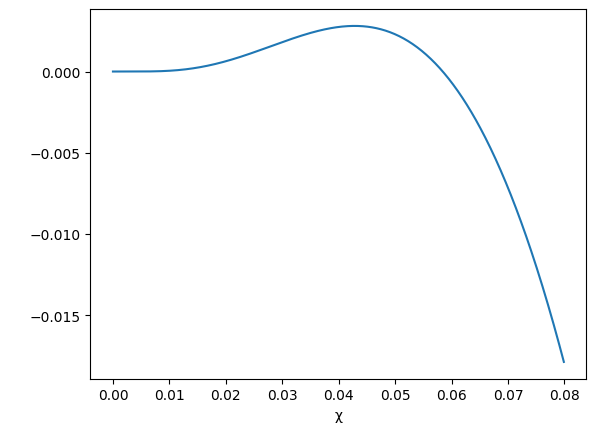}   
    \includegraphics[width=0.24\textwidth]{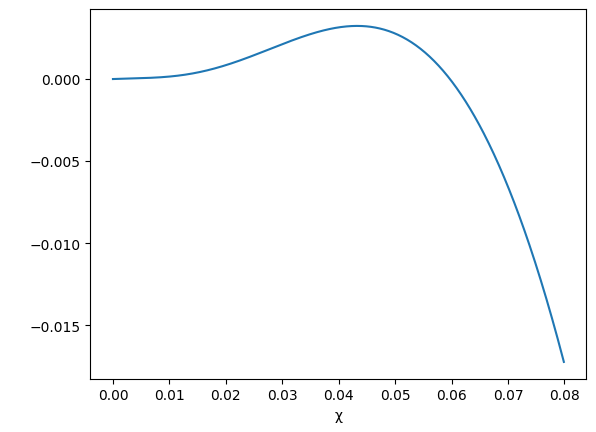}
    \includegraphics[width=0.24\textwidth]{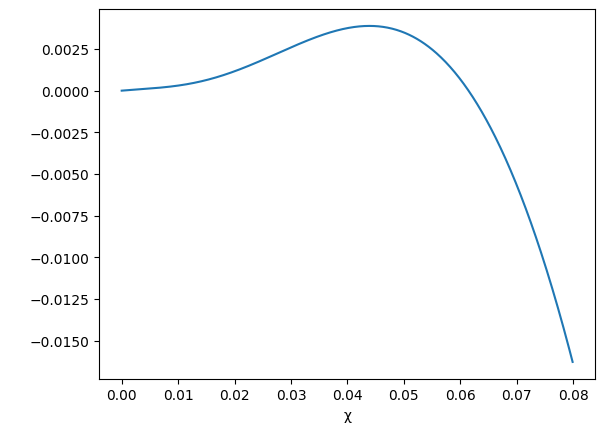}
    \includegraphics[width=0.24\textwidth]{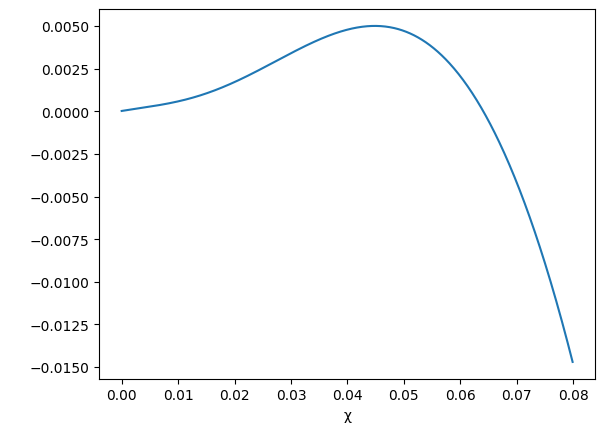}
      \text{(a) $\gamma^2=0$}\hspace{2.cm} \text{(b) $\gamma^2=0.02$} \hspace{2.cm}\text{(c) $\gamma^2=0.05$} \hspace{1.5cm} \text{(d) $\gamma^2=0.1$}  
        \vspace{0.5cm}
          \caption{The free entropy function $\Phi_{\rm RS}(\chi)$ in the region where the flow-based model succeeds as the position of the maxima is always unique.}
           \label{fig:CUT-0.98}
\end{figure}

\paragraph*{Autoregressive networks vs Flows} As we can see from the plots, for this model the tri-critical point for flow-based sampling is at $\Delta_{\rm tri}/\rho^2 \approx 1.08$, while for autoregressive-based sampling is at $\Delta_{\rm tri}/\rho^2 \approx 1.069$, meaning that the gap with MCMC and Langevin sampling is smaller in this latter case. In other words, this means that there is a range of values of inverse-SNR $\Delta$ for which autoregressive networks based sampling (along with MCMC and Langevin) is efficient, while flow-based sampling is not.

This appears not to be specific to this particular value of $\rho$, but in all the range of sparsity in which the model presents a first order phase transition, autoregressive-based sampling appears to be more efficient than flow-based sampling, in the sense explained above, as shown in Fig.~\ref{fig:TRIvsRHO}.

\begin{figure}[th!]
    \centering
    \includegraphics[width=0.7\textwidth]{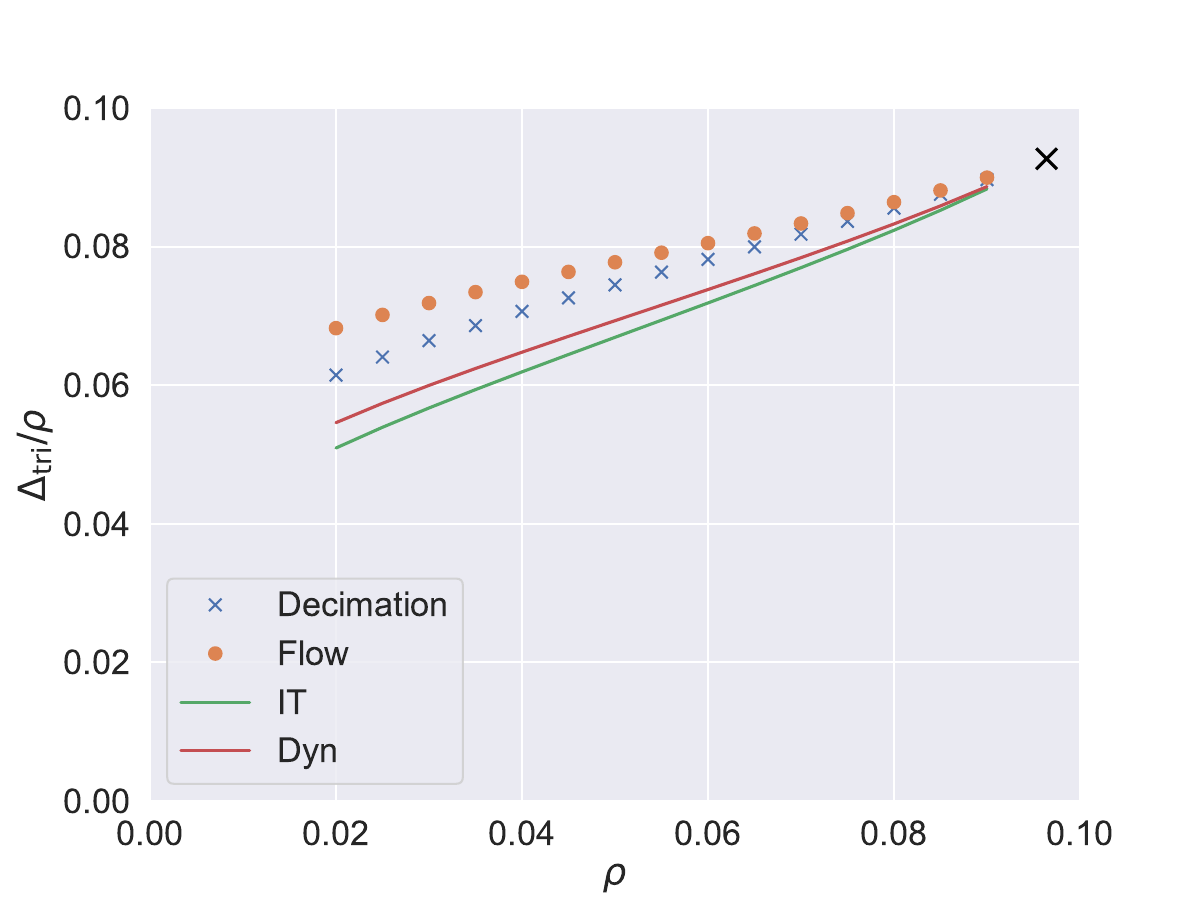}
    \caption{\textbf{Flows vs autoregressive in the sparse rank-one problem.} We plot the values of the tri-critical point for flow-based (orange dots) and autoregressive-based (blue crosses) sampling on the sparse rank-one model, when varying the sparsity parameter $\rho$. Specifically, we plot $\Delta_{\rm tri}/\rho$ for the two cases, comparing them also to the Dynamical transition value $\Delta_d/\rho$ (red line) and the IT transition value $\Delta_{\rm IT}/\rho$ (green line). Finally, we also put a black cross at the point $(\rho_{\rm max},\Delta_{\rm max}/\rho_{\rm max})$, taken from \cite{lesieur2017constrained}, which corresponds to the maximum value of $\rho$ at which there is a first order phase transition.}
    \label{fig:TRIvsRHO}
\end{figure}

\paragraph*{The easy phase (and a subtle point)} For the expert reader, it is worth pointing a subtle difference between what happens at $\Delta=\rho^2$ and $\Delta=\Delta_{\rm alg}<\rho^2$ (see also \cite{lesieur2017constrained}). Indeed, we use here the definition of the easy phase as the absence of a metastable maxima that is trapping the dynamics. In other words, the state evolution of AMP initialized at the uninformed fixed point finds the global maxima. This is indeed what is going on for $\Delta=\Delta_{\rm alg}$, as illustrated in Figure~\ref{fig:CUT-0.98}.

However, it is worth mentioning that already below $\Delta=\rho^2$, the fixed point at zero is unstable, so that there are two fixed point: the "correct" one at large $\chi$, and the one found by AMP at low but non-zero $\chi$. This phenomenon, usually called the Baik-Ben Arous-Peche (BBP) transition, is illustrated in Figure~\ref{fig:ALGvsBBP}. In this phase, while AMP, for instance (but also a standard spectral method such as PCA \cite{baik2005phase,lesieur2017constrained}) would find an estimator correlated with the ground truth (and thus $\chi^*>0$ even at $\gamma=0$), there is still a first-order phase transition and AMP would not be optimal. Since there is a discontinuity, the flow-based method suffers from the same problem, and fails to sample as well. 

\begin{figure}[h!]
    \centering   
    \textbf{The BBP transition  at $\gamma^2=0$:  $\Phi_{\rm RS}(\chi)$ for $ \Delta/\rho^2 = 0.99$ (left) and $\Delta/\rho^2 = 1.01$ (right)}\par\medskip
    \vspace*{-0.25cm}
\includegraphics[width=0.24\textwidth]{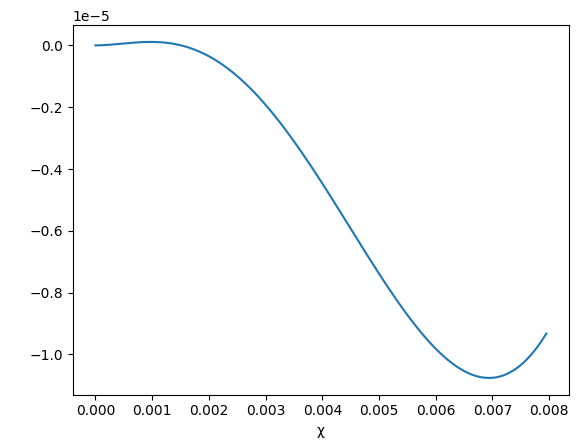}  
    \includegraphics[width=0.24\textwidth]{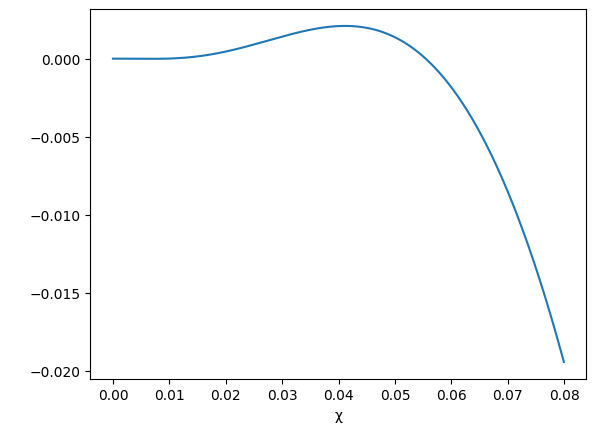}
    \includegraphics[width=0.24\textwidth]{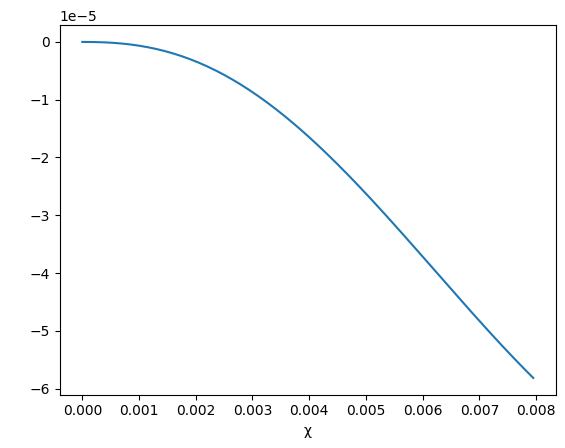}
    \includegraphics[width=0.24\textwidth]{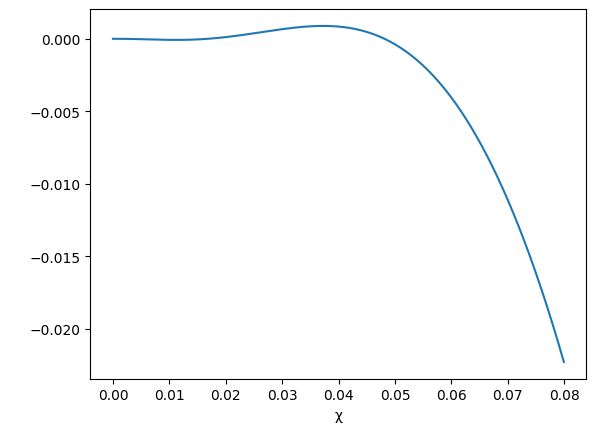}
      \text{(a) $\Delta/\rho^2=0.99$ (zoom)}\hspace{0.5cm} \text{(b) $\Delta/\rho^2=0.99$ (unzoom)}\hspace{0.5cm}\text{(c) $\Delta/\rho^2=1.01$ (zoom)} \hspace{0.5cm} \text{(d) $\Delta/\rho^2=1.01$ (unzoom)} 
        \vspace{0.5cm}
          \caption{Illustration of the BBP\cite{baik2005phase} transition around $\Delta=\rho^2$ where the spurious minima stop to be at zero (see the difference between the zoomed (a) and (c)), but there the correct non-spurious minima is still at larger value of $\chi$ (see the unzoomed (b) and (d)). This is still a hard phase for inference and sampling, and the problem remains so until $\Delta<\Delta_{\rm alg}$}
           \label{fig:ALGvsBBP}
\end{figure}

\subsection{Ising \texorpdfstring{$p$}{p}-spin model}
We consider now the $p$-spin model (here with $p=3$), which is one of the most important statistical physics problems in spin glass theory. First, let us look at the Ising version, defined by the following Hamiltonian:
\begin{align}
    \label{def:pspin-isi}
    {\cal H}({\bf x}) = -\frac{\sqrt{3}}{N} \sum_{i<j<k} J_{ijk}x_ix_jx_k \,,
    \end{align}
where $J_{ijk} \sim \mathcal{N}(0,1)$ and the variables $x_i$ are constrained to be $\pm1$. This version of the model was introduced in \cite{derrida1980random}, and solved with the replica method in \cite{gross1984simplest}. It is hard to underestimate its importance in the spin glass and glass theory as the prototype of the mean-field random first-order model \cite{kirkpatrick1987dynamics,kirkpatrick1987connections,biroli2008thermodynamic,biroli2012random,charbonneau2017glass}.

For temperatures higher than the spin glass, or Kauzmann, temperature, the model can be proven \cite{perry2016statistical,Lesieur2017Statistical,jagannath2020statistical} to be contiguous to its ``planted version", the tensor factorization problem \cite{Richard2014}. This is nothing but the tensor generalization of the former matrix estimation problem of the preceding section. We shall report the derivations presented in \cite{Lesieur2017Statistical}, which considers the planted version of the problem, but that (thanks to contiguity) also describe the unplanted model, and the mapping between the two can be done using $\Delta = \frac{2}{3}T^2$.

As before, from a Bayesian perspective, the problem boils down to sampling the following posterior
\begin{equation}
    P_0({\bf x}) \propto  \prod_i P_X(x_i) \prod_{i<j<k}e^{\frac{\sqrt{3}\beta}{N}J_{ijk}x_ix_jx_k }\,,
    \label{eq:Ppspin-app}
\end{equation}
where the prior $P_X(x) = \delta_{x,-1}/2 + \delta_{x,+1}/2$ constrains the variables to be $\pm1$.

As we have reminded for the previous model, when considering the tilted measure~(\ref{tilted-measure-app}) we will exploit again the fact that considering an equilibrium configuration ${\bf x}_0$ is equivalent to take as ${\bf x}_0$ the planted configuration, as long as we are beyond the Spin Glass temperature.

We shall thus consider the following \textbf{tilted measure} for diffusion and flow-based sampling:
\begin{equation}
    P_{\gamma}({\bf x}) = \frac{1}{Z_{\gamma}}  \left(\prod_i P_X(x_i)\right)e^{\gamma(t)^2\langle {\bf x}, {\bf x}_0 \rangle + \gamma(t)\langle z,{\bf x} \rangle - \frac{\gamma(t)^2}{2}\|{\bf x}\||^2} \prod_{i<j<k}e^{\frac{\sqrt{3}\beta}{N}J_{ijk}x_ix_jx_k }
\end{equation}
As before, we can notice that with respect to the original measure in~(\ref{eq:Ppspin-app}), the tilting adds a field in the planted direction, a random field and a normalization factor depending on the l2 norm.

For the \textbf{pinned measure} defined in~(\ref{pinning-measure-app}) the original problem becomes (denoting as $S_{\theta}$ the pinned list):
\begin{equation}
    P_{\theta}({\bf x}) = \frac{1}{Z_{\theta}}  \left(\prod_{i \notin S_\theta} P_X(x_i)\right)\left(\prod_{i \in S_\theta}  \delta(x_i-x_i^*)\right) \prod_{i<j<k}e^{\frac{\sqrt{3}\beta}{N}J_{ijk}x_ix_jx_k }
\end{equation}

\subsubsection{Replica free entropy}
The replica formula for this model has been studied extensively in the literature, and can also be proven rigorously, see \cite{Lesieur2017Statistical}.
Again, the asymptotic free entropy is given by the maximum of the so-called replica symmetric potential \cite{mezard1987spin} :
\begin{eqnarray}\label{eq:PhiRS_IsiT}
    \frac 1N {\mathbb E}_{{\bf x}_0,{\bf z},J} \log Z_{\gamma} &\xrightarrow[N \to \infty]{}& {\rm argmax  }\; \Phi_{RS}(m)\\
    \Phi_{RS}(m) &=& \mathbb{E}_{w,x_0}\left[ \log Z_x\left(\frac{m^2}{\Delta},\frac{m^2}{\Delta} x_0 + \sqrt{\frac{m^2}{\Delta}}w \right) \right] - \frac{m^3}{3\Delta}
\end{eqnarray}
where $m$ is the order parameter of the problem, $x_0\sim P_X$, $w\sim \mathcal{N}(0,1)$ while $Z_x$ depends on the specific measure considered. The same is valid for the pinned measure, as long as one substitutes $Z_{\gamma}$ with $Z_{\theta}$.

In the following, since we will be interested on the unplanted model, we will use the temperature $T = \sqrt{3\Delta/2}$ as signal-to-noise parameter.

\paragraph*{Tilted measure: }
In the case of the \textbf{tilted measure}~(\ref{tilted-measure-app}) we have
\begin{equation}
\begin{split}
    Z_x(A,B;x_0) &= \int \dd x P_X(x)\exp(\gamma^2xx_0 + \gamma wx - \gamma^2x^2/2) \exp(Bx - Ax^2/2)\\  
    &=e^{ -(A+\gamma^2)/2}\cosh(B + \gamma w + \gamma^2 x_0)
\end{split}
\end{equation}
that can be put into Eq.~(\ref{eq:PhiRS_IsiT}) to get
\begin{equation}
    \Phi_{\rm RS}(\C) = -\frac{\widetilde{\C}}{2} + \E_w \left[ \log\cosh\left(\widetilde{\C} +\sqrt{\widetilde{\C}}w\right)\right] - \frac{\C^3}{2T^2}\,, \quad \widetilde{\C} = \frac{3\C^2}{2T^2} + \gamma^2\,.
\end{equation}

\paragraph*{Pinning measure:} Meanwhile, considering the \textbf{pinning measure}~(\ref{pinning-measure-app}) leads to
\begin{equation}
\begin{split}
    Z_x(A,B;x_0) &= 
    \begin{cases}
        \int \dd x P_X(x)\delta_{x,x_0} \exp(Bx - Ax^2/2) & \text{with probability } \theta \\
        \int \dd x P_X(x) \exp(Bx - Ax^2/2) & \text{with probability } 1 - \theta
    \end{cases}
    \\  
    &=
    \begin{cases}
        P_X(x_0)\exp\left( Bx_0 - Ax_0^2/2\right) & \text{with probability } \theta \\
        \exp(-A/2)\cosh(B) & \text{with probability } 1 - \theta
    \end{cases}
\end{split}
\end{equation}
which in turn gives
\begin{equation}
\begin{aligned}
    \Phi_{\rm RS}(\C) = \frac{2\theta-1}{2} \widetilde{\C}+(1-\theta) \E_w \left[ \log \cosh\left(\widetilde{\C} +\sqrt{\widetilde{\C}}w\right) \right] - \frac{\C^3}{2T^2}\,, \quad \widetilde{\C} = \frac{3\C^2}{2T^2}\,.
\end{aligned}
\end{equation}

\subsubsection{Message-passing algorithm}

As for the SK model and the sparse rank-one matrix factorization problem, the AMP iterates and the associated
Thouless-Anderson-Palmer (TAP) equations \cite{thouless1977solution} have been widely studied for the $p$-spin models \cite{crisanti1992spherical}. Here we consider the formalism of \cite{Lesieur2017Statistical}, which presents the AMP iterates for the Spike Tensor model, and we use contiguity to derive equations valid for the unplanted model with the tilting (or pinning) field. We remind again that the mapping between the two models is given by $\Delta=\frac{2T^2}{3}$.

\paragraph*{Tilted measure:}
Due to contiguity with the planted model and the equivalence of the tilted measure and an associated measure with a planted field described in Section~\ref{sec:Plant-app}, the AMP iterations for the tilted measure can be obtained through the ones for the associated inference problem described in \cite{lesieur2017constrained}.This yields to

\begin{equation}\label{eq:AMP_IsiT_diff}
\begin{cases}
    \widehat{x}_i^{t+1} = \tanh(B_i^t + \frac{\alpha(t)}{\beta(t)^2}[{\bf Y}_t]_i), \quad
    \sigma_i^{t+1} =\cosh(B_i^t + \frac{\alpha(t)}{\beta(t)^2}[{\bf Y}_t]_i)^{-2}\\
    B_i^t = \frac{\sqrt{3}\beta}{ N}\sum_{j<k} J_{ijk} \widehat{x}_j^{t}\widehat{x}_k^{t} - \frac{3}{N}\beta^2 \widehat{x}_i^{t-1}{\bf \widehat{x}}^{t} \cdot {\bf \widehat{x}}^{t-1} \sum_k\sigma_k^t/N \\
\end{cases}
\end{equation}
where $\alpha(t)$ and $\beta(t)$ are the functions defining the interpolant process, fixed at the start, and ${\bf Y}_t$ is the value of the noisy observation at time $t$. 

\paragraph*{Pinning measure:} The AMP equations for the pinning measure~(\ref{pinning-measure-app}) are a slight variation of the ones just presented for the tilted measure.

Specifically, in the autoregressive-based sampling scheme for a fraction $\theta$ of the variables we fix $\widehat{x}_i^{t} = x_0,\; \sigma_i^{t} = 0$, since their posterior means are totally polarized on the solution. For the rest of the variables, a fraction $1-\theta$, the AMP equations are the same as the one presented for diffusion in Eq.~(\ref{eq:AMP_IsiT_diff}), provided that we fix $\gamma=0$.

The resulting equations are:
\begin{equation}\label{eq:AMP_IsiT_deci}
\begin{cases}
    &\widehat{x}_i^{t+1} = 
    \begin{cases}
        [{\bf x}_0]_i & \text{if } i \in {S_{\theta}} \\
         \tanh(B_i^t), & \text{otherwise}
    \end{cases}\,,\quad\sigma_i^{t+1} = 
    \begin{cases}
        0 & \text{if } i \in {S_{\theta}} \\
        \cosh(B_i^t)^{-2} & \text{otherwise}\\
    \end{cases}\\
    &B_i^t = \frac{\sqrt{3}\beta}{ N}\sum_{j<k} J_{ijk} \widehat{x}_j^{t}\widehat{x}_k^{t} - \frac{3}{N}\beta^2 \widehat{x}_i^{t-1}{\bf \widehat{x}}^{t} \cdot {\bf \widehat{x}}^{t-1} \sum_k\sigma_k^t/N
\end{cases}
\end{equation}
\subsubsection{State evolution equations}
The advantage of AMP is that it can be rigorously tracked by the State Evolution equations \cite{donoho2010message}, that can be proven to be nothing but the fixed point of the replica potential~(\ref{eq:PhiRS_IsiT}). We can use the formalism in \cite{Lesieur2017Statistical} to get

\begin{eqnarray}
    m^{t+1} = \mathbb{E}_{x_0,w}\left[ f_{\rm in}\left(\frac{(m^{t})^2}{\Delta},\frac{(m^{t})^2}{\Delta} x_0 + \sqrt{\frac{(m^{t})^2}{\Delta}}w \right) x_0 \right]
\end{eqnarray}
where $x_0 \sim P_X$, $w\sim \mathcal{N}(0,1)$ and $f_{in}$ is the input channel and depends on the specific problem. Again, we will state our results using the temperature $T = \sqrt{3\Delta/2}$.

\paragraph*{Tilted measure:} For the \textbf{tilted measure}~(\ref{tilted-measure-app}) we have

\begin{equation}
\begin{split}
    f_{\rm in}(A,B;x_0) &= \frac{\int \dd x x P_X(x)\exp(\gamma^2xx_0 + \gamma wx - \gamma^2x^2/2) \exp(Bx - Ax^2/2)}{\int \dd x P_X(x)\exp(\gamma^2xx_0 + \gamma wx - \gamma^2x^2/2) \exp(Bx - Ax^2/2)}
    \\
    &=\tanh(B + \gamma w + \gamma^2 x_0)
\end{split}
\end{equation}
which leads to the State Evolution equations
\begin{equation}
    \C^{t+1} = \E_w\left[\tanh(\widetilde{\C}^{t} + \sqrt{\widetilde{\C}^{t}} w)\right]\,,\quad \widetilde{\C}^{t} \equiv \frac{3(\C^{t})^2}{2T^2} + \gamma^2\,,\quad w\sim \Ncl(0,1)
\end{equation}

\paragraph*{Pinning measure:} In the same way, for the \textbf{pinning measure}~(\ref{pinning-measure-app}) we get
\begin{equation}
\begin{split}
    f_{\rm in}(A,B;x_0) &= 
    \begin{cases}
        \frac{P_X(x_0)x_0\exp\left( Bx_0 - Ax_0^2/2\right)}{P_X(x_0)\exp\left( Bx_0 - Ax_0^2/2\right)} & \text{with probability } \theta \\
        \frac{\int \dd x x P_X(x) \exp(Bx - Ax^2/2)}{\int \dd x P_X(x) \exp(Bx - Ax^2/2)} & \text{with probability } 1-\theta
    \end{cases} \\
    &= 
    \begin{cases}
        x_0 & \text{with probability } \theta \\
        \tanh(B) & \text{with probability } 1-\theta
    \end{cases}
\end{split}
\end{equation}
and thus the fixed point equations
\begin{equation}
    \C^{t+1} = \theta + (1-\theta) \E_w\left[\tanh(\widetilde{\C}^{t} + \sqrt{\widetilde{\C}^{t}} w)\right]\,,\quad \widetilde{\C}^{t} \equiv \frac{3(\C^{t})^2}{2T^2}\,,\quad w\sim \Ncl(0,1)
\end{equation}

\subsubsection{Phase diagrams}
The phase diagrams for the Ising $p$-spin are presented in Figure~\ref{fig:IsiT-app}, in the same style as the previous section. In this case, notably, we observe that the flow-based method is advantageous with respect to the autoregressive one, in the sense that the former shows a smaller gap $T_{\rm tri} - T_{d}$ compared to the latter. This is the opposite situation compared to the previous problem.
\begin{figure}[t]
    \centering
    \includegraphics[width=0.45\textwidth]{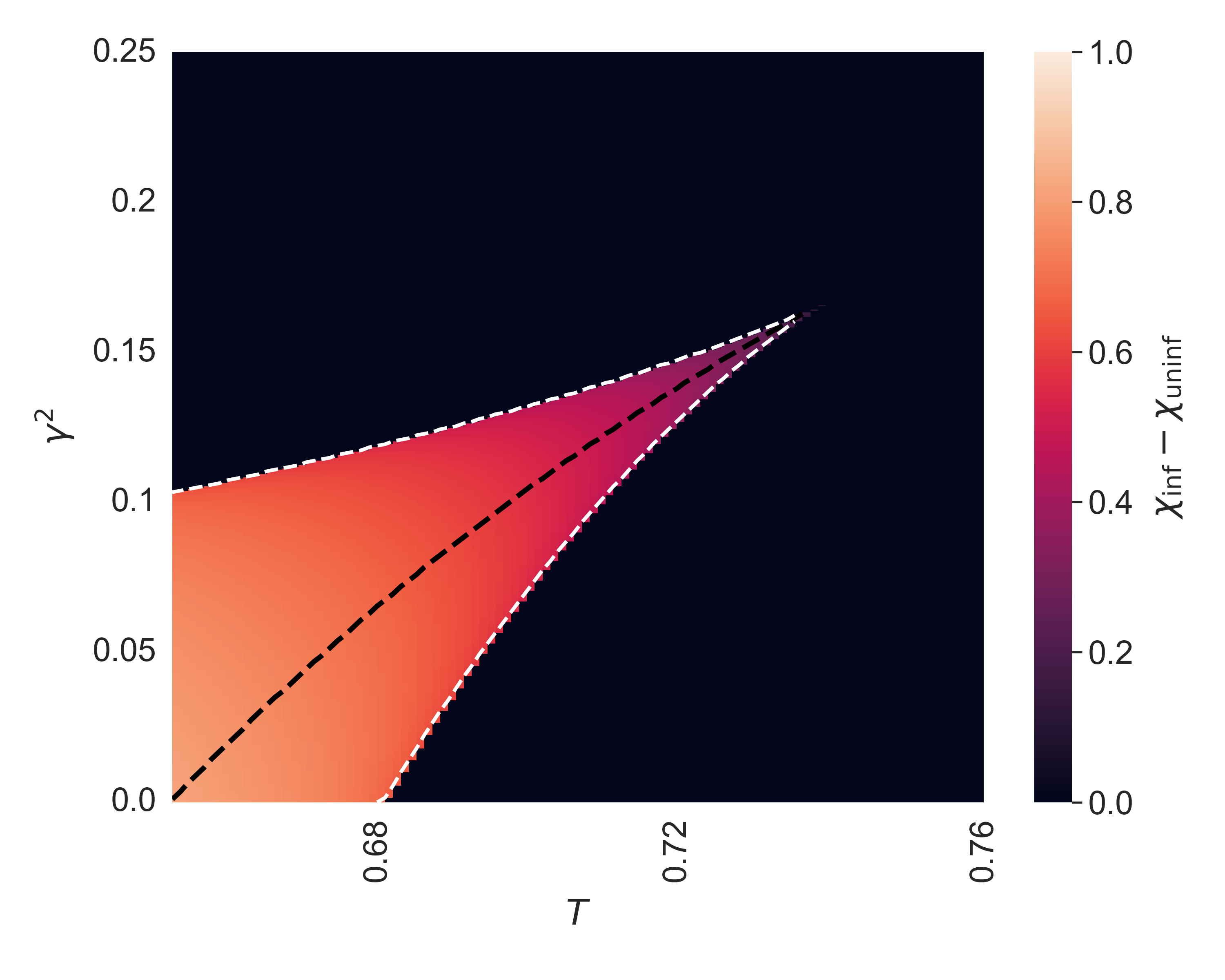}
    \includegraphics[width=0.45\textwidth]{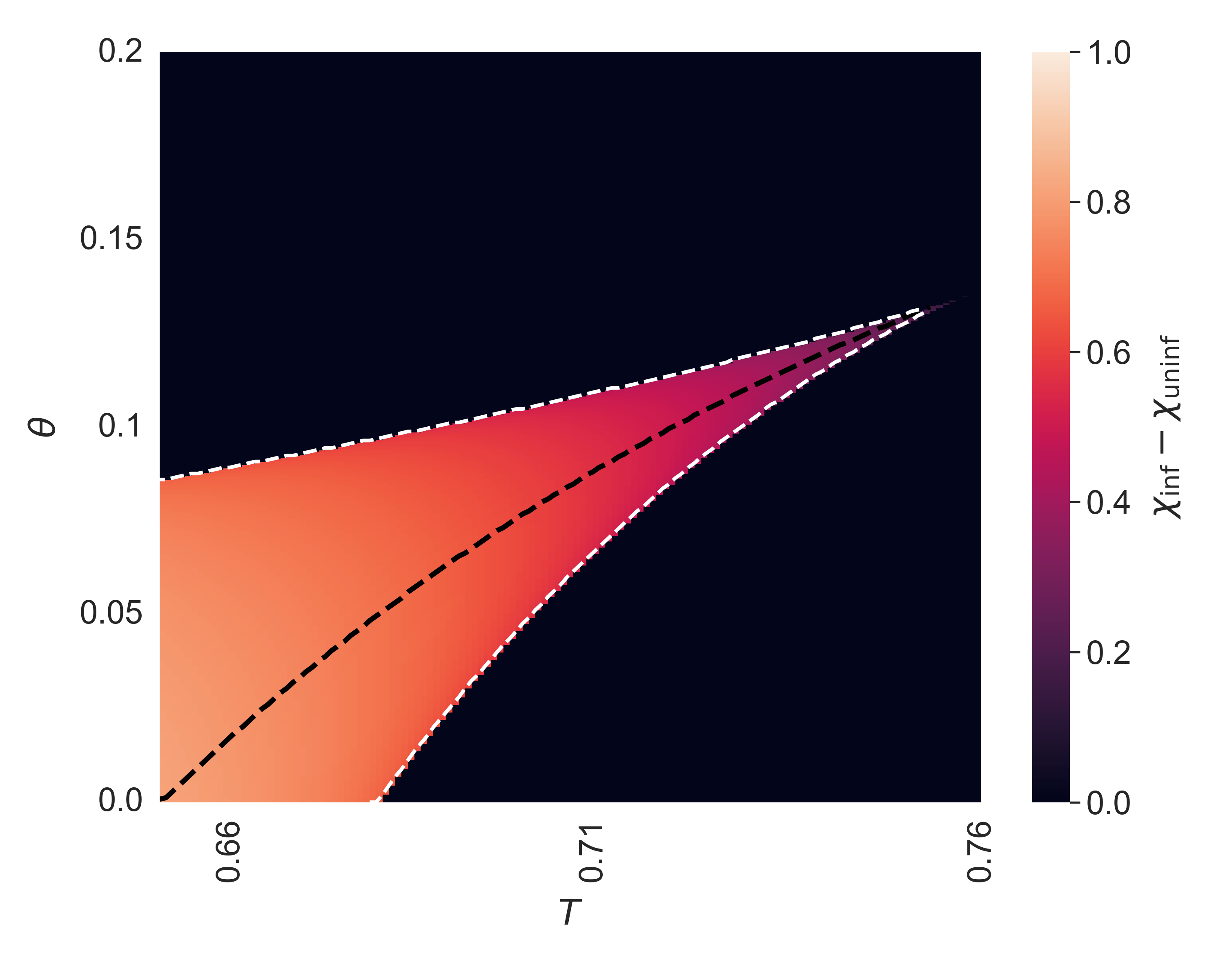}
    \caption{Phase diagrams for flow-based sampling (left) and autoregressive-based sampling (right) for the \textit{Ising p-spin} model. On the x-axis we put the temperature $T$ and on the y-axis the ratio $\gamma^2 = \alpha^2/\beta^2$ (left) and the decimated ratio $\theta$ (right). We compute the order parameter $\chi$, defined in~(\ref{eq:Chi-app}), both from an uninformed and an informed initialization, and we plot the difference between the two. The dashed white lines are the \textit{spinodal lines}, while the dashed black one is the \textit{IT threshold}, both defined at the beginning of the section. In both plots we have that the dynamical transition is at $T_d\approx 0.682$, the Kauzmann transition is at $T_K \approx 0.652$, while the tri-critical points are at $T_{\rm tri}\approx0.741$ for flow-based and $T_{\rm tri}\approx0.759$ for autoregressive-based sampling.}
    \label{fig:IsiT-app}
\end{figure}

\subsection{Spherical p-spin model}
The spherical $p$-spin model is a variation of the Ising model where the configurations are constrained to lie on the sphere ${\bf x} \in {\cal S}^{N-1}$ \cite{crisanti1992spherical}. Again, we shall use the planted version, which is asymptotically equivalent at high temperature with the unplanted problem having independent Gaussian entries. The Hamiltonian thus still reads:
\begin{align}
    \label{def:pspin-sph}
    {\cal H}({\bf x}) = -\frac{\sqrt{3}}{N} \sum_{i<j<k} J_{ijk}x_ix_jx_k \,,
    \end{align}
where $J_{ijk} \sim \mathcal{N}(0,1)$ and now $P_X(x_i)={\cal N}(0,1)$.

Same as its Ising version, for temperatures higher than the Kauzmann temperature the model can be proven \cite{Lesieur2017Statistical} to be contiguous to its ``planted version", the tensor factorization problem \cite{Richard2014}. Here we report the derivations presented in \cite{Lesieur2017Statistical}, and the mapping between the two is again $\Delta = \frac{2}{3}T^2$.

From a Bayesian point of view, the posterior measure we want to sample is the following:
\begin{equation}
    P_0({\bf x}) \propto  \prod_i P_X(x_i) \prod_{i<j<k}e^{\frac{\sqrt{3}\beta}{N}J_{ijk}x_ix_jx_k }\,,
    \label{eq:PpspinS-app}
\end{equation}
where $P_X(x) = {\cal N}(0,1)$ in high-dimension constrains the variables to be on the sphere.

We first consider the following \textbf{tilted measure}, arising in diffusion and flow-based sampling:
\begin{equation}
    P_{\gamma}({\bf x}) = \frac{1}{Z_{\gamma}}  \left(\prod_i P_X(x_i)\right)e^{\gamma(t)^2\langle {\bf x}, {\bf x}_0 \rangle + \gamma(t)\langle z,{\bf x} \rangle - \frac{\gamma(t)^2}{2}\|{\bf x}\||^2} \prod_{i<j<k}e^{\frac{\sqrt{3}\beta}{N}J_{ijk}x_ix_jx_k }
\end{equation}
Again, with respect to the original measure in Eq.~(\ref{eq:PpspinS-app}), this measure presents an additional field in the planted direction, a random field and a constant factor depending on the l2 norm.

For the \textbf{pinned measure} defined in~(\ref{pinning-measure-app}) the original problem becomes (denoting as $S_{\theta}$ the pinned list):
\begin{equation}
    P_{\theta}({\bf x}) = \frac{1}{Z_{\theta}}  \left(\prod_{i \notin S_\theta} P_X(x_i)\right)\left(\prod_{i \in S_\theta}  \delta(x_i-x_i^*)\right) \prod_{i<j<k}e^{\frac{\sqrt{3}\beta}{N}J_{ijk}x_ix_jx_k }
\end{equation}
\subsubsection{Replica free entropy:}
The replica free entropy for this model has been studied extensively in the spin-glass literature, and can also be proven rigorously, see e.g. \cite{Lesieur2017Statistical}.

As for the previous models, the asymptotic free entropy is given by the maximum of the so-called replica symmetric potential \cite{mezard1987spin} :
\begin{eqnarray}\label{eq:PhiRS_SphT}
    \frac 1N {\mathbb E}_{{\bf x}_0,{\bf z},J} \log Z_{\gamma} &\xrightarrow[N \to \infty]{}& {\rm argmax  }\; \Phi_{RS}(m)\\
    \Phi_{RS}(m) &=& \mathbb{E}_{w,x_0}\left[ \log Z_x\left(\frac{m^2}{\Delta},\frac{m^2}{\Delta} x_0 + \sqrt{\frac{m^2}{\Delta}}w \right) \right] - \frac{m^3}{3\Delta}
\end{eqnarray}
where $m$ is the order parameter of the problem, $x_0\sim P_X$, $w\sim \mathcal{N}(0,1)$ and $Z_x$ depends on the specific measure considered. The same is valid for the pinned measure, as long as one substitutes $Z_{\gamma}$ with $Z_{\theta}$.

While using the same derivation, in the following we will use $T=\sqrt{3\Delta/2}$ as signal-to-noise parameter, since we are interested in studying the unplanted model.

\paragraph*{Tilted measure: }
In the case of the \textbf{tilted measure}~(\ref{tilted-measure-app}) we have

\begin{equation}
\begin{split}
    Z_x(A,B;x_0) &= \int \dd x P_X(x)\exp(\gamma^2xx_0 + \gamma wx - \gamma^2x^2/2) \exp(Bx - Ax^2/2)\\  
    &=\sqrt{\frac{2\pi}{A+\gamma^2+1}}\exp\left( \frac{(B+\gamma^2 x_0 + \gamma w)^2}{2(A+\gamma^2+1)}\right)
\end{split}
\end{equation}
which leads to
\begin{equation}
    \Phi_{\rm RS}(\C) = \frac{\widetilde{\C}}{2} + \frac{1}{2}\log\left( \frac{2\pi}{\widetilde{\C} + 1} \right) - \frac{1}{2T^2}\C^3\,, \quad \widetilde{\C} = \frac{3\C^2}{2T^2} + \gamma^2\,.
\end{equation}
\paragraph*{Pinning measure:} Meanwhile, considering the \textbf{pinning measure}~(\ref{pinning-measure-app}) leads to
\begin{eqnarray}
    Z_x(A,B;x_0) = 
    \begin{cases}
        P_X(x_0)\exp\left( Bx_0 - \frac{A}{2}x_0^2\right) & \text{with probability } \theta \\
        \sqrt{\frac{2\pi}{A+1}}\exp\left( \frac{B^2}{2(A+1)}\right) & \text{with probability } 1 - \theta
    \end{cases}
\end{eqnarray}
which in turn gives
\begin{equation}
    \Phi_{\rm RS}(\C) = \frac{\widetilde{\C}}{2} + \frac{1-\theta}{2}\log\left( \frac{2\pi}{\widetilde{\C} + 1}\right)-\theta\log(\sqrt{2\pi }e) -\frac{\C^3}{2T^2}\,, \quad \widetilde{\C} = \frac{3\C^2}{2T^2}\,.
\end{equation}

\subsubsection{Message-passing algorithm}
As already mentioned for the Ising version of the model, the Thouless-Anderson-Palmer (TAP) equations \cite{thouless1977solution} have been widely studied for the $p$-spin models \cite{crisanti1992spherical}. We use the results of \cite{Lesieur2017Statistical}, which presents the AMP algorithm for the Spike Tensor model, and we use contiguity to derive equations valid for the unplanted model with the tilting (or pinning) field. The mapping between the two models is again given by $\Delta=\frac{2T^2}{3}$.
\paragraph*{Tilted measure:} The equivalence between the tilted and the planted measure with external field allows us to
map the AMP iterations for the tilted measure to the ones for an associated inference problem. Here we consider the formalism of \cite{Lesieur2017Statistical}, such that the resulting equations are:
\begin{equation}\label{eq:AMP_SphT_diff}
\begin{cases}
    B_i^t = \frac{\sqrt{3}\beta}{ N}\sum_{j<k} J_{ijk} \widehat{x}_j^{t}\widehat{x}_k^{t} - \frac{3}{N}\beta^2 \sigma^t\widehat{x}_i^{t-1} {\bf \widehat{x}}^{t} \cdot {\bf \widehat{x}}^{t-1} \\
     \widehat{x}_i^{t+1} = \frac{B_i^t + \frac{\alpha(t)}{\beta(t)^2}[{\bf Y}_t]_i}{3\beta^2\|\hat {\bf x^t}\|_2^2/(2 N) + \gamma^2+1} \,,     \sigma^{t+1} = \frac{1}{3\beta^2\|\hat {\bf x^t}\|_2^2/(2 N) + \gamma^2+1}\\
\end{cases}
\end{equation}
where $\alpha(t)$ and $\beta(t)$ are the functions defining the interpolant process, fixed at the start, and ${\bf Y}_t$ is the value of the noisy observation at time $t$. 
\paragraph*{Pinning measure:} 
When considering the pinning measure~(\ref{pinning-measure-app}), the AMP equations are only a slight variation of the ones presented for the flow-based case.

Specifically, in autoregressive-based sampling we choose a fraction $\theta$ of the variables, for which we fix $\widehat{x}_i^{t} = x_0,\; \sigma_i^{t} = 0$, which stems from the fact that their posterior means are completely polarized on the solution. For the rest of the variables, a fraction $1-\theta$, the AMP equations are exactly the ones reported in Eq.~(\ref{eq:AMP_SphT_diff}), provided we fix $\gamma=0$.

The resulting algorithm is the following:
\begin{equation}\label{eq:AMP_SphT_deci}
\begin{cases}
    &\widehat{x}_i^{t+1} = 
    \begin{cases}
        [{\bf x}_0]_i & \text{if } i \in {S_{\theta}} \\
         \frac{B_i^t}{3\beta^2\|\hat {\bf x^t}\|_2^2/(2 N) +1}, & \text{otherwise}
    \end{cases}\,,\quad\sigma_i^{t+1} = 
    \begin{cases}
        0 & \text{if } i \in {S_{\theta}} \\
        \frac{1}{3\beta^2\|\hat {\bf x^t}\|_2^2/(2 N) +1} & \text{otherwise}\\
    \end{cases}\\
    &B_i^t = \frac{\sqrt{3}\beta}{ N}\sum_{j<k} J_{ijk} \widehat{x}_j^{t}\widehat{x}_k^{t} - \frac{3}{N}\beta^2\widehat{x}_i^{t-1} {\bf \widehat{x}}^{t} \cdot {\bf \widehat{x}}^{t-1}\sum_k\sigma_k^t/N
\end{cases}
\end{equation}
\subsubsection{State evolution equations}
As mentioned earlier, AMP iterations possess the salient property of being able to be rigorously tracked by the State Evolution equations, that turn out to be the
fixed point of the replica potential in Eq.~(\ref{eq:PhiRS_SphT}). Here, again we follow the results presented in \cite{Lesieur2017Statistical}, such that as the Ising version of the model, we have
\begin{eqnarray}
    m^{t+1} = \mathbb{E}_{x_0,w}\left[ f_{\rm in}\left(\frac{(m^{t})^2}{\Delta},\frac{(m^{t})^2}{\Delta} x_0 + \sqrt{\frac{(m^{t})^2}{\Delta}}w \right) x_0 \right]
\end{eqnarray}
where $x_0 \sim P_X$, $w\sim \mathcal{N}(0,1)$ and $f_{in}$ is the input channel and depends on the specific problem. We also remind the mapping $T = \sqrt{3\Delta/2}$, which we will use in the following presentation.

\paragraph*{Tilted measure:} For the \textbf{tilted measure}~(\ref{tilted-measure-app}) we have

\begin{equation}
\begin{split}
    f_{\rm in}(A,B;x_0) &= \frac{\int \dd x x P_X(x)\exp(\gamma^2xx_0 + \gamma wx - \gamma^2x^2/2) \exp(Bx - Ax^2/2)}{\int \dd x P_X(x)\exp(\gamma^2xx_0 + \gamma wx - \gamma^2x^2/2) \exp(Bx - Ax^2/2)}
    \\
    &= \frac{B + \gamma^2 x_0 + \gamma w}{A + \gamma^2 + 1}
\end{split}
\end{equation}
which leads to the State Evolution equations
\begin{equation}
    \C^{t+1} = \frac{\widetilde{\C^t}}{1 + \widetilde{\C^t}}\,,\quad \widetilde{\C}^{t} \equiv \frac{3}{2T^2}(\C^{t})^2 + \gamma^2\,.
\end{equation}
\paragraph*{Pinning measure:} In the same way, for the \textbf{pinning measure}~(\ref{pinning-measure-app}) we find
\begin{equation}
\begin{split}
    f_{\rm in}(A,B;x_0) &= 
    \begin{cases}
        \frac{P_X(x_0)x_0\exp\left( Bx_0 - Ax_0^2/2\right)}{P_X(x_0)\exp\left( Bx_0 - Ax_0^2/2\right)} & \text{with probability } \theta \\
        \frac{\int \dd x x P_X(x) \exp(Bx - Ax^2/2)}{\int \dd x P_X(x) \exp(Bx - Ax^2/2)} & \text{with probability } 1-\theta
    \end{cases} \\
    &=  
    \begin{cases}
        x_0 & \text{with probability } \theta \\
        \frac{B}{A + 1} & \text{with probability } 1- \theta
    \end{cases}
\end{split}
\end{equation}
and thus the fixed point equations are
\begin{equation}
    \C^{t+1} = \theta + (1-\theta)\frac{\widetilde{\C^t}}{1 + \widetilde{\C^t}}\,,\quad \widetilde{\C}^{t} \equiv \frac{3(\C^{t})^2}{2T^2}
\end{equation}

\subsubsection{Phase diagrams}
The phase diagrams for the spherical $p$-spin are presented in Figure~\ref{fig:SphT_deci}, and we plot the same quantities as for the previous models. We observe again that the flow-based method is advantageous with respect to the autoregressive one.
\begin{figure}[t]
    \centering
    \includegraphics[width=0.45\textwidth]{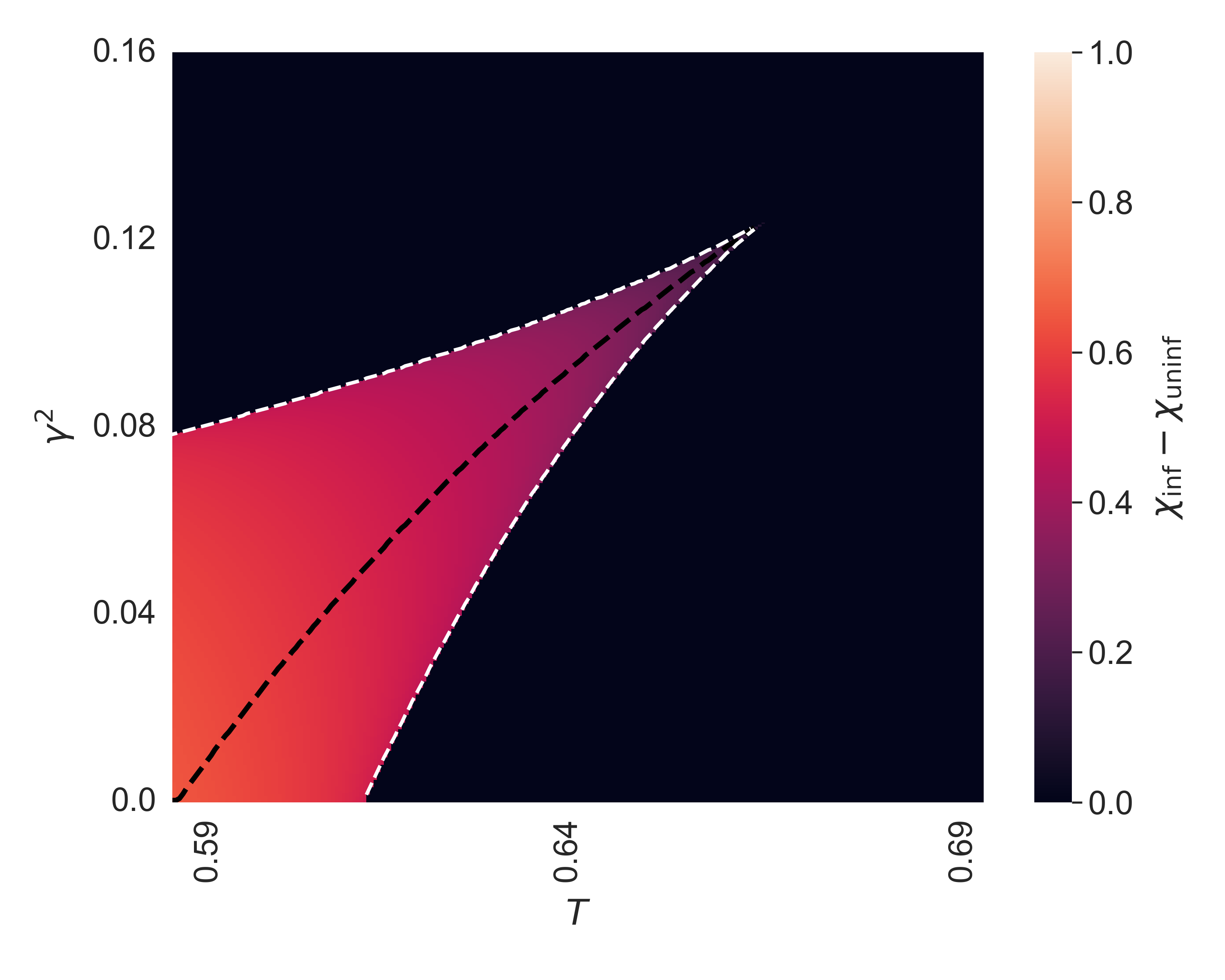}
    \includegraphics[width=0.45\textwidth]{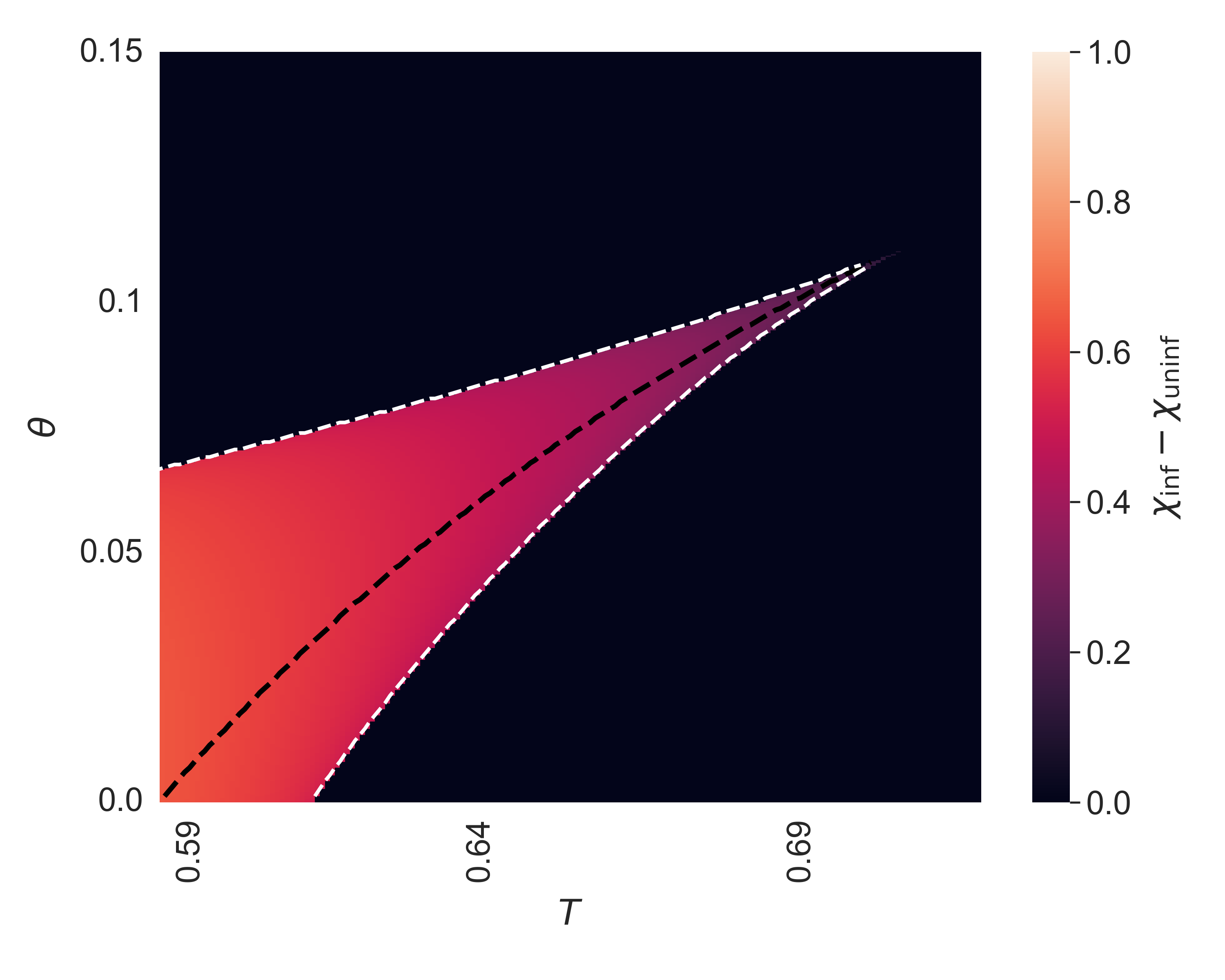}
    \caption{Phase diagrams for flow-based sampling (left) and autoregressive-based sampling (right) for the \textit{Spherical p-spin} model. On the x-axis we put the temperature $T$ and on the y-axis the ratio $\gamma^2 = \alpha^2/\beta^2$ (left) and the decimated ratio $\theta$ (right). We compute the order parameter $\chi$, defined in~(\ref{eq:Chi-app}), both from an uninformed and an informed initialization, and we plot the difference between the two. The dashed white lines are the \textit{spinodal lines}, while the dashed black one is the \textit{IT threshold}, both defined at the beginning of the section. In both plots we have that the dynamical transition is at $T_d= \sqrt{3/8}$, the Kauzmann transition is at $T_K\approx0.58$, while the tri-critical points are at $T_{\rm tri} = 2/3$ for flow-based and $T_{\rm tri}=\sqrt{1/2}$ for autoregressive based sampling.}
    \label{fig:SphT_deci}
\end{figure}
\newpage

\subsection{\label{sec:NAESAT}NAE-SAT model}

\subsubsection{Target model definition}
The $k$-hypergraph bicoloring (or $k$-NAESAT) problem is a prototypical model of constrained satisfaction problems defined on hypergraphs. An instance of the problem is determined by an hypergraph $G = (V,E)$, where $V$ is the set of $N$ vertices and $E$ the set of $M$ hyperedges, each containing exactly $k$ vertices.

Each vertex $i\in V$ is associated to an Ising-spin variable $x_i=\pm1$, while each hyperedge $a\in E$ is associated to a constraint involving the $k$ vertices entering in $a$ (in the following, we will use ${\bf x}_{\partial a}$ to indicate this set of variables).

For bicoloring, the $a$-th constraint is satisfied if there is at least one $+1$ and one $-1$ among the $k$ variables of ${\bf x}_{\partial a}$. In terms of probability distribution, this translates to
\begin{equation}\label{eq:P_NAESAT}
    P_0({\bf x}) = \frac{1}{Z(G)}\prod_{a=1}^{M}\omega({\bf x}_{\partial a}) \,,\quad \omega(x_1,\dots,x_k) = 
\begin{cases}
0 & \text{if} \;\sum_{i=1}^k x_i = \pm k\\
1 & \text{otherwise}
\end{cases}
\end{equation}
In the following, we will focus on the asymptotic limit where both $N$ and $M$ go to infinity, with constant rate $\alpha=M/N$.

This model has again the advantage to be contiguous to its planted version \cite{coja2017information}. We refer to \cite{Ding2016,Coja-Oghlan2012,coja2017information} for rigorous results, and to \cite{Castellani2003, Gabrié_2017, ricci2019typology,Budzynski_2019} for results within the cavity approach. 

\subsubsection{BP equations and Bethe free entropy:} 
To analyse the properties of the model, one can use the cavity method \cite{mezard2009information} from statistical physics, which allows to derive the BP equations for the problem (see e.g. \cite{Budzynski_2019}), which can be written in terms of \textit{cavity messages} as

\begin{equation}
    h_{i\rightarrow a} = f(\{ u_{b\rightarrow i}\}_{b\in\partial_i\backslash a})\,,\quad u_{a\rightarrow i} = g(\{ h_{j\rightarrow a}\}_{j\in\partial_a\backslash i})
\end{equation}
where
\begin{equation}\label{eq:f_NAESAT}
    f(u_1,\dots,u_d) = \frac{\prod_{i=1}^d(1+u_i) - \prod_{i=1}^d(1-u_i)}{\prod_{i=1}^d(1+u_i) + \prod_{i=1}^d(1-u_i)}
\end{equation}
is the function defining the messages going from factor nodes to variable nodes and
\begin{equation}\label{eq:g_NAESAT}
    g(h_1,\dots,h_{k-1}) = \frac{\sum_{x_1,\dots,x_k}\omega(x_1,\dots,x_k)x_k\prod_{i=1}^{k-1}(1+h_ix_i)}{\sum_{x_1,\dots,x_k}\omega(x_1,\dots,x_k)\prod_{i=1}^{k-1}(1+h_ix_i)}
\end{equation}
is the one defining messages going from variable nodes to factor nodes.

Once a fixed point of the BP equations is reached, one can compute the Free entropy from the resulting BP marginals. In its general form, it can be written as

\begin{align}
\label{BetheFreeEntropy}
\frac{1}{N}\ln Z(G) = 
\frac{1}{N}\sum_{i=1}^{N} \ln \Z_0^{\rm v} (\{u_{a \to i}\}_{a \in \di})
+ \frac{1}{N}\sum_{a=1}^{M} \ln \Z_0^{\rm c} (\{h_{i \to a}\}_{i \in \da})
- \frac{1}{N}\sum_{(i,a)} \ln \Z_0^{\rm e}(h_{i\to a}, u_{a\to i}) \ ,
\end{align}
where the last sum runs over the edges of the factor graph, and the local partition functions are defined as:
\begin{align}
\Z_0^{\rm v}(u_1,\dots,u_d) &= \sum_{x} \prod_{i=1}^d \left(\frac{1+x u_i}{2}\right) \ , \label{eq:Z0v}\\
\Z_0^{\rm c}(h_1,\dots,h_k) &= \sum_{x_1,\dots,x_k} \w(x_1,\dots,x_k) \prod_{i=1}^k \left(\frac{1+x_i h_i}{2}\right) \ , \label{eq:Z0c}\\
\Z_0^{\rm e}(h,u) &= \sum_{x} \left(\frac{1+x h}{2}\right)\left(\frac{1+x u}{2}\right) \label{eq:Z0e}\ .
\end{align}

\paragraph*{Replica symmetric cavity equations:} 
The simplest version of the cavity method implies the assumption of Replica Symmetry (RS). This is rigorously justified in our case thanks to the fact that we are considering a Bayes-Optimal model.

In such a case, the resulting self-consistent equations are relatively simpler:
\begin{align}
\label{RSeqn}
\mathcal{P}^{RS}(h) &= \sum_{d=0}^{\infty} p_d\int \left(\prod_{i=1}^{d} \dd u_i \widehat{\mathcal{P}}^{RS}(u_i) \right) 
\, \delta(h-f(u_1,\dots,u_d)) \ , \\
\nonumber
\widehat{\mathcal{P}}^{RS}(u) &= \int \left(\prod_{i=1}^{k-1} \dd h_i \mathcal{P}^{RS}(h_i) \right) \, \delta(u-g(h_1,\dots,h_{k-1})) \ . \\
\end{align}
where $\mathcal{P}^{RS}$ and $\widehat{\mathcal{P}}^{RS}$ are probability distributions defined on the messages $h$ and $u$ respectively.

Being self-consistency equations defined on probability distributions makes the computation of an analytical solution possible only in very restricted cases, while in practise one needs to use numerical techniques to find approximate solutions.

Still, \textit{population dynamics} techniques (see for example \cite{mezard2009information}) have been shown to provide very good approximations when one is interested in observables defined as average quantities over $\mathcal{O}(N)$ cavity messages.

When the density of interactions $\alpha$ becomes large, the hypothesis underlying the Replica Symmetric assumption breaks down, and we have to consider the Replica Symmetry Breaking (RSB) phenomenon~\cite{parisi1979infinite}. 

\subsubsection{1RSB and Tree reconstruction equations:} The 1RSB cavity method aims to compute the potential 
\begin{eqnarray}
    \Phi_1(m) = \lim_{N\rightarrow \infty}\frac{1}{N} \log\left(\sum_{\gamma}Z_{\gamma}^m\right)
\end{eqnarray}
where $m$ is the so-called Parisi parameter.

In their general formulation, the 1RSB equations are self-consistent equations defined on distributions over probability distributions. Focusing on the case $m=1$ allows simplifying considerably the equations, and the resulting formulas correspond to the so-called \textit{Tree Reconstruction} equations \cite{mezard2006reconstruction}. Moreover, the problem we are considering has a global spin-flip symmetry which allows us to simplify further the equations, which can be finally written as

\begin{align}
\label{1RSBeqnSimplifs}
Q_+^{(t+1)}(h) &= \sum_{d=0}^{\infty} p_d \int \left( \prod_{i=1}^{d} \dd u_i \hQ_+^{(t)}(u_i) \right) \, \delta(h-f(u_1,\dots,u_d)) \ , \\
\nonumber
\hQ_+^{(t)}(u) &= \sum_{x_1,...,x_{k-1}} \tilde{p}(x_1,...,x_{k-1}|+) 
\int \left(\prod_{i=1}^{k-1} \dd h_i Q_+^{(t)}(h_i) \right) 
\delta(u-g( x_1 h_1,\dots, x_{k-1} h_{k-1})) \ ,
\end{align}
where 
\begin{align}
\label{tildeP}
\tilde{p}(x_1, \dots ,x_{k-1}|+) = \frac{\w(x_1, \dots ,x_{k-1},+)}{\underset{x'_1,\dots,x'_{k-1}}{\sum} \w(x'_1, \dots ,x'_{k-1},+) } =
\frac{\overset{k-1}{\underset{p=0}{\sum}} \w_p \,  
\mathbb{I}\left[ \overset{k-1}{\underset{i=1}{\sum}} x_i=k-1-2p\right]}{\overset{k-1}{\underset{p=0}{\sum}} \binom{k-1}{p} \w_p } \ .
\end{align}
Similar as before, we can compute the RS free entropy for the planted problem as 
\begin{equation}
\begin{aligned}
    \Phi_{\rm RS} = & \sum_{d=0}^{\infty} p_d \int \left( \prod_{i=1}^{d} \dd u_i \hQ_+(u_i) \right) \ln \Z_0^{\rm v}(u_1,\dots,u_d) + \alpha \int \left( \prod_{i=1}^{k} \dd h_i Q_+(h_i) \right) \ln \Z_0^{\rm c}(h_1,\dots,h_k)
\\ & -\alpha k \int \dd h \dd u Q_+(h) \hQ_+(u) \ln \Z_0^{\rm e}(h,u) \ ,
\end{aligned}
\label{eq_Phi_RS}
\end{equation}
where $\Z_0^{\rm c}$, $\Z_0^{\rm e}$ and $\Z_0^{\rm v}$ are the local partitions reported in Eq.~(\ref{eq:Z0c}),~(\ref{eq:Z0e}) and~(\ref{eq:Z0v}) respectively.

Finally, the order parameter for the problem, corresponding to the definition in Eq.~(\ref{eq:Chi-app}), can be computed from the population dynamics simulations as
\begin{equation}\label{eq:chi_NAESAT}
    \C = \int \dd h Q_+(h) h^2\,.
\end{equation}
\subsubsection{Tilted and Pinned measures}
In the preceding section, we presented the classical $k$-NAESAT model and how its properties can be studied using the cavity method. Now, let us consider the tilted and pinning measures and see how this changes the previous equations.

\paragraph*{Tilted measure:} Starting with the \textbf{tilted measure}, the added tilting field results directly in the function defining the messages going from the factor nodes to the variable nodes, which is modified to
\begin{equation}
    f^{\rm tilt}(u_1,\dots,u_d) = \frac{e^{2\gamma(\gamma +z)}\underset{i=1}{\overset{d}{\prod}} (1+u_i) - \underset{i=1}{\overset{d}{\prod}} (1-u_i) }{e^{2\gamma(\gamma +z)}\underset{i=1}{\overset{d}{\prod}} (1+u_i) + \underset{i=1}{\overset{d}{\prod}} (1-u_i) }\,,\quad z\sim \Ncl(0,1)\,,
\end{equation}
while the function $g$, defined in~(\ref{eq:g_NAESAT}), is not modified. In terms of the partition function, the only local term that is modified is given by
\begin{equation}
    \Z_0^{\rm v}(u_1,\dots,u_d) = e^{\gamma(\gamma + z)}\prod_{i=1}^d \left(\frac{1+u_i}{2}\right) + e^{-\gamma(\gamma + z)}\prod_{i=1}^d \left(\frac{1-u_i}{2}\right)\,,\quad z\sim \Ncl(0,1)\,.
\end{equation}
while the other two terms remain the same.

\paragraph*{Pinning measure:} For the \textbf{pinning measure}, again the function $g$ remains the same, but now we have
\begin{equation}
    f^{\rm pinn}(u_1,\dots,u_d) = \begin{cases}
        1 & \text{with probability } \theta \\
        \frac{\underset{i=1}{\overset{d}{\prod}} (1+u_i) - \underset{i=1}{\overset{d}{\prod}} (1-u_i) }{\underset{i=1}{\overset{d}{\prod}} (1+u_i) + \underset{i=1}{\overset{d}{\prod}} (1-u_i) } & \text{with probability } 1 - \theta
    \end{cases}\,,
\end{equation}
and again the only contribution to the partition function that is modified is
\begin{equation}
    \Z_0^{\rm v}(u_1,\dots,u_d) = \begin{cases}
        \prod_{i=1}^d \left(\frac{1+u_i}{2}\right) & \text{with probability } \theta \\
        \prod_{i=1}^d \left(\frac{1+u_i}{2}\right) + \prod_{i=1}^d \left(\frac{1-u_i}{2}\right) & \text{with probability } 1 - \theta
    \end{cases}\,.
\end{equation}

\paragraph*{Phase diagrams:} In Fig.~\ref{fig:NEASAT} we present the phase diagrams for the $k$-NAESAT problem, considering the case $k=5$.
\begin{figure}
    \centering
    \includegraphics[width=0.45\textwidth]{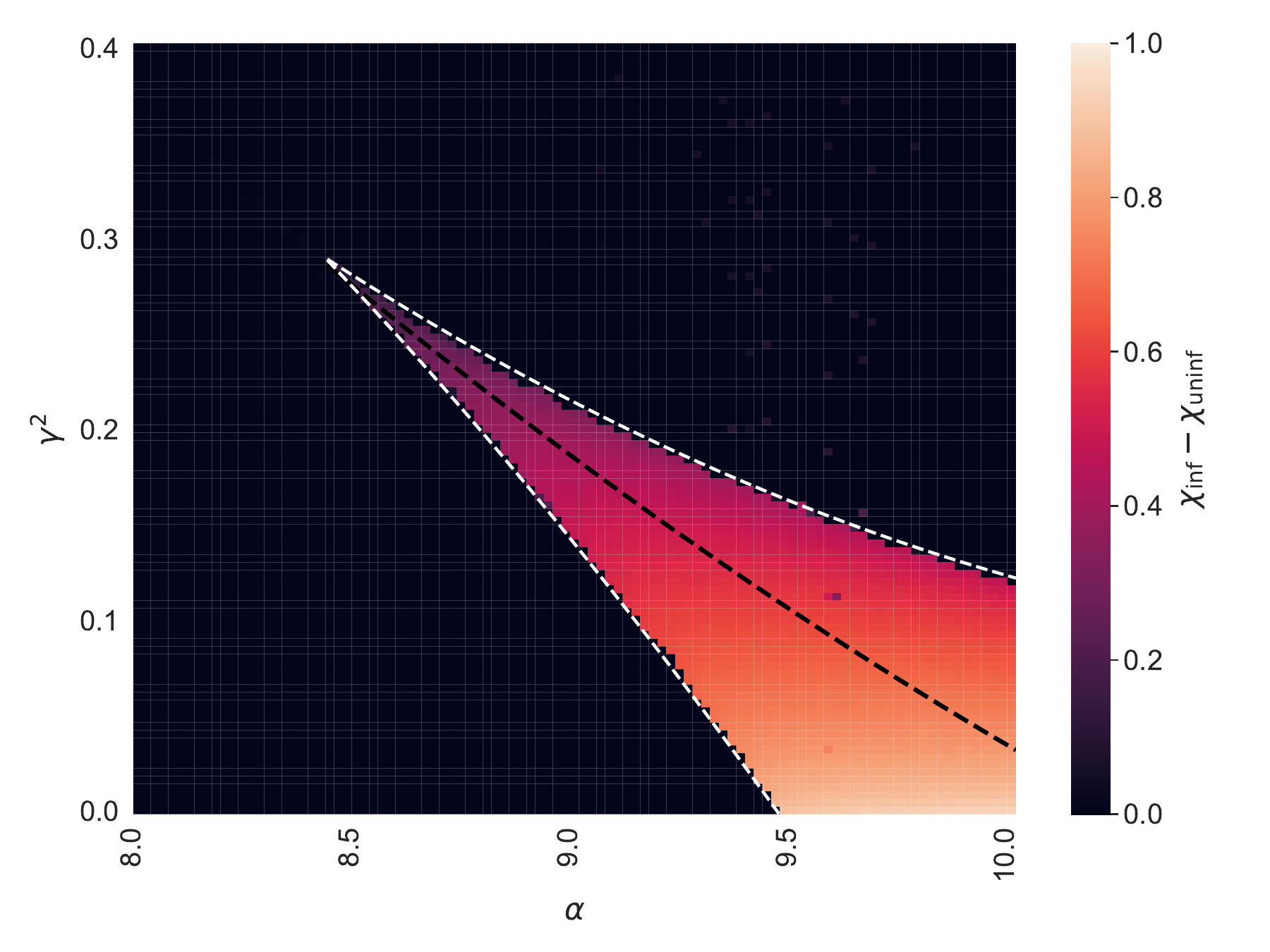}
    \includegraphics[width=0.45\textwidth]{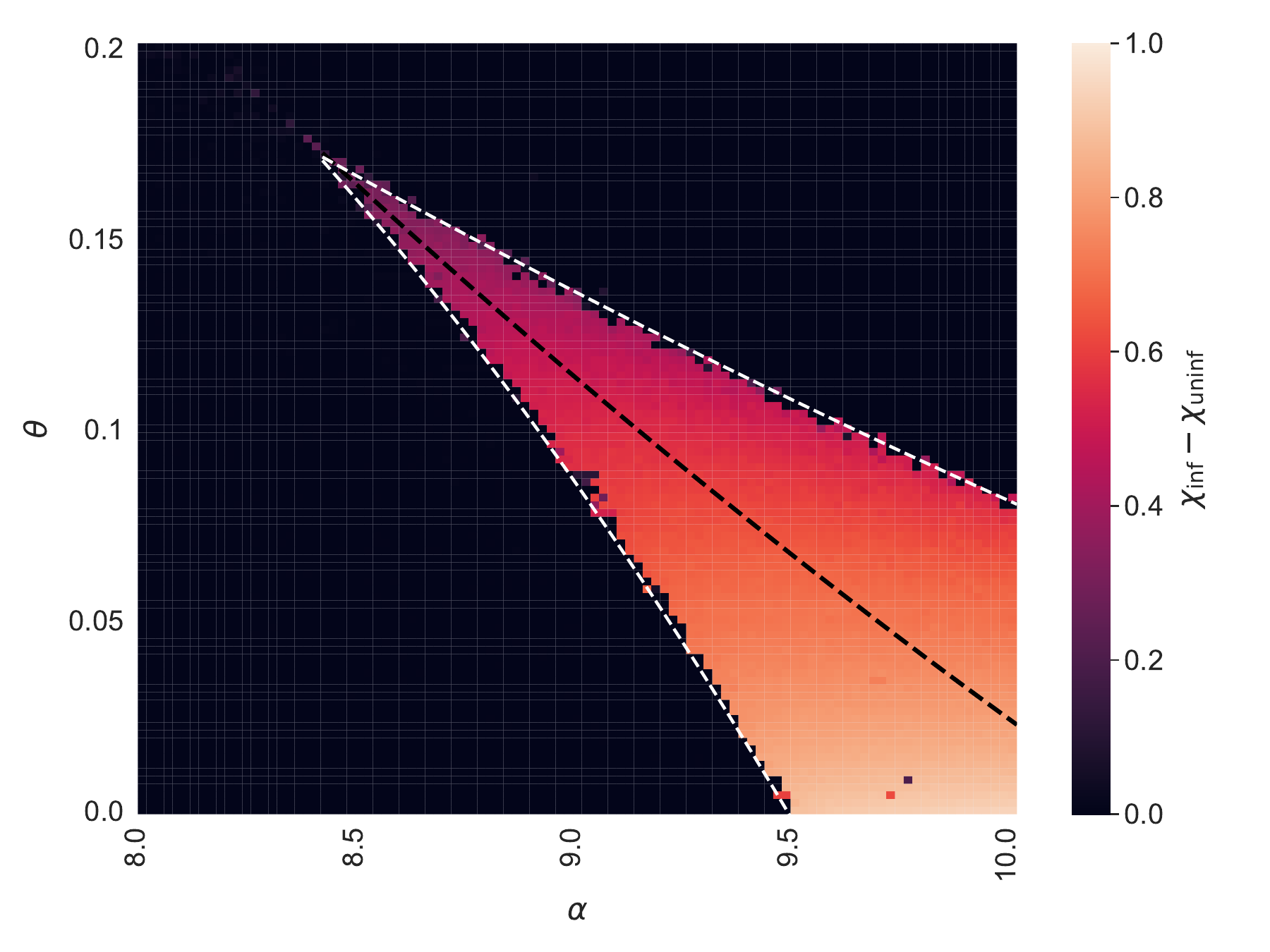}
    \caption{Phase diagrams for flow-based sampling (left) and autoregressive-based sampling (right) for the $k$-NAESAT model. On the x-axis we put the constraints to variables ratio $\alpha=M/N$ and on the y-axis the ratio $\gamma^2 = \alpha^2/\beta^2$ (left) and the decimated ratio $\theta$ (right). We compute the order parameter $\chi$, defined in~(\ref{eq:chi_NAESAT}), both from an uninformed and an informed initialization, and we plot the difference between the two. The dashed white lines are the \textit{spinodal lines}, while the dashed black one is the \textit{IT threshold}, both defined at the beginning of the section. In both plots we have that the dynamical transition is at $\alpha_d\approx 9.465$, the Kauzmann (or condensation) transition is at $\alpha_K \approx 10.3$, while the tri-critical points are at $\alpha_{\rm tri}\approx8.4$ for both flow-based and autoregressive based sampling. Thus, within our numerical precision, the two methods seems to be equally efficient.}
    \label{fig:NEASAT}
\end{figure}
Compared to the plots reported in the main text, here we display directly the difference $\C_{\rm inf} - \C_{\rm uninf}$, so that the coloured zones of the plots are the ones displaying multiple fixed points, as opposed to the black ones. We furthermore draw as white dashed lines the spinodal points, and as a black dashed line the IT threshold, both defined at the beginning of Appendix~\ref{sec:PhD-app}.

Compared to the previous models, defined on dense graphs, here we don't have some fixed point equations defined on an $O(1)$ number of scalar parameter, but instead some self-consistent equations on probability distributions. This clearly makes it harder to have a precise estimate of the tri-critical points, both for the tilted and the pinned measures. For the precision we were able to achieve, both for flow-based sampling and for autoregressive-based sampling the tri-critical point appears to be around $\alpha_{\rm tri} \approx 8.4$, and we are not able to state if one of the two methods has a smaller gap compared to the other and thus if there is a range in $\alpha$ where it is able to sample efficiently where the other can not. A more careful analysis is needed to clarify this point.

\section{Sampling simulations for Algorithm~\ref{Algo}}
We now show how the phenomena described through the previous phase diagram influences the performance of the sampling algorithm in practice.
\begin{figure}[t!]
    \centering
    \includegraphics[width=0.9\textwidth]{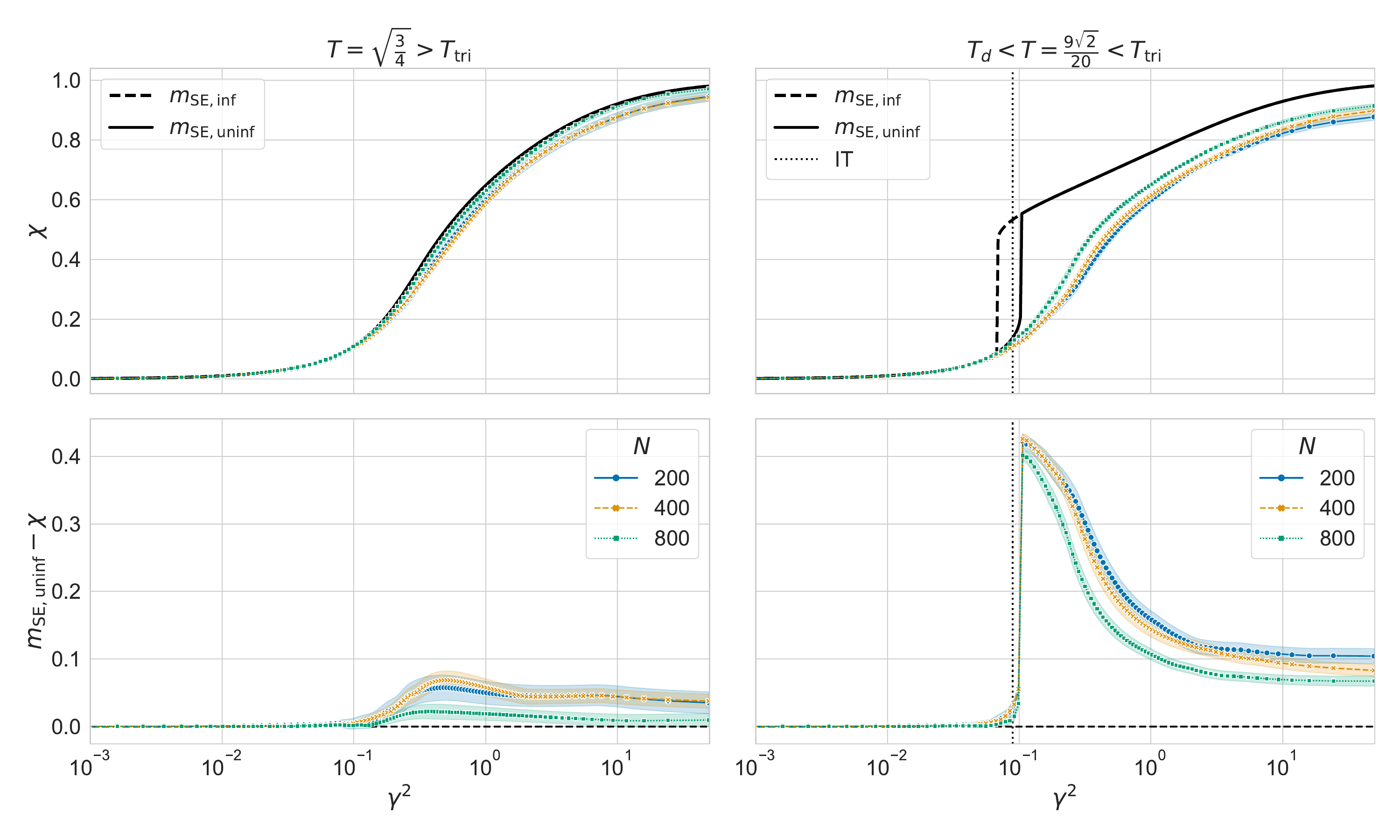}
    \caption{\textbf{Spherical p-spin: $\chi(\gamma)$ for flow-based sampling.} We compare the results for the order parameter $\chi$ computed from the State Evolution equations (black lines) to finite-size implementations of the sampling algorithm in two regimes. Left: $T = \sqrt{\frac{3}{4}}>\Delta_{\rm tri}$, for all values of $\gamma^2$, the SE equations have a unique fixed point, and thus the initialization plays no role. The resulting curve (black continuous line) is compared to algorithmic implementations of the flow-based sampling algorithm, for sizes $N=200,400,800$. These curves, shown in different colours above, match the asymptotic prediction. Right: $T_{d}<T=\frac{9\sqrt{2}}{20}<T_{\rm tri}$, there is a range of values of $\gamma^2$, for which the SE equations have two distinct fixed points. More precisely, for all values between the IT point (reported as a dotted line) and the informed spinodal point the model presents an algorithmically hard phase. The uninformed/informed state evolution curves (black continuous/dashed lines respectively) are compared to algorithmic implementations of the flow-based sampling algorithm, for sizes $N=200,400,800$. Importantly, we show that in this regime there is an evident mismatch with the asymptotic prediction. }
    \label{fig:SphT_N}
\end{figure}
In Fig.~\ref{fig:SphT_N} we report some numerical simulations where we compare the asymptotic curves derived through state evolution with some empirical simulations implementing the flow-based sampling for the spherical $p$-spin model.

Specifically, we compare what happens at a high value of temperature $T>T_{\rm tri}$, where we expect the sampling scheme to work, to a value of $T$ in the interval $[T_d,T_{\rm tri}]$, where, as previously explained, we predict the algorithm to fail.

In the first case, shown in the left part of the plot, we see that finite-size simulation follows very well the theoretical prediction, already at a considerably low number of variables. Conversely, in the second case, reported in the right part, the situation is different, and the curves show a gap.

Moreover, we check when the Nishimori conditions are satisfied for different values of temperature $T$ in the spherical $p$-spin. We do this by analysing the following observable:
\begin{eqnarray}\label{eq:overlap_alpha}
    \text{OV}(\gamma) \equiv \frac{1}{N}\mathbb{E} \left[ {\bf Y_t} \langle {\bf x} \rangle \right]
\end{eqnarray}
which is the overlap between the observation vector ${\bf Y_t}$ and the average magnetization $\langle {\bf x}\rangle$.

Indeed, by definition of the observation process we can write
\begin{eqnarray}
    \text{OV}(\gamma) = \frac{\alpha(t)}{N}\mathbb{E} \left[ {\bf x_0} \langle {\bf x} \rangle \right] + \frac{\beta(t)}{N}\mathbb{E} \left[ {\bf z} \langle {\bf x} \rangle \right]
\end{eqnarray}
and furthermore by using Stein's lemma \cite{stein1981estimation} we can write it as
\begin{eqnarray}\label{eq:OV_step1}
\begin{aligned}
    \text{OV}(\gamma) &= \frac{\alpha(t)}{N}\mathbb{E} \left[ {\bf x_0} \langle {\bf x} \rangle \right] + \frac{\beta(t)}{N}\mathbb{E} \left[  \partial_{\bf z}\langle {\bf x} \rangle \right] \\
    &= \frac{\alpha(t)}{N}\mathbb{E} \left[ {\bf x_0} \langle {\bf x} \rangle \right] + \frac{\beta(t)}{N}\mathbb{E} \left[  \frac{\alpha(t)}{\beta(t)}\langle \| {\bf x}\|_2^2 \rangle - \frac{\alpha(t)}{\beta(t)}\langle {\bf x} \rangle^2\right] \\ 
    &= \alpha(t)\frac{1}{N}\mathbb{E} \left[ {\bf x_0} \langle {\bf x}\rangle + \langle\| {\bf x}\|_2^2 \rangle - \langle {\bf x} \rangle^2 \right] \approx \alpha(t) + \alpha(t)\frac{1}{N}\mathbb{E} \left[ {\bf x_0} \langle {\bf x}\rangle - \langle {\bf x} \rangle^2 \right]\,.
\end{aligned}
\end{eqnarray}
where in the last step we used the fact that $\lim_{N\rightarrow \infty}\| {\bf x}\|_2^2/N = 1 $. Now, \textit{if} the Nishimori conditions are satisfied, we have that $\mathbb{E}\left[ {\bf x_0} \langle {\bf x}\rangle \right] = \mathbb{E}\left[ \langle {\bf x} \rangle^2 \right] $. Putting this equivalence back into~(\ref{eq:OV_step1}), we see that in this case $\text{OV}$ coincides with the function $\alpha(t)$ that defines the interpolant process.

In Fig.~\ref{fig:SphT_alpha}, we use this equivalence and compare with $\alpha(t)=1-t$, the behaviour of $\text{OV}$ for $T = \frac{9\sqrt{2}}{20} > T_{\rm tri}$ and for $T_{d} < T =\sqrt{\frac{3}{4}} < T_{\rm tri}$, showing that in the first case we are Bayes optimal for all the values of $\gamma$, and thus at each sampling step. Instead, in the second case, even if at the beginning the two curves coincide, around the IT threshold they develop a gap, and after this point they take two different paths. This behaviour is consistent with what we observed from the study of the order parameter $\chi$, for which also a gap with the state evolution equations was developed around the IT threshold.
\begin{figure}[t!]
    \centering
    \includegraphics[width=0.9\textwidth]{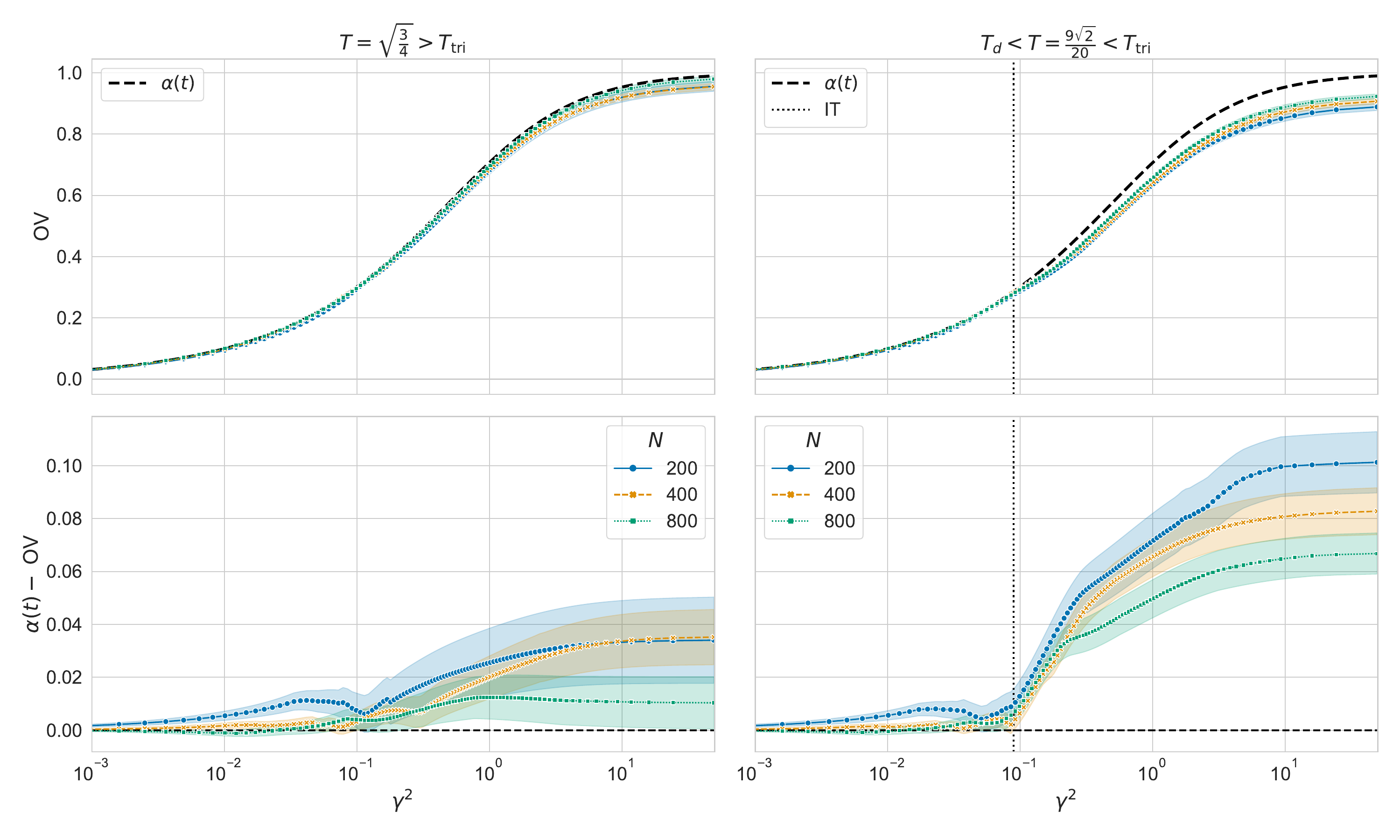}
    \caption{\textbf{Spherical p-spin: Checking the Nishimori conditions for flow-based sampling.} We compare the results for the overlap $\text{OV}$, defined in~(\ref{eq:overlap_alpha}) computed from finite-size $(N=200,400,800)$ implementations  of the sampling algorithm to the behaviour with $\gamma$ of the function $\alpha(t) = 1-t$ in two regimes. Left: $T = \sqrt{\frac{3}{4}}>\Delta_{\rm tri}$. The simulations, shown in different colours above, are very close to the asymptotic prediction, even if the sizes are relatively limited. Right: $T_{d}<T=\frac{9\sqrt{2}}{20}<T_{\rm tri}$. Here, after the IT threshold there is an evident mismatch between the two curves, due to the departure of the algorithm from the Bayes-optimality regime.
    }
    \label{fig:SphT_alpha}
\end{figure}

\section{Bayes optimal inference: Concentration, Replica Symmetry and Nishimori identities}

We shall briefly recall here some of the important properties of the measure associated with optimal Bayesian denoising, that we are using in the main text. All of these properties are well known in the literature, but we recall them for completeness

The first one are the well-known Nishimori identities \cite{nishimori2001statistical,iba1999nishimori,zdeborova2016statistical}, which are valid for any Bayesian posterior estimation problem where one observes a variable $Y$ sampled from $P(Y|X)$, and attempt to reconstruct $X$ by computing the posterior average. We reproduce the theorem and the proof here
\begin{theorem}[Nishimori Identity]
Let $ X^{(1)}, \ldots   , X^{(k)} $ be $ k $ i.i.d. samples (given $ Y $) from the distribution $ P \rdbrsv{ X = \cdot }{ Y } $. Denoting $ \agbrs{ \cdot } $ the ``Boltzmann" expectation, that is the average with respect to the $ P \rdbrsv{ X = \cdot }{ Y } $, and $ \mathbb{E}\sqbrs{ \cdot } $ the ``Disorder" expectation, that is with respect to $ (X^*,Y)$. Then for all continuous bounded function $ f $ we can switch one of the copies for $X^*$: 
\begin{equation}
    \mathbb{E} \sqbrs{ \agbrs{ f \rdbrs{ Y, X^{(1)}, \ldots, X^{(k-1)}, X^{(k)} } }_k } = \mathbb{E} \sqbrs{ \agbrs{ f \rdbrs{ Y, X^{(1)}, \ldots, X^{(k-1)}, X^* }}_{k-1} }
\end{equation}
\end{theorem}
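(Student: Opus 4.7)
The identity is essentially a one-line consequence of Bayes' theorem combined with the exchangeability of i.i.d.~posterior samples; the plan is to make this precise by carefully unpacking both sides as integrals against the joint law of $(X^*,Y,X^{(1)},\dots,X^{(k)})$ and then identifying the two resulting expressions.

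First, I would write the joint law of the disorder as $P(X^*,Y) = P(X^*)P(Y|X^*)$ and recall that, by definition, conditional on $Y$ the samples $X^{(1)},\dots,X^{(k)}$ are i.i.d.\ with law $P(\cdot|Y)$ and independent of $X^*$ given $Y$. Then the left-hand side becomes the integral
\begin{equation*}
    \int P(x^*)\,P(y|x^*)\Bigl(\prod_{i=1}^{k} P(x^{(i)}|y)\Bigr)\, f(y,x^{(1)},\dots,x^{(k)})\,dx^*\,dy\prod_{i=1}^{k}dx^{(i)},
\end{equation*}
while the right-hand side becomes
\begin{equation*}
    \int P(x^*)\,P(y|x^*)\Bigl(\prod_{i=1}^{k-1} P(x^{(i)}|y)\Bigr)\, f(y,x^{(1)},\dots,x^{(k-1)},x^*)\,dx^*\,dy\prod_{i=1}^{k-1}dx^{(i)}.
\end{equation*}

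Second, the key step: apply Bayes' rule in the form $P(x^*)P(y|x^*)=P(y)P(x^*|y)$ on the right-hand side. After this substitution the variable $x^*$ plays formally the same role as any posterior replica, being integrated against $P(\cdot|y)$. Relabeling $x^*$ as $x^{(k)}$ and writing $P(y)=\int P(\tilde x)P(y|\tilde x)\,d\tilde x$ to reintroduce the planted integration, the right-hand side becomes identical to the left-hand side. Equivalently, and perhaps more transparently, one can argue at the level of conditional laws: conditional on $Y=y$, the tuple $(X^{(1)},\dots,X^{(k-1)},X^*)$ and the tuple $(X^{(1)},\dots,X^{(k-1)},X^{(k)})$ both consist of $k$ i.i.d.\ samples from $P(\cdot|y)$ — the first $k-1$ by construction, the last either because $X^{(k)}$ is a posterior replica or because, by Bayes, the conditional law of $X^*$ given $Y=y$ is precisely $P(\cdot|y)$. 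Taking conditional expectation of $f$ and then the outer expectation over $Y$ gives the claim.

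There is no real obstacle in this proof: the whole content is the Bayes-optimality assumption, which ensures that the ``planted'' variable $X^*$ is statistically indistinguishable, conditional on $Y$, from any freshly drawn posterior sample. The only point worth being careful about is bookkeeping with the measures so that the planted integration over $x^*$ is not lost when $x^*$ is relabeled as a replica, and the boundedness/continuity of $f$ is used only to justify Fubini in exchanging the integrals; no deeper estimates are needed.
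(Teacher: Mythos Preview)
Your proposal is correct and follows essentially the same route as the paper: both arguments use Bayes' rule $P(x^*)P(y|x^*)=P(y)P(x^*|y)$ to recognize that, conditional on $Y$, the planted variable $X^*$ has the posterior law $P(\cdot|y)$ and can therefore be relabeled as the $k$-th replica. The paper's proof is terser (starting from the right-hand side and relabeling $x^*$ as $x^{(k)}$ in two lines), but the content is identical to yours.
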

\begin{proof}
    The proof is a consequence of Bayes theorem and of the fact that both $x^*$ and any of the copy $X^{(k)}$ are distributed from the posterior distribution. Denoting more explicitly the Boltzmann average over $k$ copies for any function $g$ as
  \begin{align}
 \agbrs{ g(X^{(1)},\ldots,X^{(k)}) }_{k}    
 \eqdef 
 \int \prod_{i=1}^k \dd{x}_i P(x_{i}|Y) g(X^{(1)},\ldots,X^{(k)})\,
 \end{align}  
    we have, starting from the right-hand side
\begin{align}
& \mathbb{E}_{Y,X^*} \sqbrs{ \agbrs{ f \rdbrs{ Y, X^{(1)}, \ldots, X^{(k-1)}, X^{*} } }_{k-1} }   \nonumber \\
&=  \int dx^* dy P(x^{*}|Y)P(Y)\agbrs{ f \rdbrs{ Y, X^{(1)}, \ldots, X^{(k-1)}, X^* } }_{k-1}  \nonumber \\
 &=  \mathbb{E}_Y \int \dd{x}^{k} P(x^{k}|y)  \agbrs{ f \rdbrs{ Y, X^{(1)}, \ldots, X^{(k-1)}, X^k } }_{k-1}  \nonumber \\
 &= \mathbb{E}_Y   \sqbrs{ \agbrs{ f \rdbrs{ Y, X^{(1)}, \ldots, X^{(k-1)}, X^{(k)} } }_k } \nonumber
 \end{align}
\end{proof}
In particular, we have the relation $\M=\C$, as stated in the main text.

We now move to the specific case of Gaussian denoising, and more specifically to the measure:
\begin{equation}
    P_{\gamma}({\bf x}) = \frac 1{Z_n} \exp \left( \gamma^2 \langle {\bf x_0},{\bf x} \rangle + \gamma\langle {\bf z},{\bf x} \rangle - \frac{\gamma^2}{2}\|{\bf x}\|^2\right)P_0({\bf x})\,
    \label{tilded-sup}
\end{equation}

The first identity is that the derivative of the free entropy associated with this problem is simply the expected overlap $\M$ (This is often called in slightly different context the I-MMSE theorem \cite{guo2013interplay}) and the second derivative the (Boltzmann) variance of the overlap (this is often called the Fluctuation-Dissipation theorem in statistical mechanics):
\begin{lemma}[First(I-MMSE theorem\cite{guo2005mutual}) and second (FDT theorem \cite{kubo1966fluctuation}) derivative of the free entropy] \label{lem:fdt_ov}
Consider the free entropy density associated with the measure~(\ref{tilded-sup}): 
\begin{equation}
f_N = \frac 1N {\mathbb E} [\log Z_n (\gamma)]
\end{equation}
then
\begin{equation}
\partial_{\gamma} f_n = \gamma \M(\gamma) \equiv \frac {\gamma}N \mathbb{E}[\langle {\bf x}(\gamma)\rangle {\bf x}_0]
\end{equation}
and
\begin{equation}
\partial^2_{\gamma} f_N = N\gamma^2
\mathbb{E}\left[
\left( 
\langle
(\frac { {\bf x}(\gamma)\cdot {\bf x}_0}{N})^2\rangle 
\right)
-
\left( 
\frac {\langle {\bf x}(\gamma)\rangle {\bf x}_0} N
\right)^2
\right] 
= N \gamma^2 {\mathbb E} [{\rm var}\left (\frac{{\bf x} \cdot {\bf x}_0}N \right)]
\end{equation}
\end{lemma}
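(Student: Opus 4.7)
The strategy is to differentiate $\log Z_N$ directly with respect to $\gamma$, reduce the resulting Boltzmann expectations via Gaussian integration by parts (Stein's lemma) applied to the noise ${\bf z}$, and collapse the remaining expressions via the Nishimori identity just derived. For the first derivative, since the Hamiltonian $\mathcal{H} = \gamma^2 \langle {\bf x}_0,{\bf x}\rangle + \gamma\langle{\bf z},{\bf x}\rangle - \tfrac{\gamma^2}{2}\|{\bf x}\|^2$ is smooth in $\gamma$,
\begin{equation*}
\partial_\gamma \log Z_N \;=\; \bigl\langle 2\gamma\, {\bf x}_0\!\cdot{\bf x} + {\bf z}\!\cdot{\bf x} - \gamma\,\|{\bf x}\|^2\bigr\rangle_\gamma.
\end{equation*}
Coordinate-wise Stein's lemma gives $\mathbb{E}[z_i \langle x_i\rangle_\gamma] = \gamma\,\mathbb{E}[\langle x_i^2\rangle_\gamma - \langle x_i\rangle_\gamma^2]$, so summing over $i$ the $\|{\bf x}\|^2$ contributions cancel and one is left with $2\gamma\,\mathbb{E}\langle {\bf x}_0\!\cdot{\bf x}\rangle - \gamma\,\mathbb{E}\,\|\langle {\bf x}\rangle\|^2$. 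The Nishimori identity applied to two replicas gives $\mathbb{E}\|\langle {\bf x}\rangle\|^2 = \mathbb{E}\langle {\bf x}_0\!\cdot{\bf x}\rangle$, hence $\partial_\gamma f_N = \gamma\,\mu(\gamma)$.

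For the second derivative, I would differentiate once more using the Boltzmann identity $\partial_\gamma\langle F\rangle_\gamma = \mathrm{Cov}_\gamma(F,\partial_\gamma\mathcal{H})$ for $\gamma$-independent $F$, giving
\begin{equation*}
\partial_\gamma^2 \log Z_N \;=\; \mathrm{Var}_\gamma\bigl(2\gamma\,{\bf x}_0\!\cdot{\bf x} + {\bf z}\!\cdot{\bf x} - \gamma\,\|{\bf x}\|^2\bigr) \;+\; \bigl\langle 2\,{\bf x}_0\!\cdot{\bf x} - \|{\bf x}\|^2\bigr\rangle_\gamma.
\end{equation*}
Expanding the variance and applying Stein's lemma a second time to the cross terms and to the quadratic $({\bf z}\!\cdot{\bf x})^2$ contribution reduces every ${\bf z}$-dependence to connected Gibbs correlators. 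Repeated use of the Nishimori identity then converts the expectations still involving the ground truth ${\bf x}_0$ into two-replica overlaps, while the $\|{\bf x}\|^2$ and ${\bf z}\!\cdot{\bf x}$ pieces mutually cancel as in the first-derivative calculation. What remains, after dividing by $N$, is the advertised fluctuation--dissipation expression in terms of the variance of the overlap.

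\textbf{Main obstacle.} The first identity is a clean three-step computation. The second is essentially a bookkeeping exercise: three-replica correlators of the form $\mathbb{E}\langle x_i x_j\rangle\langle x_i\rangle\langle x_j\rangle$ appear a priori in the expansion of the variance after Stein's lemma, and must be shown to collapse to the single two-replica overlap variance, the key collapse being effected by the Nishimori identity with three and four replicas. A cleaner cross-check, which I would run in parallel, is to re-parametrize by the SNR $s=\gamma^2$ and notice that $\mathbb{E}\log Z_N$ differs from the mutual information $I({\bf x}_0;{\bf y})$ by a term linear in $s$; the classical I--MMSE theorem \cite{guo2005mutual} then yields the first derivative immediately, and differentiating the MMSE identity once more via Stein's lemma and Nishimori reproduces the variance-of-overlap expression of the second statement.
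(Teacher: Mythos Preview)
Your approach is exactly the one the paper uses: differentiate $\log Z_N$ directly, apply Stein's lemma to eliminate the ${\bf z}\cdot{\bf x}$ term (producing the cancellation with the $\|{\bf x}\|^2$ term), and close with the Nishimori identity $\mathbb{E}\|\langle{\bf x}\rangle\|^2=\mathbb{E}\langle{\bf x}_0\cdot{\bf x}\rangle$. For the second derivative the paper only says ``obtained along the same line by deriving twice with respect to $\gamma$'', so your more explicit outline via $\partial_\gamma^2\log Z_N=\mathrm{Var}_\gamma(\partial_\gamma\mathcal{H})+\langle\partial_\gamma^2\mathcal{H}\rangle$ followed by Stein and higher-replica Nishimori is in fact more detailed than what the paper provides, but it is the same computation.
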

\begin{proof}
The proof is a direct application of Nishimori identities together with Stein lemma (which states that
${\mathbb E}_Z zg(Z) = {\mathbb E} g'(z)$ for a Gaussian random variable $Z$), which we reproduce here:
\begin{eqnarray}
\partial_{\gamma} f_n &=& {\mathbb E} \int d{\bf x} P_{\gamma}({\bf x}) \frac {2\gamma {\bf x}_0 \cdot {\bf x} + {\bf x} \cdot {\bf z} - \gamma{\bf x} \cdot {\bf x} }{N {\bf z}(\gamma)} \\
&=& 2 \gamma \M - \gamma {\mathbb E} \langle \frac {{\bf x} \cdot {\bf x}}N \rangle- {\mathbb E}\int d{\bf x} P_{\gamma}(X) \frac {X \cdot Z}{N {\bf z}(\gamma)} \\
&=&2 \gamma \M - \gamma {\mathbb E} \langle \frac {{\bf x} \cdot {\bf x}}N \rangle +  \gamma {\mathbb E} \langle \frac {{\bf x} \cdot {\bf x}}N \rangle - \gamma \frac 1N  \langle {\bf x} \rangle \cdot \langle {\bf x} \rangle 
\label{weusedstein}\\
&=&2 \gamma \M - \C = \gamma \M 
\label{weusednish}
\ \end{eqnarray}
where we use Stein lemma on line~\ref{weusedstein} and Nishimori on line~\ref{weusednish}. The second derivative identity is obtained along the same line by deriving twice with respect to $\gamma$.
\end{proof}

This lemma in turn implies that the concentration of the overlap $ \M $, a trick often used in the literature of mathematical physics when proving the replica equation (see e.g. \cite{barbier2021overlap}):
\begin{theorem}[Concentration of overlaps]
Almost everywhere in $gamma$, we have for some $K(\gamma)$, that
$$
{\mathbb E} [{\rm var}\left (\frac{{\bf x} \cdot {\bf x}_0}N \right)] \to_{N \to \infty} 0 
$$
\end{theorem}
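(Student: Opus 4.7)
The plan is to leverage the second-derivative identity in Lemma~\ref{lem:fdt_ov} and the boundedness of the free entropy to obtain an $L^1$-to-zero bound on the variance, then upgrade to almost-everywhere convergence via the convexity of $f_N$ in $\gamma$.

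First I would observe that Lemma~\ref{lem:fdt_ov} rewrites the expected variance as
\begin{equation*}
\mathbb{E}\!\left[{\rm var}\!\left(\frac{\mathbf{x}\cdot\mathbf{x}_0}{N}\right)\right] \;=\; \frac{1}{N\gamma^2}\,\partial_\gamma^2 f_N(\gamma),
\end{equation*}
which is in particular non-negative, so $f_N$ is convex in $\gamma$ on $(0,\infty)$. Integrating this identity against the factor $N\gamma^2$ over any compact interval $[\gamma_1,\gamma_2]\subset(0,\infty)$ telescopes via the fundamental theorem of calculus and the first-derivative identity to
\begin{equation*}
\int_{\gamma_1}^{\gamma_2} N\gamma^2\,\mathbb{E}\!\left[{\rm var}\!\left(\tfrac{\mathbf{x}\cdot\mathbf{x}_0}{N}\right)\right] d\gamma \;=\; \partial_\gamma f_N(\gamma_2) - \partial_\gamma f_N(\gamma_1) \;=\; \gamma_2\,\M(\gamma_2) - \gamma_1\,\M(\gamma_1).
\end{equation*}
Under the mild assumption on $P_0$ that the overlap $\M(\gamma)$ stays $O(1)$ uniformly in $N$ and $\gamma$ (satisfied by every model considered in this paper, since the configuration space is either bounded or Gaussian with fixed variance, and $\M \le \C$ by Cauchy–Schwarz plus Nishimori), the right-hand side is bounded by some constant $C(\gamma_1,\gamma_2)$ independent of $N$. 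Dividing through by $N$ yields $\int_{\gamma_1}^{\gamma_2}\gamma^2\,\mathbb{E}[{\rm var}]\,d\gamma = O(1/N)\to 0$, so $\mathbb{E}[{\rm var}]\to 0$ in $L^1_{\rm loc}((0,\infty))$.

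To upgrade this $L^1$ decay to pointwise-a.e.\ convergence I would invoke a standard Griffiths-type argument. The family $\{f_N\}$ is a uniformly bounded sequence of convex functions of $\gamma$ on $(0,\infty)$, and by the replica formulas invoked elsewhere in the paper it converges pointwise to a limit $f$, which is then also convex. By Griffiths' lemma for convex functions, $\partial_\gamma f_N(\gamma)\to \partial_\gamma f(\gamma)$ at every point $\gamma$ where $f$ is differentiable, i.e.\ Lebesgue-almost everywhere. At any such regular point one can then differentiate the relation $\int \partial_\gamma^2 f_N = \partial_\gamma f_N(\gamma_2)-\partial_\gamma f_N(\gamma_1)$ with respect to $\gamma_2$ and combine with the uniform $1/N$ bound to force $\partial_\gamma^2 f_N(\gamma)/N \to 0$, which is exactly the desired statement.

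The main obstacle I expect is the $L^1$-to-a.e.\ upgrade: the integrated bound alone only gives $L^1$ convergence and hence a.e.\ convergence along some subsequence, so to pin down the full sequence one really needs the convexity of both $f_N$ and its limit. The exceptional $\gamma$ values excluded by the ``almost everywhere'' qualifier are precisely the countably many discontinuity points of $\partial_\gamma f$, which correspond to the first-order phase transitions that constitute the central subject of the paper; at these non-analyticity points the variance is genuinely $\Theta(1)$ because of phase coexistence.
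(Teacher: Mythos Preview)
Your approach is essentially the same as the paper's: both integrate the second-derivative identity from Lemma~\ref{lem:fdt_ov} over a compact $\gamma$-interval, use the first-derivative identity $\partial_\gamma f_N=\gamma\M$ together with the boundedness of $\M$ to get an $O(1/N)$ bound on $\int \gamma^2\,\mathbb{E}[{\rm var}]\,d\gamma$, and infer that the variance vanishes almost everywhere. The paper's own proof is in fact terser than yours: it stops at the integrated bound and simply asserts the a.e.\ conclusion, without addressing the $L^1$-to-pointwise upgrade that you correctly flag as the delicate step.

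One caveat on your upgrade: the sentence ``differentiate the relation $\int\partial_\gamma^2 f_N=\partial_\gamma f_N(\gamma_2)-\partial_\gamma f_N(\gamma_1)$ with respect to $\gamma_2$'' is circular as written (it just returns $\partial_\gamma^2 f_N(\gamma_2)$ on both sides). The convexity/Griffiths route you invoke is the right instinct, but what it actually delivers is that for any regular $\gamma_0$ and small $\delta$ with $\gamma_0\pm\delta$ also regular, $\int_{\gamma_0-\delta}^{\gamma_0+\delta}\partial_\gamma^2 f_N\to \partial_\gamma f(\gamma_0+\delta)-\partial_\gamma f(\gamma_0-\delta)$, which can be made small by shrinking $\delta$; this controls local averages of the variance, not its pointwise value. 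In the rigorous literature (e.g.\ Barbier--Macris) this is exactly what is proved and used, typically combined with a small random perturbation of $\gamma$. Your identification of the exceptional set with the first-order transition points is correct and is precisely the physical content of the ``almost everywhere'' qualifier.
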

\begin{proof}
From the derivative, we have that 
\begin{eqnarray}
    \int_{\gamma_1}^{\gamma_2} {\rm var}_{\gamma} \left (\frac{{\bf x} \cdot {\bf x}_0}N \right) = \frac 1{\gamma^2 N}(\M(\gamma_2)-\M(\gamma_1)) \le \frac KN 
\end{eqnarray}
where we have assumed that $\M$ is bounded by some constant (which is the case for the discrete variables discussed in this paper where $-1<\M<1$). As a consequence, almost everywhere in $\gamma$, the variance of the overlap must vanish. 
\end{proof}
Additionally, one can also prove the concentration of the variance with respect to the disorder, see \cite{barbier2021overlap}.  

Finally, similar results exist in the case of ``pinning"
 \cite{abbe2013conditional,coja2017information}.

\section{\label{sec:AnalysisAlgo}Analysis of Algorithm~\ref{Algo}}

In this section, we provide a theoretical analysis of the performance of Algorithm~\ref{Algo} for the ``efficient" regime in Figure~\ref{fig:mainfig}. In these regimes, we conjecture being able to approximate perfect denoising during the interpolation path. Therefore, to simplify our analysis, we assume access to a perfect denoiser and primarily focus on the validity of the continuous-time limit and the effect of discretization. Additional approximation and discretization errors due to the denoiser, such as AMP can be incorporated in our analysis through a straightforward manner. For a rigorous analysis of the approximation error and Lipschitzness of the AMP iterates in the case of the SK and spiked matrix models, we refer the reader to \cite{el2022sampling,montanari2023posterior}.

Lastly, throughout the analysis, we shall assume that Algorithm~\ref{Algo} is run from time $t=1$ to $t=\epsilon$ for some $\epsilon > 0$. This ensures that we can avoid singularities at $t=0$, while choosing $\epsilon$ arbitrarily close to $0$ allows us to approximate the target measure up to arbitrary accuracy.

Recall the definition of the ``tilted" posterior measure at time $t$:
\begin{equation}
    P({\bf x}|{\bf y}(t)={\bf Y}_t)=\frac{1}{Z({\bf Y}_t)}\exp \left( \frac{\alpha(t)}{\beta(t)^2}\langle {\bf Y}_t,{\bf x} \rangle - \frac{\alpha(t)^2}{2\beta(t)^2}||{\bf x}||^2\right)P_0({\bf x}).
\end{equation}
Here the factor $\frac{1}{Z({\bf Y}_t)}\exp \left( \frac{\alpha(t)}{\beta(t)^2}\langle {\bf Y}_t,{\bf x} \rangle - \frac{\alpha(t)^2}{2\beta(t)^2}||{\bf x}||^2\right)$ is interpreted as the Radon-Nikodym derivative of $P({\bf x}|{\bf y}(t)$ w.r.t $P_0$. Throughout the present section, we shall denote $P({\bf x}|{\bf y}(t)={\bf Y}_t)$ by $P_{t,{\bf Y}_t}$. We shall assume that $\alpha(t),\beta(t)$ are continuously-differentiable.

\subsection{Fluctuation-dissipation and Lipschitzness}

A crucial quantity related to the well-posedness of the continuity equation (Eq.~(\ref{eq:transport})) and the validity of Algorithm~\ref{Algo} is the Lipschitz constant of the vector field. We therefore start by present a preliminary result, which can be interpreted as an instance of the Fluctuation-dissipation theorem:

\begin{lemma}\label{lem:fdt_vec}
    Suppose that the measure $P_0$ has compact support.
    Then the Jacobian of the vector $\frac{\partial b({\bf y},t)}{\partial {\bf y}}$ is related to the covariance of the tilted measure $P_{t,{\bf Y}_t}=P({\bf x}|{\bf y}(t)={\bf Y}_t)$ as follows:
    \begin{equation}
         \frac{\partial b({\bf y},t)}{\partial {\bf y}}\vert_{\bf y={\bf Y}_t} = \frac{\alpha(t)}{\beta(t)^2}(\dot{\alpha}(t)-\frac{\dot{\beta}(t)\alpha(t)}{\beta(t)})\operatorname{Covar}[P_{t,{\bf Y}_t}]+\frac{ \dot{\beta}(t)}{\beta(t)} 
    \end{equation}
\end{lemma}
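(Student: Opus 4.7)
The plan is to start from the closed-form expression for the velocity field derived in Eq.~(\ref{eq:vecField}),
\begin{equation*}
b({\bf y},t) = \frac{\dot{\beta}(t)}{\beta(t)}\,{\bf y} + \left(\dot{\alpha}(t) - \frac{\dot{\beta}(t)\alpha(t)}{\beta(t)}\right)\mathbb{E}\!\left[{\bf x}_0 \,\big|\, {\bf y}(t)={\bf y}\right],
\end{equation*}
which expresses $b$ as a linear term in ${\bf y}$ plus a prefactor times the posterior mean of ${\bf x}_0$ under the tilted measure $P_{t,{\bf y}}$. Differentiating the linear term gives $\frac{\dot{\beta}(t)}{\beta(t)}\,I$ immediately, so the problem reduces to computing the Jacobian with respect to ${\bf y}$ of the posterior mean $\mathbb{E}_{P_{t,{\bf y}}}[{\bf x}]$.

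Next, I would use the exponential-family structure of $P_{t,{\bf y}}$ written out explicitly from Eq.~(\ref{tilted-eq1}):
\begin{equation*}
P_{t,{\bf y}}({\bf x}) = \frac{1}{Z({\bf y})}\,\exp\!\left(\frac{\alpha(t)}{\beta(t)^2}\langle {\bf y},{\bf x}\rangle - \frac{\alpha(t)^2}{2\beta(t)^2}\|{\bf x}\|^2\right) P_0({\bf x}).
\end{equation*}
Viewing this as an exponential family in the natural parameter $\eta = \frac{\alpha(t)}{\beta(t)^2}{\bf y}$ with sufficient statistic ${\bf x}$, the posterior mean is $\nabla_\eta \log Z$ and its Jacobian with respect to $\eta$ is the covariance of ${\bf x}$ under $P_{t,{\bf y}}$. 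Applied directly, this yields
\begin{equation*}
\frac{\partial}{\partial {\bf y}} \mathbb{E}_{P_{t,{\bf y}}}[{\bf x}] = \frac{\alpha(t)}{\beta(t)^2}\,\operatorname{Covar}[P_{t,{\bf y}}].
\end{equation*}
Substituting this back into the Jacobian of $b$ gives exactly the claimed identity.

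To do this cleanly, I would make the computation elementary by writing $\mathbb{E}_{P_{t,{\bf y}}}[x_i] = \frac{\int x_i \exp(\dots) P_0(d{\bf x})}{\int \exp(\dots) P_0(d{\bf x})}$ and differentiating under the integral sign with respect to $y_j$; the ratio rule produces $\frac{\alpha(t)}{\beta(t)^2}(\mathbb{E}[x_i x_j] - \mathbb{E}[x_i]\mathbb{E}[x_j])$, which is the $(i,j)$ entry of $\frac{\alpha(t)}{\beta(t)^2}\operatorname{Covar}[P_{t,{\bf y}}]$. The main (and essentially only) technical obstacle is justifying the exchange of differentiation and integration, together with the finiteness of moments appearing in the covariance. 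This is exactly where the assumption that $P_0$ has compact support enters: on compact support the integrand and its ${\bf y}$-derivative are uniformly bounded on any compact neighborhood of ${\bf y}$, so dominated convergence applies and all moments (in particular the second moment defining $\operatorname{Covar}[P_{t,{\bf y}}]$) are automatically finite. With these points handled, assembling the pieces is a one-line calculation and gives the stated formula, which is the Fluctuation-Dissipation relation linking the drift's sensitivity to the fluctuations of the tilted measure.
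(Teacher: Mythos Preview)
Your proposal is correct and follows essentially the same approach as the paper: both start from the decomposition of $b({\bf y},t)$ into the linear piece $\frac{\dot\beta}{\beta}{\bf y}$ plus a prefactor times the posterior mean, then differentiate the posterior mean through the explicit exponential-tilt form of $P_{t,{\bf y}}$, invoking compact support and dominated convergence to justify differentiating under the integral and obtain the covariance factor $\frac{\alpha(t)}{\beta(t)^2}\operatorname{Covar}[P_{t,{\bf y}}]$. Your exponential-family remark is a nice gloss on why this computation must yield the covariance, but the underlying calculation is identical to the paper's.
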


\begin{proof}
We have, from Eq.~(\ref{eq:v_field}):
\begin{equation}\label{eq:v_field_pos}
\begin{split}
     b({\bf y},t) &= \mathbb{E}[\dot{\alpha}(t) {\bf x_0 }+ \dot{\beta}(t)z|{\bf y}(t)= {\bf y}]\\
     &=   \mathbb{E}[\dot{\alpha}(t){\bf x_0 }+ \frac{\dot{\beta}(t)}{\beta(t)}({\bf y}(t)-\alpha(t) {\bf x}_0)|{\bf y}(t)= {\bf y}]\\
     &=(\dot{\alpha}(t)-\frac{\dot{\beta}(t)\alpha(t)}{\beta(t)})\mathbb{E}[{\bf x_0 }|{\bf y}(t)= {\bf y}]+\frac{ \dot{\beta}(t)}{\beta(t)}{\bf y}.
\end{split}
\end{equation}
where we used the relation in Eq.~(\ref{eq:x(t)}).
Next, we have through the expression for the posterior measure $P_{t,{\bf Y}_t}$:
\begin{equation}
  \mathbb{E}[{\bf x_0 }|{\bf y}(t)= {\bf y}]=  \mathbb{E}_{P_0}\left[\frac{1}{Z({\bf y})}{\bf x}\exp \left( \frac{\alpha(t)}{\beta(t)^2}\langle {\bf y},{\bf x} \rangle - \frac{\alpha(t)^2}{2\beta(t)^2}||{\bf x}||^2\right)\right].
\end{equation}
Using the boundedness of the support of $P_0$ and dominated convergence theorem, we may differentiate the L.H.S inside the expectation. We obtain that for all $t \in (0,1]$, $\mathbb{E}[{\bf x_0 }|{\bf y}(t)= {\bf Y}_t]$ is differentiable w.r.t ${\bf Y}_t$ with the Jacobian given by $\frac{\alpha(t)}{\beta(t)^2}\operatorname{Covar}[P_{t,{\bf Y}_t}]$. Substituting into Eq.~(\ref{eq:v_field_pos}) completes the proof.

\end{proof}
\subsection{Derivation of the ODE (Eq.~(\ref{eq:ODE}))}
 We start by presenting an informal derivation of the continuity equation, borrowed from \citep{albergo2022building}. Recall that, by definition, ${\bf y}(t)=\alpha(t) {\bf x}_0 + \beta(t) {\bf z})$.
The density $\rho({\bf y},t)$ can be expressed as:
\begin{equation}\label{eq:delta}
    \rho({\bf y},t) = \int_{\R^N\times \R^N} \delta(\bf y-{\bf y}(t))  \rho_0({\bf x_0})\rho_\gamma({\bf z}) d{\bf x_0} d{\bf z},
\end{equation}
where $\rho_\gamma$ denotes the density of the standard Gaussian measure $\gamma=\mathcal{N}({\bf 0},\mathbb{I}_N)$. Differentiating both sides of Equation~(\ref{eq:delta}) yields:
\begin{equation}
\begin{split}
\partial_t\rho({\bf y},t)&= -\int_{\R^N \times \R^N} \nabla\delta(\bf y-{\bf y}(t))\cdot \partial_t {\bf y}(t) \rho_0({\bf x_0})\rho_\gamma({\bf z}) d{\bf x_0} d{\bf z}\\
    &=-\nabla\cdot \int_{\R^N \times \R^N} \delta(\bf y-{\bf y}(t))\partial_t {\bf y}(t) \rho_0({\bf x_0})\rho_\gamma({\bf z}) d{\bf x_0} d{\bf z}\\
    &=-\nabla\cdot \left(\int_{\R^N \times \R^N} \left(\delta({\bf y}-{\bf y}(t))\partial_t {\bf y}(t) \frac{\rho_0({\bf x_0})\rho_\gamma({\bf z})}{\rho({\bf y},t)}d{\bf x_0} d{\bf z}\right) \rho({\bf y},t)\right)
\end{split}    
\end{equation}
 Eq.~(\ref{eq:transport}) is then obtained by noticing that $\int_{\R^d \times \R^d} \left(\delta(\bf y-{\bf y}(t))\partial_t {\bf y}(t) \frac{\rho_0({\bf x_0})\rho_\gamma({\bf z})}{\rho({\bf y},t)}d{\bf x_0} d{\bf z}\right)$ equals $b({\bf y},t)$ defined by~(\ref{eq:v_field}).

We refer to \cite{albergo2022building,albergo2023stochastic}
for the complete derivation based on the above approach.

We next prove that the pushforward measure obtained after applying the flow defined by the ODE~(\ref{eq:ODE}) to initial Gaussian noise ${\bf z}$ results in a sample from the target measure $P_0$. For the sake of completeness, we include an alternative derivation for Eq.~(\ref{eq:transport}) in our proof, based on the weak form of the continuity equation. 
Our result allows $P_0$ to be a discrete measure, by restricting the time to $t \in (0,1)$.

\begin{lemma}\label{lem:cont}
Let $\epsilon$ be an arbitrarily-fixed real in $(0,1)$. Let $\bf{Y}_t({\bf z})$ denote the flow associated with the ODE in Eq.~(\ref{eq:ODE}) starting from $t=1$ to $t=\epsilon$ with ${\bf z} \sim \gamma$, where $\gamma$ denotes the standard Gaussian measure $\gamma=\mathcal{N}({\bf 0},\mathbb{I}_N)$. Suppose that $P_0$ has bounded support.
Then, the pushforward measure $\bf{Y}_{t_\#} \gamma$ at any time $t \in (0,1]$ equals the measure corresponding to the law $P_t$ of the interpolant ${\bf y}(t)$ defined by equation~(\ref{eq:x(t)}).
\end{lemma}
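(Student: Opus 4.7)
My approach will be to show that both the law $P_t$ of the interpolant and the pushforward ${\bf Y}_{t\#}\gamma$ are weak solutions of the continuity equation~(\ref{eq:transport}) with the common terminal data $\gamma$ at $t=1$, and then invoke uniqueness of such solutions when the drift $b$ is spatially Lipschitz. First, I would verify that $P_t$ solves Eq.~(\ref{eq:transport}) weakly: for any test function $\phi \in C_c^\infty(\R^N)$, the chain rule gives
\begin{equation}
\frac{d}{dt}\mathbb{E}[\phi({\bf y}(t))] = \mathbb{E}\!\left[\nabla \phi({\bf y}(t)) \cdot (\dot{\alpha}(t){\bf x}_0 + \dot{\beta}(t){\bf z})\right],
\end{equation}
and the tower property applied to the conditioning $\{{\bf y}(t)={\bf y}\}$ replaces the parenthesized quantity by $b({\bf y}, t)$ as defined in Eq.~(\ref{eq:v_field}). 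Since $\alpha(1)=0$ and $\beta(1)=1$, we have ${\bf y}(1)={\bf z}$, so the terminal data is indeed $\gamma$.

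Second, I would establish Lipschitzness of $b(\cdot, t)$ uniformly for $t \in [\epsilon, 1]$. Lemma~\ref{lem:fdt_vec} expresses the Jacobian $\partial b/\partial{\bf y}$ as a scalar times the covariance matrix of the tilted posterior $P_{t, {\bf Y}_t}$ plus the scalar $\dot\beta(t)/\beta(t)$. Because $P_0$ has bounded support and each $P_{t, {\bf Y}_t}$ is absolutely continuous with respect to $P_0$, its covariance is uniformly bounded by the squared diameter of $\operatorname{supp}(P_0)$, independently of ${\bf Y}_t$. The scalar factors $\alpha(t)$, $1/\beta(t)$, $\dot\alpha(t)$, $\dot\beta(t)$ are all bounded on $[\epsilon, 1]$ by continuity of $\alpha, \beta$ and strict positivity of $\beta$ there. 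Together these yield a uniform Lipschitz constant for $b$ on $[\epsilon,1]\times \R^N$.

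Third, Cauchy--Lipschitz applied in reverse time gives a well-defined flow ${\bf Y}_t$ running from $t=1$ down to $t=\epsilon$, so the pushforward ${\bf Y}_{t\#}\gamma$ is a bona fide probability measure. Differentiating $\mathbb{E}_{{\bf z}\sim\gamma}[\phi({\bf Y}_t({\bf z}))]$ along the ODE yields the same weak identity
\begin{equation}
\frac{d}{dt}\int \phi\, d\bigl({\bf Y}_{t\#}\gamma\bigr) = \int \nabla\phi \cdot b\, d\bigl({\bf Y}_{t\#}\gamma\bigr)
\end{equation}
as for $P_t$, so ${\bf Y}_{t\#}\gamma$ is also a weak solution of Eq.~(\ref{eq:transport}) on $[\epsilon, 1]$ with terminal data $\gamma$. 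Uniqueness of weak solutions of the continuity equation for Lipschitz drifts (classical, e.g.\ via the method of characteristics or a Gronwall estimate on a dual norm of the difference) then forces ${\bf Y}_{t\#}\gamma = P_t$ on $[\epsilon, 1]$. Since $\epsilon \in (0,1)$ was arbitrary, the identity extends to every $t \in (0, 1]$.

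The main technical obstacle is the blow-up of $b({\bf y}, t)$ as $t \downarrow 0$, caused by the $1/\beta(t)$ factor, which makes the Lipschitz bound degenerate at the endpoint; this is precisely why the lemma is restricted to $t>0$ and why the target $P_0$ itself is only recovered in the limit $\epsilon\downarrow 0$. A secondary subtlety, relevant when $P_0$ is discrete (as for the Ising-type models of the paper), is that one must work throughout with the distributional form of the continuity equation rather than a pointwise PDE in densities, since $P_0$ need not admit a density---though $P_t$ does for every $t \in (0, 1)$ thanks to the Gaussian smoothing by $\beta(t){\bf z}$.
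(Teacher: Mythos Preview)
Your proposal is correct and follows essentially the same route as the paper: both verify that $P_t$ and the pushforward ${\bf Y}_{t\#}\gamma$ are weak solutions of the continuity equation~(\ref{eq:transport}) with the same terminal data, obtain the Lipschitz bound on $b$ from Lemma~\ref{lem:fdt_vec} and the bounded support of $P_0$, and conclude via uniqueness of solutions under a Lipschitz drift. Your write-up is somewhat more explicit about the covariance bound and the passage to arbitrary $t\in(0,1]$, whereas the paper delegates the fact that the flow pushforward solves the continuity equation to a reference (Lemma~4.1.1 in \cite{figalli2021invitation}) and the uniqueness step to \cite{ambrosio2008transport}, but the logical skeleton is the same.
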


\begin{proof}
Let $\psi \in C^\infty_c(\R^d)$ be an arbitrary test function. Using the change of variables formula, the expectation of $\psi$ w.r.t the measure $\mu_t$ at time $t$ can be expressed as:
\begin{equation}
    \int_{\R^N} \psi({\bf y}) dP_t({\bf y}) = \int_{\R^N\times \R^N}\psi({\bf y}(t)) dP_0({\bf x}_0) d\gamma_{z}({\bf z}),
\end{equation}
where ${\bf y}(t)$ is defined as a measurable function of ${\bf x}_0,{\bf z}$ through Eq.~(\ref{eq:x(t)}).

Therefore, $P_t$ evolves in a distributional sense as follows:
\begin{equation}\label{eq:weak}
    \int_{\R^N} \psi({\bf y}) \partial_t dP_t({\bf y})=\frac{d\int_{\R^N} \psi({\bf y}) dP_t({\bf y})}{dt}= 
    \int_{\R^N\times \R^N} \nabla \psi({\bf y}(t))\cdot \partial_t {\bf y}(t) dP_0({\bf x}_0) d\gamma_{z}({\bf z}).
\end{equation}
Recall that:
\begin{equation}
    b({\bf y},t) = \E[\partial_t {\bf y}(t)|{\bf y}(t)={\bf y}].
\end{equation}
Using the definition of the conditional expectation, and the change of variables formula, we have:
\begin{align*}
     \int_{\R^N\times \R^N} \nabla \psi({\bf y}(t))\cdot \partial_t {\bf y}(t) dP_0({\bf x}_0) d\gamma_{z}({\bf z}) &=  \int_{\R^N} \nabla \psi({\bf y}).\E[\partial_t {\bf y}(t)|{\bf y}(t)={\bf y}] dP_t({\bf y})\\
    &=\int_{\R^N}\nabla \psi({\bf y})\cdot b({\bf y},t) dP_t({\bf y})\\
    &=- \int_{\R^N} \psi({\bf y}) \nabla.(b({\bf y},t) P_t({\bf y})) d{\bf y}, 
\end{align*}
where we used the compactness of the support of $\psi(x)$ and the distributional definition of the divergence operator $\nabla$.
Substituting in Eq.~(\ref{eq:weak}), we obtain:
\begin{equation}
    \int_{\R^N} \psi({\bf y}) \partial_t dP_t({\bf y})=- \int_{\R^N} \psi({\bf y}) \nabla.(b({\bf y},t) P_t({\bf y})) d{\bf y}.
\end{equation}
Furthermore, for all $t > \epsilon$ for fixed $\epsilon > 0$, we have $\beta(t) > 0$.
Then, Lemma~\ref{lem:fdt_vec} implies that 
$b({\bf y},t)$ is Lipschitz w.r.t ${\bf y}$. Thus, the probability flow $\bf{Y}_t$ for $t \in [\epsilon,1]$ exists and is unique.
Since both the push-forward measure $\bf{Y}_t\# \gamma$ and the law of ${\bf y}(t)$ i.e. $P_t$ satisfy the continuity equation with velocity field $b({\bf y},t)$ (Lemma 4.1.1. in \cite{figalli2021invitation}).
 By the uniqueness of the solution of the continuity equation (see e.g.~\citep{ambrosio2008transport}), $b(t,x)_\#\mu(z)=\mu_t$ . 
\end{proof}

\subsection{Sampling Guarantees}

In this section, we quantify the effect of discretization errors in the velocity field, leading to sampling guarantees for Algorithm~\ref{Algo}. Again, we fix a parameter $\epsilon$ lying in $(0,1)$
We rely on the following assumptions:

\begin{assumption}\label{As:bsup}
    The measure $P_0$ has compact support lying in a sphere with radius bounded as $\mathcal{O}(\sqrt{N})$
    \begin{equation}
        \operatorname{Supp}(P_0) \subseteq \mathbb{S}^{N-1}(\sqrt{N}R),
    \end{equation}
    for some $R$ independent of $N$.
\end{assumption}

\begin{assumption}\label{As:Lipshcy}
The spectral norm of the vector field's Jacobian $\frac{\partial b({\bf Y}_t ,t)}{\partial {\bf Y}_t}$ w.r.t ${\bf Y}$ is uniformly bounded, i.e.
  $\norm{\frac{\partial b({\bf Y} ,t)}{\partial {\bf Y}}}_2 \leq  L, \forall t \in (\epsilon,1), \forall {\bf Y} \in \R^d$ for some $L \geq 0$.
\end{assumption}

\begin{assumption}\label{As:Lipshct}
$\norm{\frac{\partial b({\bf Y} ,t)}{\partial t}}$ is uniformly bounded by $M\sqrt{N}$ in ${\bf Y}, t$ for $t \in (\epsilon,1)$, for some $M \geq 0$. 
\end{assumption}

We have the following error bounds for the discretization error associated to the flow:

\begin{lemma}\label{lem:disc_err}
    Consider the ODE defined by Eq.~(\ref{eq:ODE}) i.e $\frac{d {\bf Y}}{dt}=b({\bf Y} ,t)$, with ${\bf Y} \in \R^N$. Let ${\bf Y}_t({\bf z})$ denote the flow associated to the above ODE at time $t \in (0,1]$ starting from some fixed ${\bf z} \in \R^N$ at time $t=1$.
    Let ${\bf Y}_{\delta,t}({\bf z})$ denote the iterates of the forward Euler method applied to the above ODE (in reverse time) starting from the same initialization ${\bf z}$ with step-size $\delta$ i.e, for $i \in \N$:
    \begin{equation}
        {\bf Y}_{\delta,\delta (i-1)}({\bf z}) =  {\bf Y}_{\delta,\delta i}({\bf z})-
    \delta b({\bf Y}_{\delta ,\delta i}({\bf z}),\delta i),
    \end{equation}
    with ${\bf z}$ fixed.
    Under Assumptions~\ref{As:bsup},\ref{As:Lipshcy},\ref{As:Lipshct}, there exists a constant $A(\epsilon)$ such that,
    for all $k \in \N, {\bf z} \in \R^N$ with $k\delta \geq \epsilon$:
\begin{equation}
\norm{{\bf Y}_{\delta,k\delta}({\bf z}) -{\bf Y}_{k\delta}({\bf z})}_2 \leq (\frac{M+AL}{L})\sqrt{N} e^{L k \delta}\delta,
\end{equation}
\end{lemma}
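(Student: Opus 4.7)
The plan is the classical one for forward Euler: control the local truncation error uniformly on $[\epsilon,1]$ and propagate the global error via a discrete Gr\"onwall inequality. Let $e_k \defeq \norm{{\bf Y}_{\delta,k\delta}({\bf z})-{\bf Y}_{k\delta}({\bf z})}_2$, so that $e_{1/\delta}=0$ at the common initial time $t=1$, and the goal is to bound $e_k$ for every $k$ with $k\delta \geq \epsilon$.

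\textbf{Step 1: uniform $O(\sqrt{N})$ control of $b$ along the true flow.} I would start by reusing equation~(\ref{eq:v_field_pos}) from the proof of Lemma~\ref{lem:fdt_vec} to write
\begin{equation*}
b({\bf y},t) = \left(\dot\alpha(t)-\frac{\dot\beta(t)\alpha(t)}{\beta(t)}\right)\E[{\bf x}_0\mid {\bf y}(t)={\bf y}] + \frac{\dot\beta(t)}{\beta(t)}{\bf y}.
\end{equation*}
Since $\beta$ is continuous and strictly positive on $[\epsilon,1]$, both coefficients are bounded by a constant $c_\epsilon$; combined with Assumption~\ref{As:bsup}, which gives $\norm{\E[{\bf x}_0\mid {\bf y}]}_2 \leq R\sqrt{N}$, this yields $\norm{b({\bf y},t)}_2 \leq c_\epsilon(R\sqrt{N}+\norm{{\bf y}}_2)$. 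Plugging this into the true ODE and applying continuous Gr\"onwall gives $\norm{{\bf Y}_t({\bf z})}_2 \leq A'\sqrt{N}$ for all $t\in[\epsilon,1]$ (for typical initializations with $\norm{{\bf z}}_2=O(\sqrt{N})$), and hence $\norm{b({\bf Y}_t,t)}_2 \leq A\sqrt{N}$ for some $A = A(\epsilon)$.

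\textbf{Step 2: one-step error.} Subtracting the Euler update ${\bf Y}_{\delta,(k-1)\delta} = {\bf Y}_{\delta,k\delta}-\delta\,b({\bf Y}_{\delta,k\delta},k\delta)$ from the integral form ${\bf Y}_{(k-1)\delta} = {\bf Y}_{k\delta}-\int_{(k-1)\delta}^{k\delta} b({\bf Y}_s,s)\,ds$, and using the triangle inequality, I obtain
\begin{equation*}
e_{k-1} \leq e_k + \delta\,\norm{b({\bf Y}_{\delta,k\delta},k\delta)-b({\bf Y}_{k\delta},k\delta)}_2 + \int_{(k-1)\delta}^{k\delta} \norm{b({\bf Y}_s,s)-b({\bf Y}_{k\delta},k\delta)}_2\,ds.
\end{equation*}
Assumption~\ref{As:Lipshcy} bounds the second term by $L\delta\,e_k$. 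For the integrand I combine Assumptions~\ref{As:Lipshcy} and~\ref{As:Lipshct} with the flow estimate $\norm{{\bf Y}_s-{\bf Y}_{k\delta}}_2 \leq A\sqrt{N}\,|s-k\delta|$ (from Step~1) to obtain $\norm{b({\bf Y}_s,s)-b({\bf Y}_{k\delta},k\delta)}_2 \leq (LA+M)\sqrt{N}\,|s-k\delta|$, so the integral contributes a local truncation error of order $\tfrac{1}{2}(M+LA)\sqrt{N}\,\delta^2$.

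\textbf{Step 3: discrete Gr\"onwall and the main obstacle.} Collecting the above gives the recursion $e_{k-1} \leq (1+L\delta)\,e_k + \tfrac{1}{2}(M+LA)\sqrt{N}\,\delta^2$. Iterating backward from $e_{1/\delta}=0$ over $1/\delta-k$ steps and using $(1+L\delta)^{n}\leq e^{Ln\delta}$ with $n\delta = 1-k\delta$ yields
\begin{equation*}
e_k \leq \frac{M+LA}{2L}\sqrt{N}\,\big[(1+L\delta)^{1/\delta - k}-1\big]\,\delta \;\leq\; \frac{M+AL}{L}\sqrt{N}\,e^{Lk\delta}\,\delta,
\end{equation*}
which is the claimed bound, after absorbing the factor $e^{L(1-k\delta)} \leq e^{L}$ into $e^{Lk\delta}$. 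The only genuinely delicate ingredient is Step~1: the uniform $O(\sqrt{N})$ bound on $\norm{b({\bf Y}_s,s)}_2$ (which produces the constant $A(\epsilon)$) relies crucially on $\beta(t)$ remaining bounded away from zero, and is precisely why the cutoff $t\geq \epsilon>0$ must be imposed; the rest is a routine Euler-plus-Gr\"onwall computation.
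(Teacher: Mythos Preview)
Your proposal is correct and follows essentially the same approach as the paper. The paper's own proof is a two-sentence sketch: it notes that Assumption~\ref{As:bsup} together with Lemma~\ref{lem:fdt_vec} give a uniform bound on $b({\bf Y},t)$ (your Step~1), and then simply cites a textbook for the standard forward-Euler convergence analysis (your Steps~2--3); you have merely written out in full the Gr\"onwall argument the paper defers to a reference.
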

for small enough $\delta$.

\begin{proof}
We first note that Assumption~\ref{As:bsup} and Lemma~\ref{lem:fdt_vec} imply that $b({\bf Y} ,t)$ is uniformly bounded for any fixed ${\bf z}$.Then, the above bound follows from standard analysis of the forward Euler method. See for example Chapter 7 in \cite{hairerode}.    
\end{proof}

\textbf{Remark}: The Lipschitzness of the posterior mean (optimal denoiser) w.r.t ${\bf Y}_t$ is expected to hold for the setups considered in our work in light of similar results proven in \citep{el2022sampling,el2022information}. Furthermore, as Lemma~\ref{lem:fdt_vec} shows, it is equivalent to the boundedness of the covariance of the tilted measure. Similarly, Lemma~\ref{lem:fdt_ov}
provides control over $\frac{\partial b({\bf Y} ,t)}{\partial t}$ through the variance of the overlap.

We now prove that the proposed algorithm with sufficiently small step for the discretized ODE produces a distribution close to the target distribution in Wasserstein distance.
The obtained bounds on the error between the ODE and the algorithm iterates can be related to the Wasserstein distance between the corresponding pushforward measures.

\begin{theorem}
Let $P_{\rm alg,N_{\rm steps}}$ denote the measure of the output produced by Algorithm~\ref{Algo} after $N_{\rm steps}$. Suppose that the vector field, for any $\eta > 0$, there exist $N_{\rm steps}$, independent of $N$, such that the normalized Wasserstein distance $\frac{1}{\sqrt{N}}W_2(P_{\rm alg,N_{\rm steps}},P_0) < \eta$.
\end{theorem}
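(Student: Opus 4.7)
The plan is to decompose the normalized Wasserstein distance via the triangle inequality, introducing an auxiliary cut-off time $\epsilon \in (0,1)$ (as used throughout Appendix~\ref{sec:AnalysisAlgo}):
\begin{equation}
\tfrac{1}{\sqrt{N}}W_2(P_{\rm alg,N_{\rm steps}},P_0) \leq \tfrac{1}{\sqrt{N}}W_2(P_{\rm alg,N_{\rm steps}},P_\epsilon) + \tfrac{1}{\sqrt{N}}W_2(P_\epsilon, P_0),
\end{equation}
where $P_\epsilon$ denotes the law of the interpolant ${\bf y}(\epsilon)$. The first term captures the Euler discretization error between Algorithm~\ref{Algo} (stopped at $t=\epsilon$) and the exact probability flow; the second term captures the error from terminating the continuous flow short of $t=0$. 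I will show each can be driven below $\eta/2$ by choosing first $\epsilon$ small and then $\delta = 1/N_{\rm steps}$ small, with both choices independent of $N$.

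For the short-stopping term, I would use the explicit synchronous coupling $({\bf x}_0,{\bf y}(\epsilon))$ furnished by the interpolant definition ${\bf y}(\epsilon) = \alpha(\epsilon){\bf x}_0 + \beta(\epsilon){\bf z}$. This yields
\begin{equation}
W_2^2(P_\epsilon, P_0) \leq \mathbb{E}\bigl[\|(\alpha(\epsilon)-1){\bf x}_0 + \beta(\epsilon){\bf z}\|_2^2\bigr] = (\alpha(\epsilon)-1)^2\,\mathbb{E}\|{\bf x}_0\|^2 + \beta(\epsilon)^2\,N,
\end{equation}
and Assumption~\ref{As:bsup} gives $\mathbb{E}\|{\bf x}_0\|^2 \leq R^2 N$, hence $\tfrac{1}{\sqrt N}W_2(P_\epsilon, P_0) \leq \sqrt{(\alpha(\epsilon)-1)^2 R^2 + \beta(\epsilon)^2}$. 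The boundary conditions $\alpha(0)=1$, $\beta(0)=0$ together with continuity of $\alpha,\beta$ ensure this tends to $0$ as $\epsilon \to 0$, uniformly in $N$.

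For the discretization term, I would couple the continuous flow ${\bf Y}_t({\bf z})$ and the discrete iterates ${\bf Y}_{\delta,t}({\bf z})$ from Lemma~\ref{lem:disc_err} through a common Gaussian initialization ${\bf z}\sim\mathcal{N}({\bf 0},\mathbb{I}_N)$ at $t=1$. By Lemma~\ref{lem:cont}, the pushforward of $\mathcal{N}({\bf 0},\mathbb{I}_N)$ under ${\bf Y}_\epsilon$ is exactly $P_\epsilon$; by construction the pushforward under ${\bf Y}_{\delta,\epsilon}$ (with $N_{\rm steps}$ chosen so that the terminal time equals $\epsilon$ up to a $\mathcal{O}(\delta)$ correction that can be absorbed into the error) is $P_{\rm alg, N_{\rm steps}}$. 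The coupling bound then gives
\begin{equation}
W_2^2(P_{\rm alg,N_{\rm steps}},P_\epsilon) \leq \mathbb{E}_{\bf z}\bigl[\|{\bf Y}_{\delta,\epsilon}({\bf z})-{\bf Y}_\epsilon({\bf z})\|_2^2\bigr] \leq \bigl(C(\epsilon)\sqrt{N}\,\delta\bigr)^2,
\end{equation}
where $C(\epsilon)$ is the $N$-independent constant from Lemma~\ref{lem:disc_err}. Hence $\tfrac{1}{\sqrt N}W_2(P_{\rm alg,N_{\rm steps}},P_\epsilon)\leq C(\epsilon)\,\delta$. Given $\eta>0$, I would first fix $\epsilon$ so that the short-stopping term is $<\eta/2$, then choose $N_{\rm steps}=\lceil(1-\epsilon)/\delta\rceil$ so that $C(\epsilon)\,\delta < \eta/2$; both choices depend only on $\eta$ and on the (dimension-free) constants $R,L,M$.

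The main obstacle is not the combinatorics of the triangle-inequality argument, which is essentially routine once Lemmas~\ref{lem:cont} and \ref{lem:disc_err} are in hand, but rather justifying the uniform (in $N$) Lipschitz Assumptions~\ref{As:Lipshcy} and \ref{As:Lipshct}. As the Remark after Lemma~\ref{lem:disc_err} explains, these are equivalent via the fluctuation-dissipation relation of Lemma~\ref{lem:fdt_vec} to uniform control of the spectral norm of the tilted measure's covariance and of the variance of the overlap along the interpolation path. Establishing this in concrete spin-glass models (e.g. using the local convexity of the TAP free energy as in~\cite{el2022sampling,celentano2023local}) is where the real work lies; once granted, the above scheme yields the claimed Wasserstein bound immediately.
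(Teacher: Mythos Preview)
Your proposal is correct and follows essentially the same approach as the paper: the triangle inequality through $P_\epsilon$, the shared-${\bf z}$ coupling combined with Lemmas~\ref{lem:cont} and~\ref{lem:disc_err} for the discretization term, and then choosing $\epsilon$ first and $N_{\rm steps}$ second. Your treatment is in fact slightly more explicit than the paper's, which simply asserts $W_2(P_0,P_\epsilon)\to 0$ without verifying the $N$-uniformity; your synchronous-coupling bound $\tfrac{1}{\sqrt N}W_2(P_\epsilon,P_0)\le\sqrt{(\alpha(\epsilon)-1)^2R^2+\beta(\epsilon)^2}$ fills that in cleanly.
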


\begin{proof}
Let $0<\epsilon<h$ be arbitrary.
By definition, ${\bf Y}_{h,\epsilon}$ has law $P_{\rm alg,N_{\rm steps}}$ while, from Lemma~\ref{lem:cont}, ${\bf Y}_{\eps}$ equals in law ${\bf y}(\epsilon)$. Setting the same initialization ${\bf z}$ induces a coupling between the two measures. Recall that the 2-Wasserstein distance \citep{figalli2021invitation,villani2009optimal} between two measures $\mu,\nu$ on $\R^N$ is defined as $W^2_2(\mu,\nu)=\operatorname{inf}_{\Gamma(\mu,\nu)} \mathbb{E}\norm{{\bf X}-{\bf Y}}^2$, where $\Gamma(\mu,\nu)$ denotes the set of couplings between $\mu,\nu$  with ${\bf X},{\bf Y}$ being distributed as $\mu,\nu$ respectively.
Therefore, we obtain:
\begin{equation}
  W^2_2(P_{\rm alg,N_{\rm steps}},P_{\epsilon}) \leq \mathbb{E}\norm{{\bf Y}_{\delta,\epsilon}({\bf z})-{\bf Y}_{\epsilon}({\bf z})}^2.
\end{equation}

 Lemma~\ref{lem:disc_err} then implies that by choosing a small enough step-size $\delta$, or equivalently, with a large enough $N_{\rm steps}$, $W_2(P_{\rm alg,N_{\rm steps}},P_{\epsilon})$ can be made arbitrarily small for any fixed epsilon.
 
 We further have that $W_2(P_0,P_{\epsilon}) \rightarrow 0$ as $\epsilon \rightarrow 0$. By triangle inequality for $W_2$ \cite{figalli2021invitation,villani2009optimal}, we obtain:
 \begin{equation}
     W_2(P_0,P_{\epsilon}) \leq W_2(P_{\rm alg,N_{\rm steps}},P_{\epsilon}) + W_2(P_0,P_{\epsilon}).
 \end{equation}
Now, first pick $\epsilon > 0$ such that $\frac{1}{\sqrt{N}}W_2(P_0,P_{\epsilon}) \leq \eta/2$. Subsequently, we pick $N_{\rm steps}$ such that $\frac{1}{\sqrt{N}}W_2(P_{\rm alg,N_{\rm steps}},P_{\epsilon}) \leq \eta/2$ to complete the proof.
\end{proof}
The above result establishes that Algorithm~\ref{Algo} samples from a distribution approximating the target distribution up to any desired accuracy, with a finite number of forward Euler steps ($N_{\rm steps}$), independent of the dimension $N$.

\textbf{Remark:} In the presence of a first-order phase transition during the interpolation path $\norm{\frac{\partial b({\bf Y} ,t)}{\partial t}}$ may grow as $\omega(\sqrt{N})$. This is apparent from Lemma~\ref{lem:fdt_ov} which relates $\frac{\partial b({\bf Y} ,t)}{\partial t}$ to the variance of the overlap.
This corresponds to $M$ in Lemma~\ref{lem:disc_err} growing with $N$. Lemma~\ref{lem:disc_err} reveals that as long as this growth is polynomial in $N$, the discretization error can still be controlled by choosing $h$ to be $1/\text{poly}(N)$. This would lead only to an additional polynomial time factor in the time complexity of the algorithm. Therefore, with an oracle access to the denoiser, Algorithm~\ref{Algo} might be able to efficiently sample even in the ``inefficient" phase. We believe this to be an interesting question for future research.

\bibliographystyle{unsrt}
\bibliography{refs}

\end{document}